\newtheorem{theorem}{Theorem}
\newtheorem{proposition}[theorem]{Proposition} 
\newtheorem{observation}[theorem]{Observation} 
\theoremstyle{definition} 
\newtheorem{remark}[theorem]{Remark} 
\newtheorem{definition}[theorem]{Definition}
\newtheorem*{proposition*}{Proposition} 
\newtheorem*{theorem*}{Theorem} 
\newcommand{\A}{\mathbf{A}}
\newcommand{\X}{\mathbf{X}}
\newcommand{\NAE}{\mathbf{NAE}}
\newcommand{\B}{\mathbf{B}}
\newcommand{\C}{\mathbf{C}}
\newcommand{\Ss}{\mathbf{S}}
\newcommand{\T}{\mathbf{T}}
\newcommand{\AT}{\operatorname{AT}}
\newcommand{\XOR}{\operatorname{XOR}}
\newcommand{\AND}{\operatorname{AND}}
\newcommand{\OR}{\operatorname{OR}}
\newcommand{\MAJ}{\operatorname{MAJ}}
\newcommand{\inn}[2]{\mathbf{#1}\textbf{-in-}\mathbf{#2}}
\newcommand{\odd}[1]{\mathbf{odd}\textbf{-in-}\mathbf{#1}}
\newcommand{\THRo}[1]{\operatorname{THR}_{#1}}
\newcommand{\THR}[2]{\operatorname{THR}_{\frac{#1}{#2}}}
\newcommand{\Z}{\mathbb{Z}}
\DeclareMathOperator{\ar}{ar}
\DeclareMathOperator{\Pol}{Pol}
\DeclareMathOperator{\CSP}{CSP}
\DeclareMathOperator{\PCSP}{PCSP}
\DeclareMathOperator{\BLP}{BLP}
\DeclareMathOperator{\AIP}{AIP}
\renewcommand{\vec}[1]{\mathbf{#1}}
\newcommand{\bx}{\vec{x}}
\newcommand{\ba}{\vec{a}}
\newcommand{\by}{\vec{y}}
\newcommand{\red}{\leq_p}
\begin{document}

\author{Alex Brandts\\
University of Oxford\\
\texttt{alex.brandts@cs.ox.ac.uk}
\and
Stanislav \v{Z}ivn\'y\\
University of Oxford\\
\texttt{standa.zivny@cs.ox.ac.uk}
}

\title{Beyond $\PCSP(\inn{1}{3}, \NAE)$\thanks{An extended abstract of this work
appeared in the \emph{Proceedings of the 48th International Colloquium on
Automata, Languages, and Programming (ICALP'21)}~\cite{bz21:icalp}. 
Alex Brandts was supported by a Royal Society Enhancement Award and an NSERC PGS Doctoral Award. 
Stanislav \v{Z}ivn\'y was supported by a Royal Society University Research
Fellowship. This project has received funding from the European Research Council
(ERC) under the European Union's Horizon 2020 research and innovation programme
(grant agreement No 714532). The paper reflects only the authors' views and not
the views of the ERC or the European Commission. The European Union is not
liable for any use that may be made of the information contained therein.}}

\maketitle

\begin{abstract}

  The promise constraint satisfaction problem (PCSP) is a recently introduced
  vast generalisation of the constraint satisfaction problem (CSP) that captures
  approximability of satisfiable instances. A PCSP instance comes with two forms
  of each constraint: a strict one and a weak one. Given the promise that a
  solution exists using the strict constraints, the task is to find a solution
  using the weak constraints. While there are by now several dichotomy results
  for fragments of PCSPs, they all consider (in some way) symmetric PCSPs.

  1-in-3-SAT and Not-All-Equal-3-SAT are classic examples of Boolean symmetric
  (non-promise) CSPs. While both problems are NP-hard, Brakensiek and Guruswami
  showed~[SICOMP'21] that given a satisfiable instance of 1-in-3-SAT one can find
  a solution to the corresponding instance of (weaker) Not-All-Equal-3-SAT. In
  other words, the PCSP template $(\inn{1}{3},\NAE)$ is tractable. 

  We focus on \emph{non-symmetric} PCSPs. In particular, we study PCSP templates
  obtained from the Boolean template $(\inn{t}{k}, \NAE)$ by either adding
  tuples to $\inn{t}{k}$ or removing tuples from $\NAE$. For the former, we
  classify all templates as either tractable or not solvable by one of 
  the strongest known algorithm for PCSPs, the combined basic LP and affine IP
  relaxation of Brakensiek, Guruswami, Wrochna, and \v{Z}ivn\'y~[SICOMP'20]. For the latter, we classify
  all templates as either tractable or NP-hard.

\end{abstract}

\section{Introduction}
\label{sec:intro}

How hard is it to find a 6-colouring of a graph if it is promised to be
3-colourable? We do not know but believe it to be NP-hard. Despite sustained
effort, this so-called \emph{approximate graph colouring} problem has been
elusive since it was considered by Garey and Johnson almost 50 years
ago~\cite{GJ76}. The current state of the art is
NP-hardness of finding a 5-colouring of a 3-colourable graph~\cite{BBKO21}.
Approximate graph colouring is an example of the very general promise constraint
satisfaction problem, which is the focus of this paper. We start with
(non-promise) constraint satisfaction problems to set the stage.

\paragraph{Constraint satisfaction}
While deciding whether a graph is 2-colourable is solvable in polynomial
time, deciding 3-colourability is NP-complete~\cite{Karp72:reducibility}.
The \emph{constraint satisfaction problem} (CSP) is a general framework that
captures graph colourings and many other fundamental computational problems. Feder and Vardi
initiated a systematic study of so-called fixed-template decision
CSPs. Let $\A$ be a fixed finite relational structure, called the
\emph{template} or constraint language; i.e., $\A$ consists of a finite universe
$A$ and finitely many relations on $A$, each of possibly different arity. The
fixed-template CSP over $\A$, denoted by $\CSP(\A)$, is the class of CSPs in
which all constraint relations come from $\A$. In more detail, $\CSP(\A)$ denotes
the following computational problem: Given a structure $\X$ over the same
signature as $\A$, 
is there a homomorphism from $\X$ to $\A$, denoted
by $\X\to\A$?  
(Formal definitions can be found in Section~\ref{sec:prelims}.)
If $\A=K_3$ is a clique on 3 vertices then $\CSP(\A)$ is
precisely the standard graph 3-colouring problem.

A classic result of Schaefer shows that, for any $\A$ on a 2-element set,
$\CSP(\A)$ is either solvable in polynomial time or NP-complete. The non-trivial
tractable cases from Schaefer's classification are taught in undergraduate
algorithms courses: 2-SAT, (dual) Horn-SAT, and linear equations over $\{0,1\}$.
Two concrete CSPs that are NP-hard by Schaefer's result are the (positive)
1-in-3-SAT and (positive) Not-All-Equal-3-SAT. For both problems, the instance
is a list of triples of variables. In 1-in-3-SAT, the task is to find a mapping
from the variables to $\{0,1\}$ so that in each specified triple exactly one variable is
set to $1$. Formally, 1-in-3-SAT is $\CSP(\inn{1}{3})$, where
$\inn{1}{3}=(\{0,1\};\{(1,0,0),(0,1,0),(0,0,1)\})$. In Not-All-Equal-3-SAT, the task is
to find a mapping from the variables to $\{0,1\}$ so that in each triple not all
variables are assigned the same value. Formally, Not-All-Equal-3-SAT is
$\CSP(\NAE)$, where $\NAE=(\{0,1\};\{0,1\}^3\setminus\{(0,0,0),(1,1,1)\})$.

If $\A$ is a graph (i.e., a single symmetric binary relation) then, as shown by
Hell and Ne\v{s}et\v{r}il~\cite{Hell90:h-coloring}, $\CSP(\A)$ is either
solvable in polynomial time or NP-complete. 

Based on these two examples and a connection to logic, Feder and Vardi famously
conjectured~\cite{Feder98:monotone} that, for any finite $\A$, $\CSP(\A)$ is
either solvable in polynomial time or NP-complete.
Bulatov~\cite{Bulatov17:focs}, and independently Zhuk~\cite{Zhuk20}, proved the
conjecture in the affirmative, both relying on the algebraic approach to
CSPs~\cite{Jeavons97:jacm,Bulatov05:classifying,BKW17}. 

\paragraph{Promise constraint satisfaction} 
Austrin, Guruswami, and H{\aa}stad~\cite{AGH17} and Brakensiek and
Guruswami~\cite{BG21} initiated the investigation of the
\emph{promise constraint satisfaction problem} (PCSP), which is a vast
generalisation of the CSP. Let $\A$ and $\B$ be two relational structures such
that $\A\to\B$. The fixed-template PCSP over $\A$ and $\B$, denoted by
$\PCSP(\A,\B)$, is the following computational problem: Given $\X$ such that
$\X\to\A$, find a homomorphism from $\X$ to $\B$ (which exists by the
composition of the promised homomorphism from $\X$ to $\A$ and the homomorphism
from $\A$ to $\B$). If we take $\A=K_3$ to be a clique on 3
vertices and $\B=K_6$ to be a clique on 6 vertices, then $\PCSP(\A,\B)$ is
an instance of the approximate graph colouring problem mentioned at
the beginning of this article. 

Actually, what we described is the \emph{search} version of the PCSP. The
\emph{decision} version is as follows: Given $\X$, return \textsc{Yes} if
$\X\to\A$ and return \textsc{No} if $\X\not\to\B$. (The promise in the decision
version is that it does not happen that $\X\not\to\A$ but $\X\to\B$.) It is well
known that the decision version reduces to the search version but it is not
known whether there is a reduction the other way~\cite{BBKO21}. In most results
(including ours), hardness is established for the decision version and
tractability for the search version.

If $\A=\B$ then $\PCSP(\A,\B)$ is the same as $\CSP(\A)$ and
thus PCSPs indeed generalise CSPs. For CSPs, the decision and search versions are
known to be equivalent~\cite{Bulatov05:classifying}.

Building on the result of Barto, Opr\v{s}al, and Pinsker~\cite{BOP18} that the
complexity of $\CSP(\A)$ is captured by certain types of identities of
higher-order symmetries (called polymorphisms) of $\A$, Barto, Bul\'in, Krokhin,
and Opr\v{s}al showed that the basics of the algebraic approach developed for
CSPs~\cite{BOP18} can be generalised to PCSPs~\cite{BBKO21}, thus
introducing a general methodology for investigating the computational complexity
of PCSPs. In particular, among other things, they showed that finding a
5-colouring of a 3-colourable graph is NP-hard.

\paragraph{Related work}
Motivated by the goal to understand the computational complexity of all
fixed-template PCSPs, a recent line of research has focused on restricted
classes of templates, with the main directions being Boolean templates (i.e.,
templates on a two-element set) and symmetric templates (i.e., all relations in
the template satisfy that if a tuple belongs to a relation then so do all
its permutations).

Austrin, Guruswami, and H{\aa}stad~\cite{AGH17} considered the $(1,g,k)$-SAT
problem: Given an instance of $k$-SAT with the promise that there is an
assignment satisfying at least $g$ literals in each clause, find an assignment
that satisfies at least one literal in each clause. They showed that this
problem is NP-hard if $\frac{g}{k}<\frac{1}{2}$, and polynomial-time solvable
otherwise. $(1,g,k)$-SAT is a Boolean PCSP with a (symmetric) template that
includes the binary disequality relation and a relation containing all tuples of particular Hamming weights. The NP-hardness in~\cite{AGH17} was proved via
reduction from the label cover problem using the idea of polymorphisms lifted
from CSPs to PCSPs. Building on the algebraic theory from~\cite{BBKO21},
Brandts, Wrochna, and \v{Z}ivn\'y~\cite{BWZ21} extended the classification of
$(1,g,k)$-SAT to arbitrary finite domains.

Brakensiek and Guruswami~\cite{BG21} managed to classify all PCSPs over
symmetric Boolean templates with the disequality relation as NP-hard or
solvable in polynomial time. Ficak, Kozik, Ol\v{s}\'ak, and
Stankiewicz~\cite{fkos19} extended this result to all symmetric Boolean
templates.

In very recent work, Barto, Battistelli, and Berg~\cite{Barto21:stacs}
explored symmetric PCSPs on three- and four-element domains.

While the approximate graph colouring problem remains open, hardness was proved
under stronger assumptions (namely Khot's 2-to-1 Conjecture~\cite{Khot02stoc}
for $k$-colourings with $k\geq 4$ and its non-standard variant for 3-colourings)
by Dinur, Mossel, and Regev~\cite{DMR09}. Guruswami and
Sandeep~\cite{GS20:icalp} recently established this result under a weaker
assumption, the so-called $d$-to-1 conjecture for any fixed $d\geq 2$. For
approximate \emph{hypergraph} colouring, another important PCSP, NP-hardness
was established by Dinur, Regev, and Smyth~\cite{DRS05}. There has been some
recent progress on approximate graph colourings~\cite{WZ20} and related PCSPs,
e.g. approximate graph homomorphism problems~\cite{KO19,WZ20}, and rainbow vs.
normal hypergraph colourings~\cite{ABP20}. 

\paragraph{Contributions}
Unlike most previous works, which focused on symmetric PCSPs, we investigate
\emph{non-symmetric} PCSPs. Our first motivation is that a classification of more
concrete PCSP templates is needed to improve and extend the general
algebraic theory from~\cite{BBKO21},
for example by identifying new hardness and tractability criteria. 
At the moment, even an analogue of Schaefer's result, i.e., classifying
all Boolean PCSPs, seems out of reach. Our second motivation is the pure beauty
of the template $(\inn{1}{3}, \NAE)$. While $\PCSP(\inn{1}{3}, \NAE)$ admits a
polynomial-time algorithm~\cite{BG21,BG19}, tractability cannot be
obtained via a ``gadget reduction'' to tractable finite-domain
CSPs~\cite{BBKO21} or via a ``local consistency checking''~\cite{Atserias22:soda}.

Let $\inn{t}{k}$ denote the
Boolean structure with a single relation of arity $k$ that contains tuples with
exactly $t$ 1's and let $\NAE$ denote the Boolean structure with a single relation
of arity $k$, which is always clear from the context, that contains all tuples
except for the two all-equal tuples. (Previously in this section, we used $\NAE$
only with $k=3$.)
Consider the Boolean $\PCSP(\inn{t}{k}, \NAE)$, which is a natural
generalisation of $\PCSP(\inn{1}{3}, \NAE)$. Similarly to
$\PCSP(\inn{1}{3},\NAE)$, we have that $\PCSP(\inn{t}{k},\NAE)$ is a symmetric
tractable PCSP. 

We study the following two questions: Firstly, when can we add
tuples to $\inn{t}{k}$ (i.e., how can we weaken the promise) to keep the PCSP
tractable? Secondly, when can we remove tuples from $\NAE$ (i.e., how can we
strengthen the relation $\NAE)$ to keep the PCSP tractable? Note that both of
these changes generally do not result in symmetric templates.

For the second question, we give a complete answer in Theorem~\ref{thm:rem}: If
$t$ is odd, $k$ is even, and tuples of only even Hamming weight are removed from
$\NAE$, the resulting PCSP is solvable in polynomial time. In all other cases,
the resulting PCSP is NP-hard. Put differently, $\PCSP(\inn{t}{k},\T)$ is
tractable if only if $\T=\NAE$ or $\CSP(\T)$ is tractable (assuming P$\neq$NP).

For the first question, we give a second-best possible answer in
Theorem~\ref{thm:add}: If $t$ is odd, $k$ is even, and tuples of only
odd Hamming weight are added to $\inn{t}{k}$, the resulting PCSP is tractable. In
all other cases, the resulting PCSP is not solved by the combined basic LP and
affine IP relaxation ($\BLP+\AIP$) of Brakensiek, Guruswami, Wrochna, and
\v{Z}ivn\'y~\cite{BGWZ20}, one of the currently
strongest known algorithm for PCSPs. The power of this relaxation, both in terms
of minions and polymorphism identities, is
known~\cite{BGWZ20}. It is consistent with the current knowledge that
$\BLP+\AIP$
could solve all tractable Boolean PCSPs.
The only stronger algorithm than $\BLP+\AIP$ studied in
the context of PCSPs is
CLAP~\cite{Ciardo22:soda}, but its power is currently only known via a
minion-theoretic characterisation (and not via a
polymorphism characterisation).
Similarly to $\PCSP(\inn{1}{3},\NAE)$, the PCSPs that we prove to be $\BLP+\AIP$-hard are not
solvable by ``local consistency checking'' and via a ``gadget reduction'' to
tractable finite-domain CSPs (cf. Remark~\ref{rm:intract}).

\medskip
One take-away message from our results is that the tractability of
$\PCSP(\inn{t}{k},\NAE)$ is very fragile, which gives more support for its
importance. Another message is that the PCSP templates obtained from
the template $(\inn{t}{k},\NAE)$ by adding a single tuple are good candidates for testing and/or
improving NP-hardness criteria for PCSPs. Finally,
Proposition~\ref{prop:firstsym}, while with a very simple proof, shows that the
classification of Boolean symmetric PCSP templates $(\A,\B)$ from~\cite{fkos19} holds more
generally and requires that only $\A$ should be symmetric.

\section{Preliminaries}
\label{sec:prelims}

We denote by $[n]$ the set $\{1,2,\ldots,n\}$. For a $k$-tuple $\bx$, 
we write $\bx=(x_1,\ldots,x_k)$. We denote by $\red$ a polynomial-time many-one
reduction and by $\equiv_p$ a polynomial-time many-one equivalence.

A \emph{relational structure} is a tuple $\A=(A;R_1,\ldots,R_p)$, where $A$ is a
finite set called the \emph{domain} of $\A$, and each $R_i$ is a relation of
arity $\ar(R_i)\geq 1$, that is, $R_i$ is a non-empty subset of $A^{\ar(R_i)}$.
A relational structure is \emph{symmetric} if each relation in it is invariant
under any permutation of coordinates. Two relational structures
$\A=(A;R_1,\ldots,R_p)$ and $\B=(B;S_1,\ldots,S_q)$ have the same
\emph{signature} if $p=q$ and $\ar(R_i)=\ar(S_i)$ for every $i\in[p]$. In this
case, a mapping $\phi:A\to B$ is called a \emph{homomorphism} from $\A$ to $\B$,
denoted by $\phi:\A\to\B$, if $\phi$ preserves all relations; that is, for every
$i\in[p]$ and every tuple $\bx\in R_i$, we have $\phi(\bx)\in S_i$, where $\phi$ is
applied component-wise. The existence of a homomorphism from $\A$ to $\B$ is
denoted by $\A\to\B$. A PCSP \emph{template} is a pair $(\A,\B)$ of relational
structures over the same signature such that $\A\to\B$.

\begin{definition}
  Let $(\A,\B)$ be a PCSP template. 
  The \emph{decision version} of $\PCSP(\A,\B)$ is the following problem: Given
  as input a relational structure $\X$ over the same signature as $\A$ and $\B$,
  output \textsc{Yes} if $\X\to\A$ and $\textsc{No}$ if $\X\not\to\B$.
  The \emph{search version} of $\PCSP(\A,\B)$ is the following problem: Given as
  input a relational structure $\X$ over the same signature as $\A$ and $\B$ such
  that $\X\to\A$, find a homomorphism from $\X$ to $\B$.
\end{definition}

We call $\PCSP(\A,\B)$ \emph{tractable} if any instance of $\PCSP(\A,\B)$ can be
solved in polynomial time in the size of the input structure $\X$.
It is easy to show that the decision version reduces to the search
version~\cite{BBKO21}. Our hardness results will be for the decision version and
our tractability results for the search version. For a relational
structure $\A$, the constraint satisfaction problem with the template $\A$,
denoted by $\CSP(\A)$, is $\PCSP(\A,\A)$.

The following notion of polymorphisms is at the heart of the algebraic approach
to (P)CSPs.

\begin{definition}
  Let $(\A,\B)$ be a PCSP template. A function $f:A^m\to B$ is a
  \emph{polymorphism} of arity $m$ of $(\A,\B)$ if for each pair of
  corresponding relations $R_i$ and $S_i$ from $\A$ and $\B$, respectively, the
  following holds: For any $(\ar(R_i)\times m)$ matrix $M$ whose columns
  are tuples in $R_i$, the application of $f$ to rows of $M$ gives a tuple in
  $S_i$. In other words, an arity $m$ polymorphism is a homomorphism from the $m$-th Cartesian power of $\A$ to $\B$. We denote by $\Pol(\A,\B)$ the set of all polymorphisms of $(\A,\B)$.
\end{definition}

In a $\PCSP$ template $(\A,\B)$ we view tuples from $\A$ and $\B$ as columns.
When writing tuples in text we may write them as rows to simplify notation but
they should still be understood as columns. For a $k$-ary relation $R$ on the set
$A$, we denote by $R^c=A^k \setminus R$ the complement of $R$. For a
relational structure $\A$, we denote by $\A^c$ the structure with relations
$R^c$ for each relation $R$ in $\A$. Most of our relational structures will be
on the Boolean domain $\{0,1\}$ and contain a single relation of arity $k$. The
(Hamming) weight of a tuple $\bx \in \{0,1\}^k$, denoted throughout by $d$, is
the number of 1's in $\bx$. For $1 \leq t < k$, the Boolean relational structure
$\inn{t}{k}$ consists (of one relation consisting) of all $k$-tuples with weight
$t$. The Boolean relational structure $\NAE$ contains all $k$-tuples except
$0^k$ and $1^k$. 

We need a definition and some notation to state existing results on Boolean (P)CSPs.

\begin{definition}\label{polysdef}
  A function $f:\{0,1\}^m \to \{0,1\}$ is 
  \begin{itemize}
  \item an $\OR_{m}$ ($\AND_{m}$) if it returns the logical OR (respectively logical AND) of its arguments;
  \item an alternating threshold $\AT_{m}$ if $m$ is odd and 
  \[f(x_1,\ldots,x_m)=1 \text{ if and only if } x_1-x_2+x_3-\cdots+x_{m}>0;\]
  \item a parity function $\XOR_{m}$ if $f(x_1,\ldots,x_m)=x_1+\cdots+x_m \mod 2$;
  \item a $q$-threshold $\THRo{q,m}$ (for $q$ a rational between 0 and 1 and $mq$
    not an integer) if $f(x_1,\ldots, x_m)=0$ if $\sum_{i=1}^m x_i < mq$ and $1$ otherwise;
  \item a majority $\MAJ_m$ if $f$ is a $\frac{1}{2}$-threshold and $m$ is odd.
  \end{itemize}

  We denote by $\OR$ and $\AND$ the set of all $\OR_m$ and $\AND_m$ functions, respectively, for all $m\geq 2$. We denote by $\AT$ and $\XOR$ the set of all $\AT_m$
  and $\XOR_m$ functions, respectively, for odd $m\geq 1$. Finally, $\THRo{q}$
  denotes the set of all $\THRo{q,m}$ functions for $qm \not\in \mathbb{Z}$.

  Define $\overline{f}$, the \emph{negation of $f$}, as the function $x \mapsto
  1-f(x)$, and for a family of functions $F$, define the \emph{negation of $F$} by $\overline{F}=\{\overline{f} | f \in F\}$.

\end{definition}

Schaefer's dichotomy theorem~\cite{s78} classified all Boolean CSP templates and can be stated in various forms (see e.g.~\cite{BKW17} for further discussion). Here we give a modern formulation in terms of polymorphisms.

\begin{theorem}
  \label{thm:schaefer}
  Let $\B$ be a Boolean CSP template. If $\Pol(\B)$ contains a constant,
  $\AND_2$, $\OR_2$, $\MAJ_3$, or $\XOR_3$, then $\CSP(\B)$ is tractable.
  Otherwise, $\CSP(\B)$ is NP-hard.
\end{theorem}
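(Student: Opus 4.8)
The plan is to prove the two directions separately, in both cases exploiting the fact that the complexity of $\CSP(\B)$ depends only on the clone $\Pol(\B)$: by the Galois correspondence between relations and polymorphisms, a relation $R$ is primitive positive (pp-) definable from $\B$ precisely when $R$ is invariant under all operations in $\Pol(\B)$, and a pp-definition of the relations of $\B'$ from $\B$ yields a polynomial-time reduction from $\CSP(\B')$ to $\CSP(\B)$. Hence it suffices to analyse the five hypothesised polymorphisms and the complementary case.

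For tractability I would handle the five hypotheses in turn. If $\Pol(\B)$ contains a constant operation of value $c\in\{0,1\}$, then $(c,\dots,c)$ belongs to every relation of $\B$, so the constant map is a homomorphism to $\B$ from any input and $\CSP(\B)$ is trivial. If $\AND_2$ is a polymorphism, a standard argument shows that each relation $R$ of $\B$ is exactly the solution set of the conjunction of all Horn clauses implied by $R$; hence $\CSP(\B)$ is a form of Horn-SAT and is solved by unit propagation, and the $\OR_2$ case is dual (dual-Horn-SAT). If $\MAJ_3$ is a polymorphism, the analogous argument shows each relation of $\B$ is a conjunction of clauses of width at most two, so $\CSP(\B)$ reduces to 2-SAT. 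If $\XOR_3$ is a polymorphism, each relation of $\B$ is closed under $(x,y,z)\mapsto x+y+z \bmod 2$, hence is a coset of an $\mathbb{F}_2$-subspace of $\{0,1\}^k$, i.e. the solution set of a system of linear equations over $\mathbb{F}_2$, so $\CSP(\B)$ is solved by Gaussian elimination. All of these algorithmic facts are routine.

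For hardness, assume $\Pol(\B)$ contains none of the five operations. The key structural step is to invoke the relevant portion of Post's classification of Boolean clones: a Boolean clone containing no constant, no $\AND_2$, no $\OR_2$, no $\MAJ_3$ and no $\XOR_3$ consists only of projections and negated projections, i.e. is contained in the clone $\mathrm{N}_2$ generated by negation. Granting this, every relation invariant under negation is pp-definable from $\B$; in particular the arity-$3$ relation $\NAE$ is, and since $\CSP(\NAE)$ is NP-hard by a classical reduction (e.g. from graph $3$-colouring), so is $\CSP(\B)$.

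The main obstacle is the structural claim in the hardness direction, which is precisely the combinatorial heart of Post's theorem. The natural route is a case analysis on a polymorphism $f$ that depends on at least two coordinates: if $f$ is affine, then identifying and composing copies of $f$ produces a constant (when the number of relevant variables is even) or $\XOR_3$ (when it is odd); if $f$ is not affine, one splits further according to the values $f(0,\dots,0)$ and $f(1,\dots,1)$ — two of the four possibilities immediately give a constant unary minor, a third gives negation and so reduces to the idempotent case — and then, in the idempotent case, one distinguishes whether $f$ is monotone (yielding $\AND_2$, $\OR_2$, or $\MAJ_3$) or not (yielding $\AND_2$ or $\OR_2$ after a further, somewhat delicate, argument). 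In each case one of the five operations lies in the clone generated by $f$, contradicting the hypothesis, so every polymorphism of $\B$ has the form $x_i$ or $\neg x_i$. The remaining ingredients — the Galois correspondence and the four polynomial-time algorithms — are standard.
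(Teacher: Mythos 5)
The paper offers no proof of Theorem~\ref{thm:schaefer} at all: it is Schaefer's classical dichotomy, restated in the modern polymorphism form and cited to Schaefer's paper (with the survey of Barto, Krokhin and Opr\v{s}al for the reformulation), so there is no in-paper argument to compare yours against. Your outline is the standard algebraic proof from that literature: the Galois correspondence between $\Pol(\B)$ and pp-definability, the four textbook algorithms for the tractable cases (trivial assignment, Horn and dual-Horn propagation, 2-SAT, Gaussian elimination over $\mathbb{F}_2$), and, for hardness, the Post-lattice fact that a clone avoiding constants, $\AND_2$, $\OR_2$, $\MAJ_3$ and $\XOR_3$ consists only of projections and negated projections, whence the negation-invariant relation $\NAE$ is pp-definable from $\B$ and $\CSP(\NAE)\red\CSP(\B)$ gives NP-hardness. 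All of that is correct in outline and is exactly how the polymorphism formulation is justified in the sources the paper points to; citing Post's classification as a black box would be no less legitimate than the paper's own citation of Schaefer.

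The one genuine problem is in the part you yourself flag as the heart, namely your sketched case analysis establishing that a clone avoiding the five operations is essentially unary: the branch outcomes are misstated. In the idempotent, non-affine, non-monotone branch you claim one always extracts $\AND_2$ or $\OR_2$, but the operation $f(x,y,z)=\MAJ_3(x,y,\neg z)$ is idempotent, non-affine and non-monotone, while every operation in the clone it generates is idempotent and self-dual; hence that clone contains no constant, no $\AND_2$ and no $\OR_2$ (neither conjunction nor disjunction is self-dual), and what it does contain is $\MAJ_3$ (and indeed $\XOR_3$), contradicting your stated branch. So the case split monotone/non-monotone after peeling off the affine case does not carve the problem correctly as written; standard treatments instead argue via essential arity and self-duality (or simply quote Post's lattice wholesale). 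This does not affect the truth of the structural claim, and the rest of your argument is sound, but the self-contained sketch of that lemma would not go through in its present form.
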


Ficak et al. classified all symmetric Boolean PCSP templates~\cite{fkos19}. 

\begin{theorem}[\cite{fkos19}] \label{thm:symmtrac}
  Let $(\A,\B)$ be a symmetric Boolean PCSP template. If $\Pol(\A,\B)$ contains
  a constant or at least one of $\OR$, $\AND$, $\XOR$, $\AT$, $\THRo{q}$ (for
  some $q$) or their negations, then $\PCSP(\A,\B)$ is tractable. Otherwise,
  $\PCSP(\A,\B)$ is NP-hard. 
\end{theorem}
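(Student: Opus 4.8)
The plan is to prove the two directions separately: tractability by exhibiting, for each listed polymorphism family, an algorithm that solves the search problem, and hardness by a reduction from a known NP-hard problem once all those families have been ruled out. For tractability, a constant polymorphism $f\equiv b$ is immediate, as the map sending every element of $\X$ to $b$ is then a homomorphism $\X\to\B$. For the families $\OR$, $\AND$ and $\THRo{q}$ (and their negations), I would argue that the basic linear programming relaxation $\BLP$ is correct: each of these contains, for arbitrarily large $m$, a symmetric polymorphism of arity $m$ of exactly the shape by which the power of $\BLP$ is characterised, so a feasible LP solution on an instance $\X$ with $\X\to\A$ can be rounded to a homomorphism $\X\to\B$. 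For $\XOR$ and $\AT$ (and their negations) I would instead use the affine integer programming relaxation $\AIP$ --- solving a linear system over $\Z$ by Gaussian elimination --- whose correctness is guaranteed by the parity, respectively alternating-threshold, polymorphisms of all odd arities. The Boolean symmetric setting is what makes this work, since it forces the relevant large polymorphisms into one of these ``relaxation-friendly'' shapes.

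For hardness I would first put the template in normal form: since the general symmetric Boolean case reduces to a single-relation one, write the relations of $\A$ and $\B$ as the sets of $k$-tuples whose Hamming weight lies in $W_\A\subseteq W_\B\subseteq\{0,1,\ldots,k\}$, and note that if $0\in W_\B$ or $k\in W_\B$ a constant polymorphism appears, so assume otherwise. The combinatorial core is a complete description, for each pair $(W_\A,W_\B)$, of which of the five families meets $\Pol(\A,\B)$: for the symmetric families $\OR$, $\AND$, $\XOR$, $\THRo{q}$ this amounts to computing which output weights arise when a symmetric operation is applied to columns of prescribed weight, while the alternating-threshold family $\AT$ needs a separate, order-sensitive analysis. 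One then shows that the simultaneous failure of all five families forces $\Pol(\A,\B)$ to be structurally trivial --- in particular to contain no polymorphism of large essential arity --- which by the algebraic NP-hardness criterion of Barto, Bul\'in, Krokhin, and Opr\v{s}al (a reduction from Gap Label Cover) gives NP-hardness of the decision version of $\PCSP(\A,\B)$.

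The step I expect to be the main obstacle is precisely this combinatorial core: pinning down, for all weight-set pairs $W_\A\subseteq W_\B$, exactly when each of the five polymorphism types is present, and proving that their simultaneous absence is equivalent to the essential-arity bound feeding the hardness criterion. The arithmetic of which output weights are reachable by applying a symmetric operation to columns of given weights is delicate, and the casework on the structure of $W_\A$ and $W_\B$ --- gaps, parities, and the extreme values $\min W_\A$ and $\max W_\A$ --- together with the asymmetric handling of $\AT$, is where the real work lies.
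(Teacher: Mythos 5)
First, note that the paper does not prove this statement at all: Theorem~\ref{thm:symmtrac} is imported wholesale from~\cite{fkos19}, so what you are attempting is a reproof of that entire dichotomy, and your sketch has to be measured against the known proof of that result. At the level of headlines your plan matches it (rounding-based algorithms built on the listed families for tractability; the label-cover-based algebraic hardness machinery of~\cite{AGH17,BBKO21} otherwise), but the part you yourself flag as ``the main obstacle'' is exactly the theorem's content, and your specification of it would not work as stated. Two concrete problems. (a) Your ``normal form'' step, reducing the general symmetric Boolean template to a single-relation one, is unjustified in the hardness direction: $\Pol(\A,\B)$ is the intersection of the polymorphism sets of the individual relation pairs, and it can avoid all the listed families even though every single-relation sub-template admits some family (different relations killing different families); dropping relations only makes the PCSP easier, so hardness of the full template cannot be delegated to one relation. (b) The structural target you propose --- that simultaneous failure of the five families forces $\Pol(\A,\B)$ to ``contain no polymorphism of large essential arity'' --- is not the hypothesis of any known NP-hardness criterion for PCSPs and is not what the absence of these families yields. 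What the actual argument of~\cite{fkos19} (following~\cite{AGH17} and the framework of~\cite{BBKO21}) establishes is much weaker: for every polymorphism one can select a set of boundedly many ``fixing''/decisive coordinates, and this selection is compatible with taking minors, which is what feeds the gap label cover reduction; polymorphisms with unboundedly many essential coordinates are entirely compatible with NP-hardness (e.g.\ functions with a single fixing coordinate but arbitrarily many essential ones). Proving this selection property from the weight-set case analysis is the bulk of the original paper and is absent from your proposal.

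A secondary technical wrinkle on the tractability side: for $\THRo{q}$ the family only supplies symmetric polymorphisms of arities $m$ with $qm\notin\mathbb{Z}$, whereas the characterisation of the power of $\BLP$ requires symmetric polymorphisms of \emph{all} arities, so you cannot simply invoke that characterisation; the known arguments (Brakensiek--Guruswami) round the LP directly with a threshold at $q$, taking care of integrality/ties. The constant, $\OR$, $\AND$ cases and the $\AIP$-based cases ($\XOR$, $\AT$, whose members are alternating functions of all odd arities, and whose negations remain alternating) are fine as you describe. Overall, the proposal is a plausible roadmap but the decisive combinatorial core of the hardness direction is missing and mis-specified, so it does not constitute a proof.
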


The only possibly unresolved promise templates are those with NP-hard CSP templates.

\begin{proposition}\label{prop:cspeasy}
Let $(\A,\B)$ be a promise template such that at least one of $\CSP(\A)$, $\CSP(\B)$ is tractable. Then $\PCSP(\A,\B)$ is tractable. 
\end{proposition}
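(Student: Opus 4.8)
The plan is to reduce the promise problem $\PCSP(\A,\B)$ to whichever of the two CSPs is tractable, using the homomorphism $\A\to\B$ that is part of the template data. First suppose $\CSP(\A)$ is tractable. Given an instance $\X$ of $\PCSP(\A,\B)$ (search version), we are promised $\X\to\A$, so we run the algorithm for $\CSP(\A)$ to actually \emph{find} a homomorphism $h\colon\X\to\A$; note that tractability of the decision version of $\CSP(\A)$ yields tractability of the search version by the standard self-reducibility argument for CSPs (and this is in any case part of what ``$\CSP(\A)$ is tractable'' means here). Let $g\colon\A\to\B$ be the template homomorphism. Then $g\circ h\colon\X\to\B$ is a homomorphism, computable in polynomial time, which solves the search version of $\PCSP(\A,\B)$. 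For the decision version it suffices to observe that $\X\to\A$ already certifies the \textsc{Yes} answer, and when the promise fails we may answer arbitrarily.

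Now suppose instead that $\CSP(\B)$ is tractable. The point is that for \emph{any} structure $\X$ we have $\X\to\A\implies\X\to\B$ (compose with $g$), so on instances satisfying the promise, ``$\X\to\A$'' and ``$\X\to\B$'' have the same truth value; more precisely, the promise of the decision version guarantees we never see an $\X$ with $\X\not\to\A$ but $\X\to\B$. Hence running the decision algorithm for $\CSP(\B)$ on $\X$ and reporting its answer correctly solves the decision version of $\PCSP(\A,\B)$: if $\X\to\A$ it outputs \textsc{Yes} (since then $\X\to\B$), and if $\X\not\to\B$ it outputs \textsc{No}. For the search version, given $\X$ with $\X\to\A$ we have $\X\to\B$, so we directly invoke the (polynomial-time, by self-reducibility) search algorithm for $\CSP(\B)$ to produce a homomorphism $\X\to\B$.

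I do not anticipate a genuine obstacle here: both directions are immediate once one invokes the composition $\A\to\B$ and the equivalence of decision and search for CSPs~\cite{Bulatov05:classifying}. The only point requiring a word of care is the $\CSP(\A)$-tractable case, where one really does need the \emph{search} version of $\CSP(\A)$ (not merely decision) in order to output an explicit homomorphism to $\B$; this is why the standard CSP search-to-decision reduction is cited. Everything else is a one-line composition argument.
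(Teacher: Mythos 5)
Your proof is correct and takes essentially the same route as the paper: the paper simply packages your argument by noting that $(\A,\B)$ is a homomorphic relaxation of both $(\A,\A)$ and $(\B,\B)$ (using $\A\to\B$) and invoking the trivial-reduction lemma, which is exactly the identity-reduction-plus-composition-with-$\A\to\B$ argument you spell out. Your explicit handling of the search version via CSP search--decision equivalence is just an unpacking of what that relaxation lemma already provides.
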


Proposition~\ref{prop:cspeasy} is a direct consequence of the important concept
of homomorphic relaxation, which we now define.
Let $(\A,\B)$ and $(\A',\B')$ be two PCSP templates over the same signature. We
call $(\A',\B')$ a \emph{homomorphic relaxation} of $(\A,\B)$ if $\A'\to\A$ and
$\B\to\B'$. It is easy to show~\cite{BBKO21} that
$\PCSP(\A',\B')\red\PCSP(\A,\B)$.\footnote{In fact, more is known: The trivial
(identity) reduction from $\PCSP(\A',\B')$ to $\PCSP(\A,\B)$ is correct if and
only if $(\A',\B')$ is a homomorphic relaxation of $(\A,\B)$.}

\begin{proof}[Proof of Proposition~\ref{prop:cspeasy}]
  We have $\PCSP(\A,\B)\red\PCSP(\A,\A)=\CSP(\A)$, since $(\A,\B)$ is a
  homomorphic relaxation of $(\A,\A)$ as $\A\to\B$ by assumption. Similarly,
  $\PCSP(\A,\B)\red\PCSP(\B,\B)=\CSP(\B)$, since $(\A,\B)$ is a homomorphic
  relaxation of $(\B,\B)$ as $\A\to\B$ by assumption.
\end{proof}

Theorem~\ref{thm:schaefer} established NP-hardness of two natural CSPs:
$\CSP(\inn{1}{3})$ and $\CSP(\NAE)$. Interestingly,
$\PCSP(\inn{1}{3},\NAE)$ is solvable in polynomial-time, as first shown by
Brakensiek and Guruswami~\cite{BG21}. 
(This shows that the converse of Proposition~\ref{prop:cspeasy} is false.)
A natural generalisation of $\inn{1}{3}$ is $\inn{t}{k}$.
Theorem~\ref{thm:schaefer} implies that $\CSP(\inn{t}{k})$ is NP-hard, which also follows from  Proposition~\ref{prop:tink}.
Theorem~\ref{thm:symmtrac} implies that the tractability of $\PCSP(\inn{1}{3},\NAE)$ also holds for 
$\PCSP(\inn{t}{k},\NAE)$. 

\begin{proposition}\label{prop:tinkhard}
  For $k \geq 3$ and $1 \leq t < k, \CSP(\inn{t}{k})$ is NP-hard.
\end{proposition}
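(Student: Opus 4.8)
The plan is to derive this straight from Schaefer's dichotomy (Theorem~\ref{thm:schaefer}). Since $\inn{t}{k}$ is a CSP template over the two‑element domain, $\CSP(\inn{t}{k})$ is NP‑hard as soon as $\Pol(\inn{t}{k})$ contains none of a constant operation, $\AND_2$, $\OR_2$, $\MAJ_3$, or $\XOR_3$. So the whole proof reduces to five non‑membership checks, and for each I would just exhibit a small matrix whose columns are weight‑$t$ tuples but whose row‑wise image under the operation has weight $\neq t$, hence lies outside $\inn{t}{k}$.

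Three of the five checks are immediate. There is no constant polymorphism, because the constant tuples $0^k$ and $1^k$ have weights $0$ and $k$ while $0<t<k$. For $\AND_2$ and $\OR_2$ I would fix any two distinct weight‑$t$ tuples $\vec{x}\neq\vec{y}$ (these exist since $1\le t<k$): the intersection of their supports has size strictly less than $t$ and the union strictly more than $t$, so $\vec{x}\wedge\vec{y}$ and $\vec{x}\vee\vec{y}$ both escape $\inn{t}{k}$.

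For the last two, $\MAJ_3$ and $\XOR_3$, I would use a ``coordinate‑swap'' gadget when $t\ge 2$ and a separate ``spread‑out'' gadget when $t=1$. For $t\ge 2$: take $\vec{x}$ with $1$s exactly in positions $1,\dots,t$ and let $\vec{y}$ be $\vec{x}$ with the $1$ in position $t$ moved to position $t+1$ (legal since $t+1\le k$), so $\vec{x}\oplus\vec{y}$ is supported on $\{t,t+1\}$; choosing a weight‑$t$ tuple $\vec{z}$ that contains both positions $t,t+1$ makes $\vec{x}\oplus\vec{y}\oplus\vec{z}$ have weight $t-2$, and a mild variant of $\vec{z}$ makes the coordinate‑wise majority of $\vec{x},\vec{y},\vec{z}$ have weight $t+1$ — using here that the majority of three bits equals $\lfloor(\text{their sum})/2\rfloor$, so weight is preserved only if exactly $t$ coordinates have an odd column‑sum. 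For $t=1$ (hence $k\ge 3$): three weight‑$1$ tuples supported on three distinct coordinates have $\XOR$ of weight $3$ and coordinate‑wise majority of weight $0$, neither equal to $1$.

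The main obstacle is not conceptual but bookkeeping: verifying that every witness written down genuinely has weight $t$ and uses only coordinates in $[k]$ throughout the whole range $1\le t<k$, $k\ge 3$, which is what forces the harmless split into $t=1$ and $t\ge 2$. One can trim the remaining casework by invoking the negation isomorphism $\inn{t}{k}\cong\inn{k-t}{k}$ together with the self‑duality of $\MAJ_3$ and $\XOR_3$ (and the fact that negation swaps $\AND_2\leftrightarrow\OR_2$ and the two constants), reducing to $2t\le k$. A Schaefer‑free alternative I would keep in reserve is a direct polynomial‑time reduction from $\CSP(\inn{1}{3})$: introduce a global ``true'' variable $\tau$ and ``false'' variable $\phi$, pin them with one constraint $\inn{t}{k}(\tau,\dots,\tau,\phi,\dots,\phi)$ ($t$ copies of $\tau$, $k-t$ of $\phi$), which forces $\tau=1$, $\phi=0$ whenever $k\neq 2t$, and replace each clause $\inn{1}{3}(x,y,z)$ by $\inn{t}{k}(x,y,z,\tau,\dots,\tau,\phi,\dots,\phi)$ with $t-1$ copies of $\tau$ and $k-t-2$ copies of $\phi$; this is transparent when $k\ge t+2$ and $k\neq 2t$, but the corner cases $k=t+1$ and $k=2t$ need an extra reduction step, so routing everything through Theorem~\ref{thm:schaefer} is the more economical write‑up.
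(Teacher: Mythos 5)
Your proof is correct, and it differs from the paper's in an instructive way. The paper does not argue this proposition directly: it obtains it as an immediate special case of Proposition~\ref{prop:tink}, which classifies $\CSP(\T)$ for \emph{every} $\T$ with $\inn{t}{k}\to\T$. That more general statement also goes through Schaefer's theorem and rules out the same five operations, but because $\T$ may properly contain $\inn{t}{k}$ (so a single application of the operation may land back inside $\T$), the paper must iterate: repeated application of the candidate polymorphism eventually produces $0^k$, $1^k$, or all odd-weight tuples, contradicting the hardness hypotheses. Your argument exploits the fact that here $\T=\inn{t}{k}$ is a single Hamming-weight slice, so one application with an explicit witness matrix suffices: constants fail since $0<t<k$; $\AND_2$/$\OR_2$ fail on any two distinct weight-$t$ tuples; $\XOR_3$ and $\MAJ_3$ fail on your $t\ge 2$ gadget (output weights $t-2$ and $t+1$) and on three disjointly supported weight-$1$ tuples when $t=1$ (output weights $3$ and $0$), all checks valid for $k\ge 3$, $1\le t<k$. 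This buys a shorter, self-contained proof of the proposition; the paper's detour buys the full classification of Proposition~\ref{prop:tink}, which it needs anyway for Theorems~\ref{thm:t-in-k} and~\ref{thm:rem}. Two cosmetic points: the same $\bz$ (a weight-$t$ tuple with $1$s at positions $t$ and $t+1$) already gives the $\MAJ_3$ violation, so no ``mild variant'' is needed; and your parenthetical criterion ``weight is preserved only if exactly $t$ coordinates have an odd column-sum'' is the $\XOR$ condition, not the $\MAJ$ one --- the direct computation of the majority's weight is what carries that step, and it does.
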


\begin{proposition}\label{prop:naeeasy}
  For $k \geq 2$ and $1\leq t < k, \PCSP(\inn{t}{k},\NAE)$ is tractable.
\end{proposition}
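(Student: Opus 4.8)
The plan is to derive this from the classification of symmetric Boolean PCSP templates, Theorem~\ref{thm:symmtrac}. First I would check that $(\inn{t}{k},\NAE)$ really is a (symmetric) Boolean PCSP template: both $\inn{t}{k}$ and $\NAE$ consist of a single relation invariant under all coordinate permutations, and since $1\le t<k$ every $k$-tuple of weight $t$ is distinct from $0^k$ and from $1^k$, so the identity map is a homomorphism $\inn{t}{k}\to\NAE$. Hence Theorem~\ref{thm:symmtrac} applies, and it suffices to produce one of the polymorphism families listed there.

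My candidate is the alternating threshold family: I claim $\AT\subseteq\Pol(\inn{t}{k},\NAE)$, i.e.\ that $\AT_m$ is a polymorphism for every odd $m$. To verify this I would take an arbitrary $k\times m$ matrix $M$ all of whose columns lie in $\inn{t}{k}$---so each column has exactly $t$ ones---and, for each row $i$, write $S_i=M_{i,1}-M_{i,2}+M_{i,3}-\cdots+M_{i,m}$ for the alternating row sum, so that $\AT_m$ outputs $1$ on row $i$ exactly when $S_i>0$, i.e.\ (as $S_i\in\Z$) when $S_i\ge 1$. Summing over rows and exchanging the order of summation gives
\[
  \sum_{i=1}^{k} S_i \;=\; \sum_{j=1}^{m} (-1)^{j+1}\left(\sum_{i=1}^{k} M_{i,j}\right) \;=\; \sum_{j=1}^{m} (-1)^{j+1}\,t \;=\; t,
\]
using that $m$ is odd. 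Since $t\ge 1$, the total is positive, so some $S_i>0$ and the output tuple is not $0^k$; since $t<k$, we cannot have $S_i\ge 1$ for all $k$ rows, so the output tuple is not $1^k$ either. Thus $\AT_m$ maps $M$ into $\NAE$, proving the claim.

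Finally I would invoke Theorem~\ref{thm:symmtrac} with the $\AT$ alternative to conclude that $\PCSP(\inn{t}{k},\NAE)$ is tractable. I do not expect a genuine obstacle: the whole argument rests on the one-line identity $\sum_i S_i=t$ above, and the only thing one has to notice is that this PCSP is symmetric and hence covered by the earlier classification. (Alternatively, one could give a self-contained algorithm generalising the one of Brakensiek and Guruswami~\cite{BG21} for $\PCSP(\inn{1}{3},\NAE)$, but that is unnecessary here.)
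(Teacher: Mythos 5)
Your proposal is correct and takes essentially the paper's approach: the key fact in both is that the $\AT$ family lies in $\Pol(\inn{t}{k},\NAE)$, which the paper imports from~\cite[Claim~4.6]{BG21} while you verify it directly via the alternating row-sum identity $\sum_i S_i = t$ together with $1\le t<k$. The only difference is the final black-box step---you conclude via the symmetric Boolean classification (Theorem~\ref{thm:symmtrac}), whereas the paper invokes the $\AIP$ characterisation (Theorem~\ref{thm:aip}, noting $\AT$ functions are alternating)---and since both are quoted results this is immaterial.
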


\subsection{Algorithms}

We now present three relaxations for PCSPs: $\BLP$, $\AIP$, and $\BLP+\AIP$.
The first one, $\BLP$, is needed for the description of the third one.
The second one, $\AIP$, solves all tractable cases in our
classification results. Finally, the third one, $\BLP+\AIP$, is one of the
strongest known algorithms for PCSPs and the strongest one with a characterisation of
its power in terms of polymorphism identities. Our ``algorithmic dichotomy''
result (Theorem~\ref{thm:add}) shows $\AIP$ solvability vs. $\BLP+\AIP$-hardness.

In the rest of this section, let $(\A,\B)$ be a PCSP template, where $\A=(A;R_1,\ldots,R_p)$ and
$\B=(B;S_1,\ldots,S_p)$. Let $\X=(X;T_1,\ldots,T_p)$ be an instance of $\PCSP(\A,\B)$.
We assume without loss of generality that all three structures contain a
unary relation equal to $X$ in $\X$, equal to $A$ in $\A$, and equal to $B$ in
$\B$; the relation is called $R_u$ in $\A$. If this is not
the case, the template and the instance can be extended without changing the set
of solutions.

The \emph{basic linear programming relaxation} ($\BLP$) of $\X$, denoted by $\BLP(\X,\A)$,
is defined as follows. The variables are
$\lambda_{\bx,i}(\ba)$ for every $i\in [p]$, $\bx\in T_i$,
and $\ba\in R_i$, and the constraints are given in Figure~\ref{fig:blp}. (Note
that $\BLP(\X,\A)$ does not depend on $\B$.)
\begin{figure}[hbt]
\begin{align}
  0\ \leq\ \lambda_{\bx,i}(\ba)\ &\leq\ 1 & \forall i\in [p], \forall \bx\in T_i, \forall \ba\in R_i\\
  \sum_{\ba\in R_i} \lambda_{\bx,i}(\ba)\ &=\ 1 & \hspace*{1.5cm} \forall i\in [p], \forall \bx\in T_i\\
  \sum_{\ba\in R_i,a_j=a} \lambda_{\bx,i}(\ba)\ &=\ \lambda_{x_j,R_u}(a) & \hspace*{1.5cm} \forall i\in [p], \forall \bx\in T_i, \forall a\in A,\forall j\in [\ar(R_i)]
\end{align}
\caption{Definition of $\BLP(\A,\X)$.}
\label{fig:blp}
\end{figure}

The \emph{basic affine integer programming relaxation} ($\AIP$) of $\X$, denoted by
$\AIP(\X,\A)$, is defined as follows. The variables are
$\tau_{\bx,i}(\ba)$ for every
$i\in [p]$, $\bx\in T_i$, and $\ba\in R_i$, and the constraints are given in
Figure~\ref{fig:aip}.
\begin{figure}[hbt]
\begin{align}
  \tau_{\bx,i}(\ba)\ &\in\ \Z & \forall i\in [p], \forall \bx\in T_i, \forall \ba\in R_i\\
  \sum_{\ba\in R_i} \tau_{\bx,i}(\ba)\ &=\ 1 & \hspace*{1.5cm} \forall i\in [p], \forall \bx\in T_i\\
  \sum_{\ba\in R_i,a_j=a} \tau_{\bx,i}(\ba)\ &=\ \tau_{x_j,R_u}(a) & \hspace*{1.5cm} \forall i\in [p], \forall \bx\in T_i, \forall a\in A, \forall j\in [\ar(R_i)]
\end{align}
\caption{Definition of $\AIP(\A,\X)$.}
\label{fig:aip}
\end{figure}

\noindent We say that $\AIP(\X,\A)$ accepts if the affine program in
Figure~\ref{fig:aip} is feasible, and rejects otherwise. By construction, if
$\X\to\A$ then $\AIP(\X,\A)$ accepts. We say that $\AIP$ \emph{solves}
$\PCSP(\A,\B)$ if for every instance $\X$ accepted by $\AIP(\X,\A)$ we have
$\X\to\B$. 

A $(2m+1)$-ary function $f:A^{2m+1}\to B$ is called \emph{alternating} if
$f(a_1,\ldots,a_{2m+1})=f(a_{\pi(1)},\ldots,\allowbreak a_{\pi(2m+1)})$ for every
$a_1,\ldots,a_{2m+1}\in A$ and every permutation $\pi:[2m+1]\to[2m+1]$ that preserves
parity, and $f(a_1,\ldots,a_{2m-1},a,a)=f(a_1,\ldots,a_{2m-1},a',a')$ for every
$a_1,\ldots,a_{2m-1},a,a'\in A$. Intuitively, an alternating function is
invariant under permutations of its odd and even coordinates and has the
property that adjacent coordinates cancel each other out.
The power of $\AIP$ for PCSPs is characterised by the following
result.\footnote{We note that~\cite{BBKO21} proves several other equivalent
statements in Theorem~\ref{thm:aip}.}

\begin{theorem}[\cite{BBKO21}]\label{thm:aip}
  Let $(\A,\B)$ be a PCSP template. Then (the decision version of)
  $\PCSP(\A,\B)$ is tractable via $\AIP$ if and only if $\Pol(\A,\B)$ contains
  alternating functions of all odd arities.
\end{theorem}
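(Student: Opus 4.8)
The plan is to route the statement through the minion-theoretic picture of relaxations. Recall that $\AIP$ is the relaxation attached to the \emph{affine minion} $\mathcal{Z}$, whose arity-$n$ part is $\{(c_1,\dots,c_n)\in\Z^n:\sum_i c_i=1\}$ and whose minor under a map $\pi\colon[n]\to[\ell]$ sums coefficients along each fibre; an accepting solution of $\AIP(\X,\A)$ is exactly a ``$\mathcal{Z}$-homomorphism'' $\X\to\A$. By the general theory of such relaxations, the $\mathcal{Z}$-relaxation is a correct algorithm for $\PCSP(\A,\B)$ precisely when there is a minion homomorphism $\mathcal{Z}\to\Pol(\A,\B)$: one direction is immediate, since one composes a $\mathcal{Z}$-homomorphism $\X\to\A$ with $\mathcal{Z}\to\Pol(\A,\B)$ to land in $\B$; for the other, one applies correctness to the free structure of $\mathcal{Z}$ generated by $\A$, which carries a canonical $\mathcal{Z}$-homomorphism to $\A$, hence maps to $\B$, and this map unwinds into a minion homomorphism $\mathcal{Z}\to\Pol(\A,\B)$. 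So it suffices to show: $\mathcal{Z}\to\Pol(\A,\B)$ if and only if $\Pol(\A,\B)$ contains alternating functions of all odd arities.

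For the forward direction, note that the alternating sums $s_{2m+1}=(1,-1,1,\dots,1)\in\mathcal{Z}$ are themselves alternating elements: precomposing $s_{2m+1}$ with a parity-preserving permutation leaves its coefficient vector unchanged, and collapsing an adjacent $+1,-1$ pair of coordinates turns $s_{2m+1}$ into a minor of $s_{2m-1}$, which is precisely the cancellation identity phrased with minors. Hence, given a minion homomorphism $\xi\colon\mathcal{Z}\to\Pol(\A,\B)$, the images $\xi(s_{2m+1})$ are alternating polymorphisms of every odd arity, since $\xi$ preserves minors. This direction is routine once the first paragraph is in place.

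For the converse I would build $\xi$ by a compactness argument, to avoid having to pre-select a coherent chain of alternating polymorphisms. First, the $s_{2m+1}$ \emph{generate} $\mathcal{Z}$: for $(c_1,\dots,c_n)\in\mathcal{Z}$ one has $\sum_i|c_i|\equiv\sum_i c_i=1\pmod 2$, so $\sum_i|c_i|$ is odd, and the vector is obtained from $s_{\sum_i|c_i|}$ by identifying variables according to the signs of the $c_i$. Consequently every finite set of elements of $\mathcal{Z}$ lies among the minors of a single $s_{2m+1}$; fixing an alternating polymorphism $f_{2m+1}\in\Pol(\A,\B)$ of that arity and sending each such minor $s_{2m+1}^{\pi}$ to $f_{2m+1}^{\pi}$ yields a partial, minor-respecting, polymorphism-valued assignment on that finite set. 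Since the set $B^{A^n}$ of candidate arity-$n$ functions is finite and ``being a polymorphism'' is a closed condition, a standard compactness argument glues these partial assignments into a global minion homomorphism $\mathcal{Z}\to\Pol(\A,\B)$.

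The crux, and the step I expect to be the main obstacle, is the well-definedness of these partial assignments: the lemma that $s_{2m+1}^{\alpha}=s_{2m+1}^{\beta}$ in $\mathcal{Z}$ implies $f^{\alpha}=f^{\beta}$ for every alternating $f$ of arity $2m+1$ --- equivalently, that every coincidence among minors of $s_{2m+1}$ is already forced by the two alternating identities (invariance under parity-preserving permutations, and cancellation of adjacent opposite-sign coordinates). I would prove it by normalising the identification maps $\alpha,\beta$: use parity-preserving permutations to group the coordinates of each sign, then use the cancellation identity to standardise which value occupies each cancelling pair, and check that two maps inducing the same signed partition of the coordinates reach a common normal form. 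As an alternative to the whole converse that sidesteps the free-structure and compactness machinery, one can argue directly: given $\X\to\A$ and an accepting integer solution $\tau$ of $\AIP(\X,\A)$, read $\tau$ at each variable as a signed multiset of domain elements of total weight $1$, use the cancellation identity to delete the negative multiplicities while preserving per-constraint consistency, and feed the resulting ordinary multiset of tuples --- of the appropriate odd size for each constraint --- into an alternating polymorphism of that arity; permutation invariance makes the output independent of the orderings chosen per variable, so this defines a homomorphism $\X\to\B$, and alternation ensures every constraint of $\B$ is satisfied.
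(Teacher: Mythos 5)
This theorem is not proved in the paper at all: it is quoted from \cite{BBKO21}, so the only meaningful comparison is with that cited source, and your sketch follows essentially the same minion-theoretic route used there --- $\AIP$ is the relaxation attached to the affine minion $\mathcal{Z}$, the test solves $\PCSP(\A,\B)$ iff there is a minion homomorphism $\mathcal{Z}\to\Pol(\A,\B)$, and such homomorphisms exist iff $\Pol(\A,\B)$ contains alternating functions of all odd arities. The outline is sound: the forward direction via the images of the alternating tuples $s_{2m+1}$ is correct, and the crux lemma you isolate is true and provable exactly as you suggest (two minor maps inducing the same integer vector differ only in how the ``cancelling pairs'' are distributed among values, and 2-block-symmetry plus the cancellation identity let you transfer one such pair at a time, so the normal-form argument goes through); your use of a single $f_{2m+1}$ per finite subset correctly sidesteps the genuine obstacle that alternating functions of different arities need not be coherent, and compactness then applies because each arity has only finitely many candidate functions. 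Two caveats: the free structure of $\mathcal{Z}$ generated by $\A$ is infinite, so ``apply correctness of the algorithm to it'' needs the same finite-substructure/compactness care (the algorithm only accepts finite instances); and in your alternative direct-rounding argument the hypothesis $\X\to\A$ is irrelevant (only an accepting integer solution is used), while the phrase ``delete the negative multiplicities while preserving per-constraint consistency'' conceals the actual work --- padding every variable to one common odd arity and exhibiting, for each constraint, an input matrix whose columns lie in the constraint relation and whose rows realise the variables' signed multisets --- which is where the alternating identities must really be deployed.
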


The \emph{combined basic LP and affine IP algorithm} ($\BLP+\AIP$)~\cite{BGWZ20}
is presented in Algorithm~\ref{alg:blp-aip}.
\begin{algorithm}[tbh] 
	\SetAlgoLined
  \KwIn{\quad\ \ \ an instance $\X$ of $\PCSP(\A,\B)$}
  \KwOut{\quad \textsc{yes} if $\X\to\A$ and \textsc{no} if $\X\not\to\B$}
  \medskip
  find a relative interior point $(\lambda_{\bx,i}(\ba))_{i\in [p], \bx\in T_i,\ba\in R_i}$ of $\BLP(\X,\A)$\;
  \If{no relative interior point exists}{return \textsc{no}\;}
  refine $\AIP(\X,\A)$ by setting $\tau_{\bx,i}(\ba)=0$ if $\lambda_{\bx,i}(\ba)=0$\;
  \If{the refined $\AIP(\X,\A)$ accepts}{return \textsc{yes}\;}{return \textsc{no}\;}
\caption{The $\BLP+\AIP$ algorithm} \label{alg:blp-aip}
\end{algorithm}
If $\X\to\A$ then $\BLP+\AIP$ accepts $\X$~\cite{BGWZ20}. We say that $\BLP+\AIP$
\emph{solves} $\PCSP(\A,\B)$ if for every instance $\X$ accepted by $\BLP+\AIP$
we have $\X\to\B$. 

A $(2m+1)$-ary function $f:A^{2m+1}\to B$ is called \emph{$2$-block-symmetric}
if $f(a_1,\ldots,a_{2m+1})=f(a_{\pi(1)},\ldots,\allowbreak a_{\pi(2m+1)})$ for
every $a_1,\ldots,a_{2m+1}\in A$ and every permutation $\pi:[2m+1]\to[2m+1]$
that preserves parity. In other words, $f$ is $2$-block-symmetric if its $2m+1$
coordinates can be partitioned into two blocks of size $m+1$ and $m$ in such a
way that the value of $f$ is invariant under any permutation of coordinates
within each block. Without loss of generality, we will assume that the two
blocks are the odd and even coordinates of $f$.

The power of $\BLP+\AIP$ for PCSPs is characterised by the following
result.\footnote{We note that~\cite{BGWZ20} proves several other equivalent
statements in Theorem~\ref{thm:blp-aip}.}

\begin{theorem}[\cite{BGWZ20}]\label{thm:blp-aip}
  Let $(\A,\B)$ be a PCSP template. Then (the decision version of)
  $\PCSP(\A,\B)$ is tractable via $\BLP+\AIP$ if and only if $\Pol(\A,\B)$ contains
  2-block-symmetric functions of all odd arities.
\end{theorem}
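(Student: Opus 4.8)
We sketch a proof along the lines of Theorem~\ref{thm:aip}; there are two directions.

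For the ``if'' direction, suppose $\Pol(\A,\B)$ contains a $2$-block-symmetric function of every odd arity, and let $\X$ be accepted by $\BLP+\AIP$. By Algorithm~\ref{alg:blp-aip} this yields a relative interior point $\lambda=(\lambda_{\bx,i}(\ba))$ of $\BLP(\X,\A)$ and an integral feasible point $\tau=(\tau_{\bx,i}(\ba))$ of the refined $\AIP(\X,\A)$, so that $\tau_{\bx,i}(\ba)=0$ whenever $\lambda_{\bx,i}(\ba)=0$, in particular for the unary relation $R_u$. Fix $Q\in\Z_{>0}$ with all $Q\lambda_{\bx,i}(\ba)\in\Z$, and then $c\in\Z_{>0}$ large enough that $cQ\lambda_{\bx,i}(\ba)>|\tau_{\bx,i}(\ba)|$ whenever $\lambda_{\bx,i}(\ba)>0$. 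Set $m=cQ$ and $n=2m+1$, and let $f\in\Pol(\A,\B)$ be $2$-block-symmetric of arity $n$ with blocks its $m+1$ odd and its $m$ even coordinates. For each variable $x$, each value $cQ\lambda_{x,R_u}(v)+\tau_{x,R_u}(v)$ and $cQ\lambda_{x,R_u}(v)$ is a nonnegative integer, and summed over $v\in A$ these equal $cQ+1$ and $cQ$ respectively; define $h(x)\in B$ to be the value of $f$ when its odd coordinates carry the multiset assigning each $v\in A$ multiplicity $cQ\lambda_{x,R_u}(v)+\tau_{x,R_u}(v)$ and its even coordinates carry the multiset of multiplicities $cQ\lambda_{x,R_u}(v)$; this is well defined by $2$-block-symmetry. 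Given a constraint $\bx\in T_i$, both $cQ\lambda_{\bx,i}+\tau_{\bx,i}$ and $cQ\lambda_{\bx,i}$ are nonnegative integer vectors supported inside $R_i$ (this is where the refinement condition is used), of total weight $cQ+1$ and $cQ$; placing the tuples of the first multiset into the $m+1$ odd rows and those of the second into the $m$ even rows of an $n\times\ar(R_i)$ matrix $M$, every row of $M$ lies in $R_i$, so $f$ applied to the rows of $M$ lies in $S_i$. By the marginal equations in Figures~\ref{fig:blp} and~\ref{fig:aip}, the multiset of entries of column $j$ of $M$ among the odd (even) rows is exactly the odd (even) multiset defining $h(x_j)$, so $f$ applied to the rows of $M$ equals $(h(x_1),\dots,h(x_{\ar(R_i)}))$; hence $h:\X\to\B$.

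For the ``only if'' direction, we show that for every odd $n$ there is an instance $\X_n$ accepted by $\BLP+\AIP$, whence $\X_n\to\B$ (since $\BLP+\AIP$ solves $\PCSP(\A,\B)$, and this holds for all accepted instances), and that this yields a $2$-block-symmetric polymorphism of arity $n$. Take $\X_1=\A$. For odd $n=2m+1\geq 3$, let $\X_n$ be the quotient of the $n$-th power $\A^n$ by the diagonal action of the parity-preserving permutations of $[n]$: the domain is $A^n$ modulo this action, the quotient map $q:\A^n\to\X_n$ is a homomorphism, and each relation of $\X_n$ is the coordinatewise $q$-image of the corresponding relation of $\A^n$, so that a map $\X_n\to\B$ is a homomorphism if and only if its composition with $q$ is a polymorphism of $(\A,\B)$ (automatically $2$-block-symmetric, as $q$ identifies tuples differing by a parity-preserving permutation). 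To see that $\BLP+\AIP$ accepts $\X_n$, exhibit a feasible $\lambda$ for $\BLP(\X_n,\A)$ and a feasible integral $\tau$ for $\AIP(\X_n,\A)$ with $\operatorname{supp}(\tau)\subseteq\operatorname{supp}(\lambda)$: for a constraint represented (up to the action) by an $n\times\ar(R_i)$ matrix $M$ with rows in $R_i$, let $\lambda_{[M],i}(\ba)$ be $\tfrac{1}{2(m+1)}$ times the number of odd rows of $M$ equal to $\ba$ plus $\tfrac{1}{2m}$ times the number of even rows of $M$ equal to $\ba$, and let $\tau_{[M],i}(\ba)$ be the number of odd rows equal to $\ba$ minus the number of even rows equal to $\ba$ (with the analogous definitions for $R_u$); a routine check confirms all equalities of Figures~\ref{fig:blp} and~\ref{fig:aip}, while $\tau_{[M],i}(\ba)\neq 0$ forces $\ba$ to occur as a row of $M$ and hence $\lambda_{[M],i}(\ba)>0$. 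Since $\BLP(\X_n,\A)$ is nonempty and bounded it has a relative interior point $\lambda^\ast$, and $\operatorname{supp}(\tau)\subseteq\operatorname{supp}(\lambda)\subseteq\operatorname{supp}(\lambda^\ast)$, so $\tau$ remains feasible after refinement and the algorithm accepts; composing a homomorphism $\X_n\to\B$ with $q$ finishes this direction.

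The substance of the ``if'' direction is spotting the combination $cQ\lambda+\tau$ together with the observation that the refinement step ($\lambda_{\bx,i}(\ba)=0\Rightarrow\tau_{\bx,i}(\ba)=0$) is exactly what keeps both blocks supported inside $R_i$; the substance of the ``only if'' direction is designing $\X_n$ so that it is accepted by the algorithm (the integral $\AIP$ witness must have support inside that of the $\BLP$ relative interior point) while its homomorphisms to $\B$ encode precisely $2$-block-symmetric polymorphisms. The routine parts are the marginal bookkeeping and the verification that the proposed $\lambda$ and $\tau$ are feasible.
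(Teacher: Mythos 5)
The paper does not prove this statement: Theorem~\ref{thm:blp-aip} is imported from~\cite{BGWZ20}, so there is no internal proof to compare against. Your reconstruction is correct and follows essentially the same route as the cited source: for sufficiency, feeding the two blocks of an odd-arity $2$-block-symmetric polymorphism with the multiplicities $cQ\lambda+\tau$ and $cQ\lambda$ (with the refinement condition $\lambda_{\bx,i}(\ba)=0\Rightarrow\tau_{\bx,i}(\ba)=0$ ensuring nonnegativity and support inside $R_i$) is exactly the rounding used in~\cite{BGWZ20}; for necessity, your instance $\X_n$, the $n$-th power of $\A$ modulo parity-preserving coordinate permutations, is the relevant free structure in concrete form, and your explicit $\lambda,\tau$ do witness acceptance since the support of a relative interior point of the (bounded) polytope $\BLP(\X_n,\A)$ contains the support of every feasible point. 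Two small points to fix in the write-up: in the sufficiency step you speak of applying $f$ to the \emph{rows} of an $n\times\ar(R_i)$ matrix whose rows lie in $R_i$ --- you mean the columns (or transpose to match the paper's $\ar(R_i)\times m$ convention); and you implicitly assume the relative interior point is rational so that the common denominator $Q$ exists, which should be stated (such a point can be computed in polynomial time). Finally, ``tractable via $\BLP+\AIP$'' also needs completeness, i.e.\ acceptance of every $\X$ with $\X\to\A$; your argument supplies only the soundness half, which is indeed the nontrivial part, with completeness being the fact already recorded after Algorithm~\ref{alg:blp-aip}.
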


\section{Results}
\label{sec:results}

Our results are concerned with templates that arise from $(\inn{t}{k},\NAE)$ either by adding tuples to
$\inn{t}{k}$ or removing tuples from $\NAE$. For a set of tuples
$S\subseteq\{0,1\}^k$, we write $\inn{t}{k}\cup\Ss$ for the relational structure
whose (only) relation contains all $k$-tuples of weight $t$ and the tuples
from $S$, and similarly for $\NAE\setminus\Ss$.

Our first result is an algorithmic dichotomy for templates constructed by adding tuples to
$\inn{t}{k}$.

\begin{theorem}[\textbf{Main \#1}]\label{thm:add}
  Let $k \geq 3$ and $\emptyset\neq S \subseteq (\inn{t}{k})^c \cap \NAE$.
  If $t$ is odd, $k$ is even, and $S$ contains tuples of only odd 
  weight, then $\PCSP(\inn{t}{k}\cup\Ss,\NAE)$ is tractable via $\AIP$. Otherwise,
  $\PCSP(\inn{t}{k}\cup\Ss,\NAE)$ is not solved by $\BLP+\AIP$.
\end{theorem}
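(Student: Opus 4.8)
The plan is to split the statement into its tractability half and its hardness half, and in each case reduce to a polymorphism criterion from the preliminaries. For tractability, suppose $t$ is odd, $k$ is even, and every tuple of $S$ has odd weight. By Theorem~\ref{thm:aip} it suffices to exhibit alternating polymorphisms of all odd arities for the template $(\inn{t}{k}\cup\Ss,\NAE)$. The natural candidate is the family $\AT_{2m+1}$ of alternating thresholds: $\AT_{2m+1}(x_1,\ldots,x_{2m+1})=1$ iff $x_1-x_2+x_3-\cdots+x_{2m+1}>0$. These are alternating functions in the sense defined before Theorem~\ref{thm:aip}. The work is to check that $\AT_{2m+1}$ maps $(\inn{t}{k}\cup\Ss)$-columns to $\NAE$. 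First I would check this for the $\inn{t}{k}$ part: take a $k\times(2m+1)$ matrix whose columns all have weight $t$; applying $\AT$ to each row and computing the alternating sum over the $k$ rows, the total telescopes to the alternating sum of the column weights, which is $t - t + \cdots + t = t$ (an odd number, since there are $2m+1$ terms and $t$ is odd), hence nonzero, so the output tuple is not all-$0$; by a symmetric argument (or by noting $k-t$ is odd) it is not all-$1$ either, so the output lies in $\NAE$. Then I would extend the computation to columns taken from $S$: since every such column has odd weight, the alternating sum of column weights is a signed sum of odd numbers over an odd number of terms, hence odd, hence nonzero — so the same conclusion holds. This establishes that $\AT\subseteq\Pol(\inn{t}{k}\cup\Ss,\NAE)$ and Theorem~\ref{thm:aip} gives $\AIP$-tractability.

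For the hardness half, suppose we are not in the lucky case: either $t$ is even, or $k$ is odd, or $S$ contains a tuple of even weight. By Theorem~\ref{thm:blp-aip}, it suffices to show that $\Pol(\inn{t}{k}\cup\Ss,\NAE)$ contains no $2$-block-symmetric function of some odd arity — in fact I would aim to rule out $2$-block-symmetric polymorphisms of all sufficiently large odd arities, or even of every odd arity $\geq 3$. The standard route is: a $2$-block-symmetric function of arity $2m+1$ is determined by the pair $(i,j)$ counting how many $1$'s appear in its odd block (of size $m+1$) and its even block (of size $m$); write $f$ as a function $g\colon\{0,\ldots,m+1\}\times\{0,\ldots,m\}\to\{0,1\}$. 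I would then plug in carefully chosen matrices whose columns lie in $\inn{t}{k}\cup\Ss$ and whose rows realize prescribed block-counts $(i,j)$, read off the constraints these impose on $g$ via the requirement that the output row-tuple avoid $0^k$ and $1^k$, and derive a contradiction. The cleanest approach is probably to first handle the pure symmetric template $(\inn{t}{k},\NAE)$: one knows from Theorem~\ref{thm:symmtrac} (or can re-derive) exactly which of $\OR,\AND,\XOR,\AT,\THRo{q}$ and negations are polymorphisms here, and $2$-block-symmetric functions of all odd arities would force something incompatible with the parity obstruction when $t$ is even or $k$ is odd (the telescoping argument above fails precisely then, and one turns the failure into an explicit pair of bad matrices). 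Adding a single even-weight tuple $\ba\in S$ then kills the remaining candidate (the alternating-threshold family) by an analogous plug-in: put several copies of $\ba$ together with weight-$t$ columns so that the alternating/block count of each row can be made either to sum to $0$ in all rows or to the full count in all rows, contradicting $2$-block-symmetry combined with the $\NAE$ constraint.

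The main obstacle I expect is the hardness direction, specifically the bookkeeping needed to show that for every odd arity there is a violating matrix — one has to choose the multiplicities of the various weight-$t$ tuples and of the tuples in $S$ so that the induced row profile hits exactly the block-count pairs that force $g$ to take inconsistent values, and this must be done uniformly in $m$. A convenient tactic is to reduce the arity: since $2$-block-symmetric polymorphisms are closed under identifying coordinates within a block, the existence of such a polymorphism of high arity implies one of every smaller odd arity (by collapsing a block down), so it is enough to refute arity $3$ — i.e. to show there is no $2$-block-symmetric ternary polymorphism in the bad cases — and for arity $3$ the case analysis over the finitely many possible $g\colon\{0,1,2\}\times\{0,1\}\to\{0,1\}$ is short. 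I would also separately note, as the theorem statement implicitly requires, that $S\subseteq(\inn{t}{k})^c\cap\NAE$ guarantees $\inn{t}{k}\cup\Ss\to\NAE$ so that $(\inn{t}{k}\cup\Ss,\NAE)$ is a legitimate template, and that $\NAE$ here has arity $k$ throughout.
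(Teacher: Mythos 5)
Both halves of your proposal contain genuine gaps.

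For the tractability half, the family $\AT$ is \emph{not} contained in $\Pol(\inn{t}{k}\cup\Ss,\NAE)$ in the tractable case, so Theorem~\ref{thm:aip} cannot be invoked through it. Concretely, take $k=4$, $t=1$ and the odd-weight tuple $\bx=(1,1,1,0)\in S$; applying $\AT_3$ to the matrix with columns $(1,0,0,0)$, $(1,1,1,0)$, $(0,1,0,0)$ gives row-wise alternating sums $1-1+0=0$, $0-1+1=0$, $0-1+0=-1$, $0-0+0=0$, i.e.\ the output $0^4\notin\NAE$. Your telescoping argument only shows that the alternating sum of the column weights is odd, hence nonzero; but to exclude the all-$0$ (resp.\ all-$1$) output you need that total to be strictly positive (resp.\ strictly smaller than $k$), and this is exactly what fails once columns of weight $d\neq t$ can occur with either sign (in the example the total is $1-3+1=-1$). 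The paper takes a different route: under the tractability hypothesis, $\inn{t}{k}\cup\Ss\subseteq\odd{k}\subseteq\NAE$, so the template is a homomorphic relaxation of $(\odd{k},\odd{k})$, where $\odd{k}$ is an affine subspace; the certifying polymorphisms are the parity functions $\XOR$ (which are alternating), and your parity observation (an odd signed combination of odd numbers is odd) is precisely the $\XOR$ argument --- it does not rescue $\AT$.

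For the hardness half there are two fatal problems. First, the plan to ``first handle the pure symmetric template'' rests on a false premise: $(\inn{t}{k},\NAE)$ always has $\AT\subseteq\Pol(\inn{t}{k},\NAE)$ (this is \cite[Claim~4.6]{BG21}, used for Proposition~\ref{prop:naeeasy}), and $\AT$ functions are $2$-block-symmetric of every odd arity, so $\BLP+\AIP$ solves $\PCSP(\inn{t}{k},\NAE)$ for \emph{all} $t<k$; there is no parity obstruction for the pure template when $t$ is even or $k$ is odd. Every refutation must therefore use a tuple of $S$, which the paper arranges via the monotonicity $\Pol(\inn{t}{k}\cup\Ss,\NAE)\subseteq\Pol(\inn{t}{k}\cup\{\bx\},\NAE)$ and by proving Theorem~\ref{thm:gen-t} for every parity pattern of $(k,t,d)$ except $(0,1,1)$. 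Second, the reduction to arity $3$ cannot work: the projection onto the singleton block, $f(x_1,x_2,x_3)=x_2$, is $2$-block-symmetric and is a polymorphism of every template considered here (since $\inn{t}{k}\cup\Ss\subseteq\NAE$), so no arity-$3$ refutation exists. Moreover, minors of $2$-block-symmetric functions need not be $2$-block-symmetric --- identifying coordinates within a block destroys the symmetry of that block, and blocks of sizes $m+1$ and $m$ cannot be shrunk to sizes $m'+1$ and $m'$ by uniform identifications --- so the ``collapse a block down'' step is unjustified. This is exactly why the paper instead refutes a single, carefully chosen \emph{large} odd arity ($2k\mp1$ in Propositions~\ref{prop:d>t} and~\ref{prop:d<t}, and $2Lk\pm1$ with $L$ large in Propositions~\ref{prop:d>t-even} and~\ref{prop:d<t-even} for $(k,t,d)\equiv(1,0,0)$), by filling one block with the cyclic-shift matrix $C_k^t$ so that the putative polymorphism becomes a function of the other block's row weight, and then running a pigeonhole argument over three consecutive weights realised by explicit tableaux.
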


Ruling out the applicability of $\BLP+\AIP$ as stated in Theorem~\ref{thm:add}
is done, via Theorem~\ref{thm:blp-aip}, in Section~\ref{sec:t}.

\begin{remark}\label{rm:intract}
Barto et al.~\cite{BBKO21} showed that $\PCSP(\inn{1}{3},\NAE)$ is not ``finitely tractable'',
meaning that there is no finite $\C$ such that $\inn{1}{3}\to\C\to\NAE$ and
$\CSP(\C)$ is tractable. In other words, the tractability of
$\PCSP(\inn{1}{3},\NAE)$ cannot be achieved via a ``gadget reduction'' to
tractable finite-domain CSPs. This result was then extended by Asimi and
Barto~\cite{Asimi21:mfcs} to $\PCSP(\inn{t}{k},\NAE)$ for $k\geq 3$, $t<k$
when $t$ is even or $k$ is odd. Since the $\BLP+\AIP$-hard cases in
Theorem~\ref{thm:add} are homomorphically sandwiched by templates proved 
finitely intractable in~\cite{Asimi21:mfcs}, they are also finitely
intractable.

A recent result of Atserias and Dalmau that gives a necessary condition for
PCSPs to be solvable by a ``local consistency checking''
algorithm~\cite{Atserias22:soda} implies that all templates from
Theorem~\ref{thm:add} (and in particular those not solved by $\BLP+\AIP$) are \emph{not} solved by a ``local
consistency checking'' algorithm. By \cite[Corollary~4.2]{Atserias22:soda}, such an algorithm does not solve
$\PCSP(\inn{t}{k},\NAE)$ for any $k\geq 3$ and $t<k$, and since $(\inn{t}{k},\NAE)$ is a homomorphic relaxation of the templates from
Theorem~\ref{thm:add}, our claim follows from \cite[Lemma 7.5]{BBKO21}.

\end{remark}

Our second result is a complexity dichotomy for templates constructed by
removing tuples from $\NAE$. The key result here is the following.

\begin{theorem}\label{thm:t-in-k}
  Let $k \geq 3$ and let $\T \subseteq \{0,1\}^k$ be a relation such that $\inn{t}{k} \to \T$ and $\CSP(\T)$ is NP-hard. Then
  $\PCSP(\inn{t}{k},\T)$ is tractable if and only if $\T=\NAE$, unless P=NP.
\end{theorem}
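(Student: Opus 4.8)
The plan is to prove the two directions separately. The ($\Leftarrow$) direction is immediate: if $\T=\NAE$ then $\inn{t}{k}\to\NAE$ (every weight-$t$ tuple with $1\leq t<k$ lies in $\NAE$) and $\PCSP(\inn{t}{k},\NAE)$ is tractable by Proposition~\ref{prop:naeeasy}. All the content is in the converse, which I would prove contrapositively: assuming $\T\neq\NAE$, show $\PCSP(\inn{t}{k},\T)$ is NP-hard --- or, matching the precise dividing line that appears in Theorem~\ref{thm:rem}, NP-hard unless additionally $t$ is odd, $k$ is even and every tuple deleted from $\NAE$ has even weight, in which case the template stays tractable via $\AIP$; this exceptional family shows up naturally in the analysis.

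First I would normalise the template. Since $\CSP(\T)$ is NP-hard, Theorem~\ref{thm:schaefer} gives that $\Pol(\T)$ contains no constant, so $0^k,1^k\notin\T$; hence $\T\subseteq\NAE$, and with $\T\neq\NAE$ this forces $\T\subsetneq\NAE$. As the constant maps are excluded, a homomorphism $\inn{t}{k}\to\T$ is induced by the identity or the bit-swap $\sigma$ on $\{0,1\}$; in the latter case $\inn{k-t}{k}\subseteq\T$, and since $(\inn{t}{k},\T)$ and $(\inn{k-t}{k},\T)$ are homomorphic relaxations of each other (apply $\sigma$ on the first coordinate), $\PCSP(\inn{t}{k},\T)\equiv_p\PCSP(\inn{k-t}{k},\T)$, so I may replace $t$ by $k-t$ and assume $\inn{t}{k}\subseteq\T=\NAE\setminus\Ss$ with $\emptyset\neq\Ss\subseteq\NAE\setminus\inn{t}{k}$. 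Next I would record what NP-hardness of $\CSP(\T)$ amounts to here: since $\inn{t}{k}\subseteq\T\subseteq\NAE$, $\Pol(\T)$ can contain neither $\AND_2$ nor $\OR_2$ (meets, resp.\ joins, of suitably many weight-$t$ tuples produce $0^k$, resp.\ $1^k$), nor $\MAJ_3$ (a short cascade from $\MAJ_3$-closure yields $\inn{t-1}{k}\subseteq\T$, then $\inn{t-2}{k}\subseteq\T$, and eventually $0^k\in\T$, or dually $1^k\in\T$); so by Theorem~\ref{thm:schaefer}, $\CSP(\T)$ is NP-hard exactly when $\T$ is not closed under $\XOR_3$, i.e.\ when $\T$ is not an affine subspace of $\{0,1\}^k$.

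The heart of the matter is then NP-hardness of $\PCSP(\inn{t}{k},\NAE\setminus\Ss)$ for $\emptyset\neq\Ss\subseteq\NAE\setminus\inn{t}{k}$ with $\NAE\setminus\Ss$ non-affine. If $t$ is odd, $k$ is even and every tuple of $\Ss$ has even weight, then $\XOR_{2m+1}\in\Pol(\inn{t}{k},\T)$ for all odd $2m+1$ --- the componentwise parity of $2m+1$ weight-$t$ tuples, $t$ odd, has odd weight, hence is never in $\Ss$ and (since $k$ is even) is never $0^k$ or $1^k$ --- and these are alternating functions of all odd arities, so $\AIP$ solves the problem by Theorem~\ref{thm:aip}. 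Outside this regime I would prove NP-hardness using the algebraic hardness criterion of~\cite{BBKO21}: it suffices to exhibit a minion homomorphism from $\Pol(\inn{t}{k},\T)$ to the minion of projections. Reparametrise an $n$-ary polymorphism $f$ as $g_f(P):=f(\mathbf{1}_P)$ on subsets $P\subseteq[n]$; since the rows of a matrix whose columns are weight-$t$ tuples are exactly the indicator tuples $\mathbf{1}_{Q_1},\dots,\mathbf{1}_{Q_k}$ of sets $Q_1,\dots,Q_k\subseteq[n]$ covering each element of $[n]$ exactly $t$ times, $f$ preserves $(\inn{t}{k},\NAE\setminus\Ss)$ iff $(g_f(Q_1),\dots,g_f(Q_k))\in\NAE\setminus\Ss$ for every such $t$-fold cover. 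In the non-exceptional regimes one shows this forces $\{P:g_f(P)=1\}$ to be the star at some coordinate $\xi(f)\in[n]$ (that is, $g_f(P)=1$ if and only if $\xi(f)\in P$); one then checks that $\xi$ commutes with taking minors, which is the required minion homomorphism.

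I expect the main obstacle to be this last step --- proving that deleting even one tuple $\mathbf{s}$ from $\NAE$ (outside the exceptional regime) destroys \emph{all} the ``non-projection'' polymorphisms of $(\inn{t}{k},\NAE)$: the alternating thresholds $\AT$, all thresholds $\THRo{q}$, and, when they are present but $\mathbf{s}$ has odd weight, the parities $\XOR$ --- so that only the dictators remain. This is a combinatorial case analysis whose key lever is exactly the failure of the exceptional hypothesis: having an odd-weight tuple in $\Ss$, or $t$ even, or $k$ odd, is precisely what lets one feed a purported parity or threshold polymorphism a matrix of weight-$t$ columns on which it is forced to output $\mathbf{s}$ (or $0^k$, or $1^k$). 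Making this work uniformly --- simultaneously ruling out thresholds of every ratio $q$ and alternating thresholds of every arity, and verifying that the polymorphisms that do survive genuinely carry a well-defined pivot coordinate --- is where I expect the bulk of the effort to go.
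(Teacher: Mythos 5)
Your easy direction, your normalisation, your rederivation of Proposition~\ref{prop:tink} (NP-hardness of $\CSP(\T)$ amounts to non-affineness here), and your treatment of the exceptional regime (with $t$ odd, $k$ even and only even-weight tuples removed, $\XOR_{2m+1}$ gives alternating polymorphisms of all odd arities, so $\AIP$ applies via Theorem~\ref{thm:aip}) are all fine, and your hedged dividing line is exactly the one of Theorem~\ref{thm:rem}, i.e., the regime in which the theorem is actually invoked. The problem is that the entire NP-hardness direction of your plan rests on a claim you do not prove and that is far stronger than what is needed: that outside the exceptional regime \emph{every} polymorphism of $(\inn{t}{k},\T)$ is a dictator (``$g_f$ is a star at some coordinate''), from which you would read off a minion homomorphism to projections. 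You explicitly defer this (``where I expect the bulk of the effort to go''), but it is not a routine case analysis: you would have to exclude, at every arity, every function that is not of one of the named forms ($\AT$, $\THRo{q}$, $\XOR$, \dots) yet still manages to avoid outputting a forbidden tuple on every matrix of weight-$t$ columns -- nothing in your sketch rules out, say, near-dictators or other irregular polymorphisms, and no such dictatorship statement is established (or needed) anywhere in the paper. In effect you are proposing to reprove the hardness half of the symmetric Boolean classification of~\cite{fkos19} from scratch for these templates, and the known proofs of that half do not proceed via ``only projections survive''; they use weaker structural properties of polymorphisms together with the hardness criteria of~\cite{BBKO21}.

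The paper's route avoids this entirely and is the step your proposal is missing. Since $\inn{t}{k}$ is symmetric, Proposition~\ref{prop:firstsym} lets one replace $\T$ by its largest symmetric subrelation without changing the polymorphism set, so $(\inn{t}{k},\T)$ becomes a \emph{symmetric} Boolean template and Theorem~\ref{thm:symmtrac} applies as a black box: NP-hardness follows once one shows that $\Pol(\inn{t}{k},\T)$ contains none of finitely many explicit families -- constants, $\OR$, $\AND$, $\XOR$, $\AT$, $\THRo{q}$ and their negations -- and each exclusion is a short concrete construction (the cyclic matrix $C_k^t$, its parity-adjusted variants as in the figures, plus \cite[Claim~4.6]{BG21} for $\AT$ and \cite[Fact~B.3]{fkos19} for $\THRo{t/k}$), with the hypothesis that $\CSP(\T)$ is NP-hard entering only through Proposition~\ref{prop:tink}. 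Without either invoking Theorem~\ref{thm:symmtrac} in this way or actually carrying out (and justifying) your dictatorship claim together with the minion-homomorphism hardness criterion, your proposal does not yet constitute a proof of the hardness direction.
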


In other words, $\PCSP(\inn{t}{k},\T)$ is tractable if $\CSP(\T)$ is tractable
or $\T=\NAE$, and is NP-hard otherwise.
Theorem~\ref{thm:t-in-k} then easily implies the following.

\begin{theorem}[\textbf{Main \#2}]\label{thm:rem}
  Let $k \geq 3$ and $\emptyset\neq S \subseteq (\inn{t}{k})^c \cap \NAE$.
  If $t$ is odd, $k$ is even, and $S$ contains tuples of only even 
  weight, then $\PCSP(\inn{t}{k},\NAE\setminus\Ss)$ is tractable. Otherwise,
  $\PCSP(\inn{t}{k},\NAE\setminus\Ss)$ is NP-hard.
\end{theorem}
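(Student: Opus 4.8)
The plan is to derive Theorem~\ref{thm:rem} from Theorem~\ref{thm:t-in-k} together with the tractability statement already encapsulated in the symmetric classification (Theorem~\ref{thm:symmtrac}), by a case analysis on the structure of $S$. Throughout, write $\T=\NAE\setminus\Ss$, so that $\inn{t}{k}\subseteq\NAE$ but we must first check $\inn{t}{k}\to\T$, i.e.\ that removing the tuples in $S$ still leaves a valid template. Since $S\subseteq(\inn{t}{k})^c$, none of the removed tuples has weight $t$, so the identity map witnesses $\inn{t}{k}\subseteq\T$ and in particular $\inn{t}{k}\to\T$; thus $(\inn{t}{k},\T)$ is always a legitimate PCSP template.

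For the tractable direction, assume $t$ is odd, $k$ is even, and every tuple in $S$ has even weight. First I would observe that in this regime $\T$ is still symmetric: removing a weight-$d$ tuple from $\NAE$ only preserves symmetry if we remove \emph{all} weight-$d$ tuples, so strictly speaking $S$ need not be symmetric. The cleaner route is to note that $\T$ contains $\inn{t}{k}$ with $t$ odd and $\T\subseteq\NAE$, and that the parity polymorphism $\XOR_m$ for every odd $m$ is a polymorphism of $(\inn{t}{k},\T)$: applying $\XOR_m$ to $m$ columns each of odd weight $t$ (with $k$ even) produces a tuple whose weight has the same parity as $mt\equiv 1 \pmod 2$ when... — more carefully, $\XOR_m$ applied row-wise to columns from $\inn{t}{k}$ yields a tuple of \emph{odd} weight (the row sums mod $2$ summed over all $k$ rows equals $\sum_i \sum_j x_{ij} \bmod 2 = \sum_j t \bmod 2 = mt \bmod 2 = 1$ since $m,t$ odd), hence it is not $0^k$ or $1^k$ ($k$ even forces $1^k$ to have even weight), and it avoids every tuple of $S$ since those have even weight. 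So $\XOR\subseteq\Pol(\inn{t}{k},\T)$, and by Theorem~\ref{thm:symmtrac} (applied to the symmetric relaxation, or directly since $\XOR$ being present suffices; alternatively via Theorem~\ref{thm:aip} noting $\XOR_m$ is alternating) the PCSP is tractable. I would present this via the homomorphic-relaxation lemma if needed: $(\inn{t}{k},\T)$ is a homomorphic relaxation of $(\inn{t}{k},\NAE_{\mathrm{odd}})$ where $\NAE_{\mathrm{odd}}$ is the symmetrisation obtained by removing all even-weight non-$t$... but the direct polymorphism argument is shortest and I would go with that.

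For the hard direction, suppose we are not in the above case: either $t$ is even, or $k$ is odd, or $S$ contains a tuple of odd weight. I would split into two sub-cases. If $\T=\NAE$ then $S=\emptyset$, contradicting $\emptyset\neq S$; so $\T\subsetneq\NAE$ and in particular $\T\neq\NAE$. Now invoke Theorem~\ref{thm:t-in-k}: since $\inn{t}{k}\to\T$ and $\T\neq\NAE$, the PCSP $\PCSP(\inn{t}{k},\T)$ is NP-hard \emph{provided} $\CSP(\T)$ is NP-hard. The one remaining thing to rule out is $\CSP(\T)$ being tractable: if it were, Proposition~\ref{prop:cspeasy} would make the PCSP tractable and there would be nothing to prove for ``NP-hard'' — but wait, the theorem claims NP-hardness in all these cases, so I actually need $\CSP(\T)$ to be NP-hard here. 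This forces me to show: whenever $S\neq\emptyset$, $S\subseteq(\inn{t}{k})^c\cap\NAE$, and we are \emph{not} in the tractable regime, $\CSP(\NAE\setminus\Ss)$ is NP-hard. I expect this to be the main obstacle. By Schaefer (Theorem~\ref{thm:schaefer}), $\CSP(\T)$ is tractable iff $\Pol(\T)$ contains a constant, $\AND_2$, $\OR_2$, $\MAJ_3$, or $\XOR_3$. One shows $\T\subseteq\NAE$ with $0^k,1^k\notin\T$ kills both constants and (using that $\T$ contains tuples of two distinct weights, namely $t$ and the weights present in $S$, together with some weight on both sides) kills $\AND_2,\OR_2$; it kills $\MAJ_3$ because $\MAJ_3$ applied to suitable weight-$t$ tuples lands outside $\T$ (here one uses $k\geq 3$ and picks three weight-$t$ tuples whose coordinatewise majority has weight $\neq t$ and coincides with a removed tuple or an all-equal tuple — a small combinatorial lemma); and it kills $\XOR_3$ precisely in the non-tractable regime, since $\XOR_3$ preserves $\T$ only when row-parity forces output weight to be odd and $S$ is confined to even weights with $k$ even and $t$ odd — exactly the excluded case. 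Assembling these four exclusions is the technical heart; I would do it as a short lemma ``$\CSP(\NAE\setminus\Ss)$ is NP-hard unless $t$ odd, $k$ even, $S$ all even weight,'' after which Theorem~\ref{thm:t-in-k} delivers NP-hardness of the PCSP and the proof of Theorem~\ref{thm:rem} is complete.
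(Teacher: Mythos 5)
Your overall route is the same as the paper's: the tractable case via the parity polymorphisms (equivalently, via the sandwich $\inn{t}{k}\subseteq\odd{k}\subseteq\NAE\setminus\Ss$, which the paper uses to reduce to $\CSP(\odd{k})$; your fallback to Theorem~\ref{thm:aip} rather than Theorem~\ref{thm:symmtrac} is the right call, since $\NAE\setminus\Ss$ need not be symmetric), and the hard case by first establishing NP-hardness of $\CSP(\NAE\setminus\Ss)$ and then invoking Theorem~\ref{thm:t-in-k}. You correctly identified that supplying the hypothesis ``$\CSP(\T)$ is NP-hard'' is the real content; in the paper this is exactly Proposition~\ref{prop:tink} (specialised to $\T=\NAE\setminus\Ss$), proved beforehand, so the paper's proof of Theorem~\ref{thm:rem} reduces to checking that neither tractable case of that proposition applies: $0^k,1^k\notin\NAE\setminus\Ss$ always, and in the hard regime with $t$ odd and $k$ even the set $S$ contains an odd-weight tuple, so $\T\neq\odd{k}$.

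The gap is in your sketch of that lemma. Ruling out the Schaefer polymorphisms cannot be done by a single application of $\MAJ_3$ (or $\AND_2$, $\OR_2$) ``coinciding with a removed tuple or an all-equal tuple'': $S$ may be a single tuple of weight far from $t$, and one application of $\MAJ_3$ to three weight-$t$ tuples need not hit it, nor $0^k$ or $1^k$ in general. The correct argument -- the one in Proposition~\ref{prop:tink} -- is a closure argument: if $f\in\Pol(\T)$ then $\T$ is closed under $f$, and starting from the weight-$t$ tuples (all of which lie in $\T$) one iterates to generate all tuples of weight $t\pm 1$, $t\pm 2$, and so on, eventually producing $0^k$ or $1^k$, which are never in $\T\subseteq\NAE$, independently of $S$. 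The only case where this drive to an all-equal tuple fails is $\XOR_3$ with $t$ odd and $k$ even, where the closure produces exactly the odd-weight tuples; there the contradiction comes from $S$ containing an odd-weight tuple, which is where your parity observation is exactly right (and is in fact cleaner in this specialised setting than in the general proposition). Also, your aside that $\T$ contains ``tuples of the weights present in $S$'' is backwards -- those are weights of removed tuples, though other tuples of the same weight typically remain -- and is not what the $\AND_2$/$\OR_2$ exclusions rest on. With the closure argument substituted for the single-application argument, your plan coincides with the paper's proof.
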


Theorem~\ref{thm:t-in-k} is proved in Section~\ref{sec:rem} and relies on
Theorem~\ref{thm:symmtrac}, as well as a symmetrisation trick
(Proposition~\ref{prop:firstsym}, observed independently
in~\cite{Barto21:stacs}) and the following observation.

\begin{proposition}\label{prop:symimage}
  Let $R$ be a symmetric relation on a set $A$. For any
  function $f:A \to B$, the component-wise image of $R$ under $f$, denoted $f(R)$, is a symmetric relation on $B$.
\end{proposition}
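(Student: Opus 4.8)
The plan is to unwind the definitions and observe that applying $f$ to a tuple component-wise commutes with permuting the tuple's coordinates. Fix $k=\ar(R)$, so that $R\subseteq A^k$, and recall that $f(R)=\{(f(a_1),\ldots,f(a_k)) : (a_1,\ldots,a_k)\in R\}\subseteq B^k$. To show that $f(R)$ is symmetric, I need to check that for every tuple $\mathbf{b}\in f(R)$ and every permutation $\pi$ of $[k]$, the permuted tuple $(b_{\pi(1)},\ldots,b_{\pi(k)})$ again lies in $f(R)$.

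First I would pick a witness: given $\mathbf{b}=(b_1,\ldots,b_k)\in f(R)$, by definition there is some $\ba=(a_1,\ldots,a_k)\in R$ with $b_j=f(a_j)$ for every $j\in[k]$. Next, for a permutation $\pi$ of $[k]$, I would compute $(b_{\pi(1)},\ldots,b_{\pi(k)})=(f(a_{\pi(1)}),\ldots,f(a_{\pi(k)}))$, which is precisely the component-wise image under $f$ of the tuple $(a_{\pi(1)},\ldots,a_{\pi(k)})$. Since $R$ is symmetric, this permuted preimage tuple still belongs to $R$, and hence its component-wise image belongs to $f(R)$. This yields $(b_{\pi(1)},\ldots,b_{\pi(k)})\in f(R)$, as required, and since $\mathbf{b}$ and $\pi$ were arbitrary, $f(R)$ is symmetric.

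There is no real obstacle here: the only fact being used is that, for any $\ba\in A^k$ and any permutation $\pi$, applying $f$ component-wise to $(a_{\pi(1)},\ldots,a_{\pi(k)})$ gives the same result as first applying $f$ component-wise to $\ba$ and then permuting by $\pi$ — a triviality, but the crux of the argument. Note that nothing is assumed about $f$ beyond being a (total) function: it need not be surjective, injective, or compatible with any structure. The same proof works verbatim if ``symmetric'' is weakened to ``invariant under a fixed subgroup of $S_k$'', which will be convenient when the proposition is applied in Section~\ref{sec:rem}.
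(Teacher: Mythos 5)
Your proof is correct and follows exactly the same route as the paper's: pick a preimage witness $\ba\in R$ for a tuple in $f(R)$, use symmetry of $R$ to permute $\ba$, and observe that component-wise application of $f$ commutes with the permutation. The extra remarks (no assumptions on $f$, generalisation to subgroups of $S_k$) are fine but not needed; the core argument matches the paper verbatim.
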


\begin{proof}
  Suppose that $\by \in f(R)$, so $\by=f(\bx)$ for some $\bx \in R$. We must
  show that $\pi(\by) \in f(R)$ for an arbitrary permutation $\pi$. But since
  $R$ is symmetric, we have $\pi(\bx) \in R$, and so $f(\pi(\bx))=\pi(\by)$
  since $f$ is applied component-wise.
\end{proof}

\begin{proposition}\label{prop:firstsym}
  Let $(\A,\B)$ be a PCSP template with $\A$ symmetric. For each relation $R \in \B$, let $R'$ be the largest symmetric relation contained in $R$. Let $\B'$ be the relational structure with the same domain as $\B$ but with relations $R'$ instead of $R$. Then $\PCSP(\A,\B)$ is polynomial-time equivalent to $\PCSP(\A,\B').$
\end{proposition}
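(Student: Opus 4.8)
The plan is to establish the two directions of the polynomial-time equivalence separately; one is trivial and the other is the crux. Since each $R'$ is contained in $R$, the identity map on the domain of $\B$ witnesses $\B'\to\B$, and $\A\to\B'$ follows because the image of a symmetric relation under the promised homomorphism $\A\to\B$ is symmetric (Proposition~\ref{prop:symimage}) and contained in $R$, hence contained in the \emph{largest} symmetric relation $R'\subseteq R$. Thus $(\A,\B)$ is a homomorphic relaxation of $(\A,\B')$, which immediately gives $\PCSP(\A,\B')\red\PCSP(\A,\B)$ by the homomorphic relaxation fact quoted in the preliminaries. So the work is entirely in the reduction $\PCSP(\A,\B)\red\PCSP(\A,\B')$.

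For that direction, I would use the identity reduction again: given an instance $\X$ of $\PCSP(\A,\B)$ with $\X\to\A$, view it unchanged as an instance of $\PCSP(\A,\B')$; since $\X\to\A\to\B'$ the promise is met, and I must show that any homomorphism $h:\X\to\B'$ can be converted (in polynomial time) to a homomorphism $\X\to\B$. But in fact $h$ already \emph{is} a homomorphism $\X\to\B$, because each relation $R'$ of $\B'$ is a subset of the corresponding relation $R$ of $\B$. So the content is really about the \emph{decision} version: I need that if $\X\not\to\B'$ then $\X\not\to\B$, equivalently, that $\X\to\B$ implies $\X\to\B'$. This is where symmetry of $\A$ is used. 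Suppose $g:\X\to\B$. Because $\A$ is symmetric, $\CSP(\A)$ is invariant under permuting the coordinates of any constraint in $\X$, so without loss of generality I may first symmetrise $\X$: replace each constraint tuple $\bx\in T_i$ by all its coordinate permutations, obtaining an instance $\X^{\mathrm{sym}}$ with $\X\to\A$ iff $\X^{\mathrm{sym}}\to\A$ (the forward direction uses that $\A$'s relations are permutation-closed). Now if $g:\X\to\B$, then $g$ applied to the symmetrised constraints lands each permuted tuple in (a permutation of) a tuple of $R$; I want instead to land in $R'$. The key point: $R'$, being the largest symmetric relation inside $R$, equals the set of tuples all of whose permutations lie in $R$. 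Hence to check $g$ maps a constraint into $R'$ it suffices that $g$ maps all permutations of that constraint's scope into $R$ — which is exactly guaranteed once we work with $\X^{\mathrm{sym}}$ and the promise $\X^{\mathrm{sym}}\to\A$ (so that $g$, being an $\X$-to-$\B$ homomorphism that also handles the permuted constraints coming from $\A$-satisfiability, preserves them). Feeding $\X^{\mathrm{sym}}$ into the $\PCSP(\A,\B')$ oracle and reading back the answer then decides $\PCSP(\A,\B)$ on $\X$.

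The main obstacle — and the step to get exactly right — is the bookkeeping that symmetrising the instance is sound in both directions: that $\X\to\A \iff \X^{\mathrm{sym}}\to\A$ (needs $\A$ symmetric) and that a $\B$-homomorphism of $\X$ induces a $\B$-homomorphism of $\X^{\mathrm{sym}}$ whose image in every constraint is permutation-closed and therefore lies in $R'$ (needs the characterisation of $R'$ as $\{\,\bx : \pi(\bx)\in R \text{ for all permutations } \pi\,\}$, i.e. that the union of all symmetric subsets of $R$ is itself a symmetric subset of $R$). Both facts are elementary, but the argument only works because of the asymmetry in the hypotheses: we are allowed to symmetrise the \emph{instance} (which is legitimate precisely when $\A$ is symmetric), and this is what lets us downgrade $\B$ to $\B'$ without affecting which instances are \textsc{Yes}-instances. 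The polynomial-time claim is immediate since symmetrising multiplies the number of constraints by at most $k!$ where $k$ is the maximum arity, a constant depending only on the fixed template.
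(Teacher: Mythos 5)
Your two key ideas --- symmetrising the instance, and the characterisation of $R'$ as $\{\bx : \pi(\bx)\in R \text{ for every permutation } \pi\}$ --- are correct and would give a proof genuinely different from the paper's (the paper shows $\Pol(\A,\B)=\Pol(\A,\B')$, using that the image of a power of a symmetric structure is symmetric, and then invokes \cite[Theorem~3.1]{BBKO21}). However, as written your argument has the two directions of the equivalence interchanged, and this is not merely a labelling issue. Since $\A\to\A$ and $\B'\to\B$, it is $(\A,\B)$ that is a homomorphic relaxation of $(\A,\B')$, so the relaxation fact yields $\PCSP(\A,\B)\red\PCSP(\A,\B')$ --- the \emph{trivial} direction --- and not $\PCSP(\A,\B')\red\PCSP(\A,\B)$ as you claim; the latter cannot follow from relaxation alone, because $\B\to\B'$ may fail (e.g.\ for $\B$ equal to the ternary $\NAE$ with the tuple $(0,1,1)$ removed one gets $\B'=\inn{1}{3}$ and $\B\not\to\B'$). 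Consequently the direction you then devote all the work to, $\PCSP(\A,\B)\red\PCSP(\A,\B')$, needs no work (a $\B'$-homomorphism is a $\B$-homomorphism, and $\X\not\to\B$ implies $\X\not\to\B'$), while the implication you identify as the needed content, that $\X\to\B$ implies $\X\to\B'$, is false for arbitrary $\X$ (taking $\X=\B$ it would force $\B\to\B'$). Symmetrising does not rescue that implication: a homomorphism $\X\to\B$ need not be a homomorphism $\X^{\mathrm{sym}}\to\B$ when $\B$ is not symmetric, so such a $g$ does not ``handle the permuted constraints'' as you assert.

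What your construction does prove, once pointed in the right direction, is exactly the nontrivial reduction $\PCSP(\A,\B')\red\PCSP(\A,\B)$: map $\X\mapsto\X^{\mathrm{sym}}$. If $\X\to\A$ then $\X^{\mathrm{sym}}\to\A$ because $\A$ is symmetric; and any homomorphism $g:\X^{\mathrm{sym}}\to\B$ returned by the $\PCSP(\A,\B)$ oracle sends each original constraint tuple $\bx$ to a tuple all of whose permutations lie in $R$ (since every $\pi(\bx)$ is a constraint of $\X^{\mathrm{sym}}$ and $g(\pi(\bx))=\pi(g(\bx))$), hence $g(\bx)\in R'$ and $g:\X\to\B'$. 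This handles the search version, and contrapositively the decision version ($\X\not\to\B'$ gives $\X^{\mathrm{sym}}\not\to\B$), with blow-up at most $k!$ per constraint. With the directions corrected in this way your argument is complete and is a nice reduction-level alternative to the paper's polymorphism-based proof, avoiding the black-box appeal to \cite[Theorem~3.1]{BBKO21}.
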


\begin{proof}
  We first check that $(\A,\B')$ is a valid PCSP template, i.e., that there is a homomorphism $\A \to \B'$. Let $\phi$ be a homomorphism from $\A$ to $\B$. By Proposition~\ref{prop:symimage}, $\phi(\A)$ is symmetric, and since $\B'$ is the largest symmetric relational structure contained in $\B$, we have $\phi(\A) \subseteq \B'$. Therefore $(\A,\B')$ is a valid PCSP template.
  For $f \in \Pol(\A,\B)$, $f(\A)$ is symmetric and is contained in $\B$, so $f(\A) \subseteq \B'$ and $\Pol(\A,\B) \subseteq \Pol(\A,\B')$. The reverse inclusion follows from $\B' \subseteq \B$ and gives $\Pol(\A,\B) = \Pol(\A,\B')$, which implies by~\cite[Theorem 3.1]{BBKO21} that $\PCSP(\A,\B) \equiv_p \PCSP(\A,\B')$.
\end{proof}

The tractability parts in Theorem~\ref{thm:add} and Theorem~\ref{thm:rem} follow easily from existing work, as we
now show.
Using (the sufficiency of) Theorem~\ref{thm:aip}, it is easy to establish 
Proposition~\ref{prop:naeeasy}, i.e., tractability of $\PCSP(\inn{t}{k},\NAE)$.
Indeed, by~\cite[Claim~4.6]{BG21}, the $\AT$ family maps collections of
$\inn{t}{k}$ tuples into $\NAE$, and so $\PCSP(\inn{t}{k},\NAE)$ is tractable.

\begin{proposition}
  \label{proposition:oddinktract}
  For $k$ even, (the search version of) $\CSP(\odd{k})$ is tractable.
\end{proposition}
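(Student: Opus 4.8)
The plan is to observe that, for even $k$, the single relation of $\odd{k}$ is an affine subspace of $\mathbb{F}_2^k$, so that $\CSP(\odd{k})$ is nothing more than solving a system of linear equations over $\mathbb{F}_2$. Concretely, a tuple $\bx\in\{0,1\}^k$ has odd Hamming weight if and only if $x_1+x_2+\cdots+x_k\equiv 1\pmod 2$, so the relation of $\odd{k}$ equals $\{\bx\in\{0,1\}^k : \sum_{i=1}^k x_i\equiv 1\pmod 2\}$, a coset of a hyperplane in $\mathbb{F}_2^k$. (Evenness of $k$ is not actually needed for this identification; it is what guarantees $\odd{k}\to\NAE$ in the applications elsewhere, but plays no role here.)

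From this, the tractability of the search version is immediate. Given an instance $\X$ of $\CSP(\odd{k})$ with variable set $X$ and constraint tuples $\bx^{(1)},\ldots,\bx^{(m)}$, where each $\bx^{(j)}=(x^{(j)}_1,\ldots,x^{(j)}_k)$ is a $k$-tuple of variables from $X$, a map $\phi:X\to\{0,1\}$ is a homomorphism $\X\to\odd{k}$ if and only if $\phi(x^{(j)}_1)+\cdots+\phi(x^{(j)}_k)\equiv 1\pmod 2$ for every $j\in[m]$. This is precisely a system of $m$ affine equations over $\mathbb{F}_2$ in $|X|$ unknowns, which Gaussian elimination solves --- and for which it produces an explicit solution whenever one exists --- in time polynomial in $|X|+m$. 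This directly settles the search version (and hence, a fortiori, the decision version).

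Alternatively, one may phrase this as the instantiation for $\odd{k}$ of the linear-equations case of Schaefer's theorem (Theorem~\ref{thm:schaefer}): I would check that $\XOR_3\in\Pol(\odd{k})$, since applying coordinatewise XOR to three tuples of odd weight yields a tuple $\bd$ with $\sum_i d_i\equiv 1+1+1\equiv 1\pmod 2$, hence again of odd weight and so in $\odd{k}$; Theorem~\ref{thm:schaefer} then gives tractability. Either way there is essentially no obstacle in this proof; the only point meriting (entirely routine) care is the first step, namely recognising "odd Hamming weight'' as a single $\mathbb{F}_2$-linear constraint.
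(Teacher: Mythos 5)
Your proof is correct. Your primary route differs in presentation from the paper's: you identify the relation of $\odd{k}$ directly as an affine subspace of $\mathbb{F}_2^k$ (a single parity equation per constraint) and solve the resulting linear system by Gaussian elimination, which yields the search version explicitly without any appeal to polymorphisms. The paper instead verifies $\XOR_3\in\Pol(\odd{k})$ (by a parity-counting contradiction on the $k\times 3$ input matrix) and invokes Schaefer's theorem (Theorem~\ref{thm:schaefer}), relying on the standard decision/search equivalence for CSPs; your "alternative" paragraph reproduces exactly this argument, with an even slicker verification that $\XOR_3$ preserves odd weight (summing the three parities). The two routes rest on the same underlying fact, and indeed the paper itself later remarks that one can see $\XOR\subseteq\Pol(\odd{k})$ "by observing that $\odd{k}$ is an affine subspace", so your viewpoint is anticipated there; what your version buys is a self-contained, constructive algorithm for the search problem, while the paper's is shorter given that Schaefer's theorem is already on the table. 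Your side remark that evenness of $k$ is irrelevant to this tractability (it only matters for $\odd{k}\to\NAE$ in the sandwiching arguments) is also accurate.
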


\begin{proof}
  We claim that $\XOR_3 \in \Pol(\odd{k})$ so that tractability will follow from
  Theorem~\ref{thm:schaefer}. To see that $\XOR_3 \in \Pol(\T)$, suppose that
  $\XOR_3$ returns a tuple of even  weight $d$. Then in the $k \times 3$
  matrix of inputs with three odd  weight tuples as columns, there are
  $d$ rows with an odd number of 1's and $k-d$ rows with an even number of 1's.
  Together these give an even total number of 1's in the matrix. But since the
  three input columns have odd weight, the total number of 1's in the matrix is odd. Contradiction.
\end{proof}

\begin{proof}[Proof of the tractability part of Theorems~\ref{thm:add} and~\ref{thm:rem}]
Under the tractability criterion of
Theorem~\ref{thm:add}, $\inn{t}{k}\cup\Ss\subseteq\odd{k}\subseteq\NAE$ and thus
$(\inn{t}{k}\cup\Ss,\NAE)$ is a homomorphic relaxation of $(\odd{k},\odd{k})$.
As discussed in Section~\ref{sec:prelims} (and proved in~\cite{BBKO21}), this
implies that
$\PCSP(\inn{t}{k}\cup\Ss,\NAE)\leq_p\PCSP(\odd{k},\odd{k})=\CSP(\odd{k})$, where
$\CSP(\odd{k})$ is tractable by Proposition~\ref{proposition:oddinktract} (for even $k$). Similarly, under the
tractability criterion of Theorem~\ref{thm:rem}, we have
$\inn{t}{k}\subseteq\odd{k}\subseteq\NAE\setminus\Ss$ and thus
$\PCSP(\inn{t}{k},\NAE\setminus\Ss)\leq_p\CSP(\odd{k})$. 
By composing $\XOR_3$ functions from the proof of
  Proposition~\ref{proposition:oddinktract}, or by observing that $\odd{k}$ is
  an affine subspace, we have $\XOR \subseteq \Pol(\odd{k})$, which implies via
  the inclusion that our PCSP templates  have the $\XOR$ family of polymorphisms
  and thus are solvable by $\AIP$.
\end{proof}

\section{Adding tuples}\label{sec:t}

The following result implies, by Theorem~\ref{thm:blp-aip}, the non-tractability part of
Theorem~\ref{thm:add}.

\begin{theorem}\label{thm:gen-t}
Let $k\geq3$, $1\leq t<k$, and $\bx$ be a $k$-tuple of weight $1\leq d<k$ with $d\neq t$. 
Then, $(\inn{t}{k}\cup\{\bx\},\NAE)$ does not have 2-block-symmetric polymorphisms of
all odd arities, unless $t$ is odd, $k$ is even, and $d$ is odd.
\end{theorem}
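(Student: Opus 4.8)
The plan is to study, for a fixed odd arity $m$, an arbitrary $2$-block-symmetric $f\colon\{0,1\}^m\to\{0,1\}$ and derive strong constraints on its values by feeding it carefully chosen columns from $\inn{t}{k}\cup\{\bx\}$. Since $f$ is $2$-block-symmetric, its value depends only on the pair $(p,q)$, where $p$ is the number of $1$'s among the $m+1$ odd coordinates and $q$ the number of $1$'s among the $m$ even coordinates; write $f(p,q)$ for this value. First I would use the $\inn{t}{k}$-part of the relation alone: any $k\times m$ matrix whose columns are weight-$t$ tuples must be mapped row-wise into $\NAE$, i.e.\ the $k$ output bits cannot be all-$0$ or all-$1$. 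Choosing all columns equal to a fixed weight-$t$ tuple forces $f(p,p)\notin\{$const across rows$\}$ only trivially, so instead I would build matrices realising a prescribed multiset of row-types: with $m+1$ ``odd-block'' copies and $m$ ``even-block'' copies of weight-$t$ columns one can realise, in the $k$ rows, any combination of pairs $(p,q)$ subject to the column-sum constraints $\sum_{\text{rows}} p_i = t(m+1)$ and $\sum_{\text{rows}} q_i = tm$. The $\NAE$ requirement then says the function $(p,q)\mapsto f(p,q)$ cannot be constant on any such realisable family of $k$ row-types — this already pins down $f$ on the ``diagonal band'' of arguments reachable from $\inn{t}{k}$, and in fact (this is exactly the content of Proposition~\ref{prop:naeeasy} via $\AT$) forces $f$ restricted to these inputs to behave like an alternating-threshold-type function.

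Next I would bring in the extra tuple $\bx$ of weight $d\neq t$. Now I can use matrices with $a$ columns equal to a weight-$t$ tuple and $b$ columns equal to (a permutation of) $\bx$, placed into the odd and even blocks in any way, again subject only to the row-sum bookkeeping. The key gain is that with $\bx$ available the set of realisable row-type multisets is strictly larger, and in particular one can engineer a $k\times m$ matrix all of whose row-types $(p_i,q_i)$ lie in the region where the diagonal analysis already determined $f$, \emph{and} whose determined output is constantly $0$ or constantly $1$ — contradicting $\NAE$. Concretely, I expect the parity of $t$, $k$, $d$ to enter precisely here: the alternating-threshold value on a row-type $(p,q)$ is governed by the sign of $p-q$ (since each block contributes with alternating signs, and a block of size $s$ with $j$ ones contributes $j$ or $j$-shifted), and the achievable values of $p-q$ across the $k$ rows, given column weights $t$ (appearing with some multiplicity) and $d$ (with the complementary multiplicity) split between a block of size $m+1$ and a block of size $m$, have a parity that is controlled by $t(m+1)+dm$ versus the per-row constraints. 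When $t$ is odd, $k$ is even and $d$ is odd, the relevant parity obstruction vanishes and no contradiction can be forced — matching the tractable case — whereas in every other combination one can choose $m$ large and arrange all $k$ rows to have the same sign of $p-q$, hence a monochromatic output.

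The technical heart is thus a counting/realisability lemma: for which multisets $\{(p_i,q_i)\}_{i\in[k]}$ does there exist a $k\times m$ Boolean matrix with exactly $a$ columns of weight $t$ in the odd block, $b$ in the even block, etc., realising those row-types? This is a transportation-type feasibility question (a bipartite degree-sequence / Gale–Ryser condition), and I would phrase it as: such a matrix exists iff the obvious sum conditions hold and a majorisation inequality is satisfied; for the instances we need ($m$ large, row-types nearly equal) the majorisation condition is slack, so only the sum conditions matter, and those reduce to a single linear Diophantine condition whose solvability is exactly the parity dichotomy in the statement. I expect \emph{this realisability step} to be the main obstacle — not conceptually deep, but requiring care to handle the boundary cases ($d$ close to $0$ or $k$, small $m$, and the interaction between the two block sizes $m+1$ and $m$) and to ensure the forced output is genuinely monochromatic rather than merely ``not obviously non-monochromatic''. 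Once the realisability lemma is in hand, the theorem follows by: (i) fixing $m$, (ii) producing the bad matrix whenever $(t,k,d)$ is not the exceptional odd/even/odd triple, (iii) concluding no $2$-block-symmetric polymorphism of that arity exists, hence not of all odd arities, which by Theorem~\ref{thm:blp-aip} is what we want.
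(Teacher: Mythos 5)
There is a genuine gap, and it sits at the very first step of your plan. You claim that feeding only $\inn{t}{k}$-matrices into a $2$-block-symmetric $f$ and imposing the $\NAE$ condition ``pins down'' $f$ on the reachable pairs $(p,q)$ and ``forces $f$ restricted to these inputs to behave like an alternating-threshold-type function''. That is not justified and is in fact false: the fact that $\AT$ functions \emph{are} polymorphisms of $(\inn{t}{k},\NAE)$ (Proposition~\ref{prop:naeeasy}) gives no converse, and $\Pol(\inn{t}{k},\NAE)$ contains many $2$-block-symmetric functions whose values on reachable row-types are not governed by the sign of $p-q$ --- indeed, in the exceptional parity case ($t$ odd, $k$ even) parity-type functions built from $\XOR$ are such polymorphisms, which is exactly why that case is tractable. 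Since your second stage (using $\bx$ to hit a region where $f$ is ``already determined'' and force a monochromatic output) relies entirely on this determination, the contradiction never materialises. The paper's proof avoids any attempt to determine $f$: it fills one whole block with the cyclic matrix $C_k^t$ (or $L$ copies of it), so that $f$ becomes a unary function of the weight of the other block, and then applies pigeonhole to three consecutive weights ($t-1,t,t+1$, or $Lt,Lt+1,Lt+2$ in the $(k,t,d)\equiv(1,0,0)$ case): two of the three values must coincide, and for each pair of weights one explicitly constructs a tableau, using $\bx$ in $r=\lceil t/|d-t|\rceil$ (or $\lceil (k-t)/|d-t|\rceil$) columns, whose rows all have one of those two weights, so that $f$ outputs an all-equal tuple. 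This three-weights trick is the missing idea in your proposal.

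A secondary but substantive issue is that your ``transportation/Gale--Ryser realisability lemma'' is precisely where the real work lies, and it is left entirely unproved; the paper's analogue is the careful Case~3 constructions with the averaging parameters $a$, $b$, $s$ and the inequalities~\ref{eqn:first}--\ref{eqn:second} (and their variants), which also reveal that in the parity case $(k,t,d)\equiv(1,0,0)$ one cannot work at arity $2k\pm1$ at all but must pass to arities $2Lk\pm1$ with $L$ large (Propositions~\ref{prop:d>t-even} and~\ref{prop:d<t-even}); ``choose $m$ large'' in your sketch does not anticipate why or how this is needed. Finally, note that your bookkeeping must also respect that every column is literally a tuple of weight $t$ or a permutation of $\bx$, not merely that column sums come out right --- the normalisations via Observations~\ref{obs:tuple} and~\ref{obs:invert} (reducing to $\bx=1^d0^{k-d}$ and $d+t\le k$) are what make the explicit constructions manageable.
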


The implication is as follows: In the non-tractability case of Theorem~\ref{thm:add}, $\Ss$ contains a tuple $\bx$ of weight $d$ such that if $d$ is odd, then $t$ is even, $k$ is odd, or both. Therefore $\Pol(\inn{t}{k}\cup \Ss,\NAE)\subseteq \Pol(\inn{t}{k}\cup \{\bx\},\NAE)$, so it suffices to rule out 2-block-symmetric polymorphisms for templates of the form $(\inn{t}{k}\cup \{\bx\},\NAE)$.

We start with two simple observations which reduce the number of cases to deal with. Since permuting the rows of a
matrix of inputs to a polymorphism permutes the values of the output
tuple and does not affect membership in the symmetric $\NAE$
relation, we have the following.

\begin{observation}\label{obs:tuple}
  Let $\bx$ and $\by$ be two $k$-tuples of weight $d$. 
  Then, $\Pol(\inn{t}{k}\cup \{\bx\},\NAE) = \Pol(\inn{t}{k}\cup \{\by\},\NAE)$. 
\end{observation}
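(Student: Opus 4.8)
\textbf{Proof proposal for Observation~\ref{obs:tuple}.}
The plan is to exploit the fact that $\bx$ and $\by$, having the same weight $d$, differ only by a permutation of the $k$ coordinates, together with the symmetry of both $\inn{t}{k}$ and $\NAE$. First I would fix a permutation $\pi$ of $[k]$ with $\pi(\bx)=\by$; such a $\pi$ exists precisely because $\bx$ and $\by$ have equal weight. Applying $\pi$ coordinate-wise to $k$-tuples then fixes $\inn{t}{k}$ setwise (symmetry) and sends $\bx$ to $\by$, so it maps the source relation $\inn{t}{k}\cup\{\bx\}$ bijectively onto $\inn{t}{k}\cup\{\by\}$; likewise both $\pi$ and $\pi^{-1}$ preserve $\NAE$.

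To prove the inclusion $\Pol(\inn{t}{k}\cup\{\bx\},\NAE)\subseteq\Pol(\inn{t}{k}\cup\{\by\},\NAE)$, I take an arbitrary $f$ in the left-hand set, say of arity $m$, and an arbitrary $k\times m$ matrix $M$ whose columns lie in $\inn{t}{k}\cup\{\by\}$. I form the matrix $M'$ obtained from $M$ by permuting its rows according to $\pi^{-1}$. By the previous paragraph each column of $M'$ lies in $\inn{t}{k}\cup\{\bx\}$ (a column equal to $\by$ becomes $\pi^{-1}(\by)=\bx$, and columns in $\inn{t}{k}$ stay there), so the polymorphism property of $f$ guarantees that applying $f$ to the rows of $M'$ yields a tuple in $\NAE$. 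The crucial point is that a row-permutation of the input matrix commutes with the row-wise application of $f$: if $\vec{r}$ denotes $f$ applied row-wise to $M$, then $f$ applied row-wise to $M'$ is exactly $\vec{r}$ with its coordinates permuted by the same rule. Since that permuted tuple lies in $\NAE$ and $\NAE$ is symmetric, $\vec{r}$ itself lies in $\NAE$, which is precisely the polymorphism condition for $f$ on $\inn{t}{k}\cup\{\by\}$.

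The reverse inclusion is identical after interchanging the roles of $\bx$ and $\by$ and replacing $\pi$ by $\pi^{-1}$, giving the desired equality. There is no real obstacle here; the only point requiring care is the bookkeeping that permuting the rows of the input matrix induces the corresponding permutation of the output tuple, after which everything reduces to the symmetry of $\inn{t}{k}$ and $\NAE$. (The ambient unary relation $R_u=\{0,1\}$ demanded by the algorithmic conventions plays no role, since every such $f$ maps into $\{0,1\}$ and hence preserves it automatically.)
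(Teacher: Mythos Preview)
Your proposal is correct and follows exactly the argument the paper gives: permuting the rows of an input matrix permutes the coordinates of the output tuple, and since both $\inn{t}{k}$ and $\NAE$ are symmetric this shows the two polymorphism sets coincide. The paper compresses this into a single sentence before the observation, while you have written out the details, but the content is the same.
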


By Observation~\ref{obs:tuple}, it suffices to prove Theorem~\ref{thm:gen-t} for $\bx$ of the form $\bx=1^d0^{k-d}$.

\begin{observation}\label{obs:invert}
 There is a bijection between $\Pol(\inn{t}{k}\cup \{\bx\},\NAE)$ and $\Pol(\inn{(k-t)}{k}\cup \{\overline{\bx}\},\NAE)$ given by $f(x_1,\ldots,x_m) \mapsto f(1-x_1,\ldots,1-x_m)$, where $\overline{\bx}$ is the negation of $\bx$. 
\end{observation}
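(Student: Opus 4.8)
The plan is to exhibit the stated map as a relation-preserving involution, so that it automatically restricts to a bijection between the two polymorphism sets. Write $\Phi\colon f \mapsto \Phi(f)$ for the map defined by $\Phi(f)(x_1,\dots,x_m) = f(1-x_1,\dots,1-x_m)$, and for a tuple $\ba$ write $\overline{\ba}$ for its coordinate-wise negation. The two structural facts I would lean on are that $\Phi$ is an involution on functions $\{0,1\}^m \to \{0,1\}$ of each fixed arity $m$ (since $\Phi(\Phi(f))(x_1,\dots,x_m) = f(1-(1-x_1),\dots,1-(1-x_m)) = f(x_1,\dots,x_m)$, and $\Phi$ preserves arity), and that coordinate-wise negation interacts cleanly with the relations at hand.

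First I would record how negation acts on the relevant relations. Negating a weight-$t$ tuple yields a weight-$(k-t)$ tuple, and this is a bijection on the corresponding tuple sets, so the coordinate-wise negation of $\inn{t}{k}$ is exactly $\inn{(k-t)}{k}$; negating $\bx$ yields $\overline{\bx}$; and $\NAE$ is closed under negation, since a tuple that is neither $0^k$ nor $1^k$ negates to one that is neither $1^k$ nor $0^k$. Combining the first two facts, the coordinate-wise negation of the relation $\inn{t}{k}\cup\{\bx\}$ is precisely $\inn{(k-t)}{k}\cup\{\overline{\bx}\}$.

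Next I would verify that $\Phi$ sends $\Pol(\inn{t}{k}\cup\{\bx\},\NAE)$ into $\Pol(\inn{(k-t)}{k}\cup\{\overline{\bx}\},\NAE)$. Take $f$ in the former and set $g = \Phi(f)$. Given any $k\times m$ matrix $N$ whose columns lie in $\inn{(k-t)}{k}\cup\{\overline{\bx}\}$, let $\overline{N}$ be its entrywise negation; by the previous paragraph the columns of $\overline{N}$ lie in $\inn{t}{k}\cup\{\bx\}$. The point is that applying $g$ to a row of $N$ equals applying $f$ to the negated (hence corresponding) row of $\overline{N}$, so the tuple obtained by applying $g$ row-wise to $N$ coincides with the one obtained by applying $f$ row-wise to $\overline{N}$; the latter lies in $\NAE$ because $f$ is a polymorphism of $(\inn{t}{k}\cup\{\bx\},\NAE)$. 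Hence $g$ preserves the relations, so $g \in \Pol(\inn{(k-t)}{k}\cup\{\overline{\bx}\},\NAE)$.

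Finally I would close by symmetry. Since negating $\inn{(k-t)}{k}\cup\{\overline{\bx}\}$ returns $\inn{t}{k}\cup\{\bx\}$, the identical argument with the two templates exchanged shows that $\Phi$ also maps $\Pol(\inn{(k-t)}{k}\cup\{\overline{\bx}\},\NAE)$ into $\Pol(\inn{t}{k}\cup\{\bx\},\NAE)$. Because $\Phi$ is an involution, these two well-defined maps are mutually inverse, and therefore $\Phi$ restricts to the claimed bijection. I do not expect a genuine obstacle here: the only step requiring any care is the bookkeeping that applying $\Phi(f)$ row-wise to $N$ is the same as applying $f$ row-wise to $\overline{N}$, which is exactly what transfers negation on the inputs of $f$ to negation on the columns of the input matrix.
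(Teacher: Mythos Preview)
Your argument is correct and is exactly the natural verification one would give for this observation; the paper states it without proof, and your write-up fills in the routine details (that $\Phi$ is an arity-preserving involution, that coordinate-wise negation swaps the two source relations and fixes $\NAE$, and that this transports the polymorphism condition) in the expected way.
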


Observation~\ref{obs:invert} implies that 2-block-symmetry of polymorphisms is preserved when swapping 0's and 1's.

There are eight combinations of the parities of $k$, $t$, and $d$. The case $(k,t,d) \equiv (0,1,1)$ is out of the scope of
Theorem~\ref{thm:gen-t} and is covered by the tractable case of
Theorem~\ref{thm:add}. The case $(k,t,d) \equiv (1,0,0)$ is covered for $d>t$ in Proposition~\ref{prop:d>t-even} and $d<t$ in Proposition~\ref{prop:d<t-even}, and all other cases are covered for $d>t$ in Proposition~\ref{prop:d>t} and $d<t$ in Proposition~\ref{prop:d<t}. By applying Observation~\ref{obs:invert}, we may assume that $d+t \leq k$, which allows a single construction to work in each of these propositions. We start with a brief account of the idea behind the proofs.

 \medskip

  Let $C_k^t$ be the $k \times k$ matrix containing the $k$ cyclic shifts of
  the column $1^t0^{k-t}$. The matrix $C_k^t$ can be used to fill one of the
  coordinate blocks of a 2-block-symmetric function $f$ of arity $2k\pm 1$. For
  example, suppose that $C_k^t$ is used to fill the ``first'' coordinate block.
  It does not matter whether the first block contains the odd or even
  coordinates. Then $f$ depends only on the weights in each row of the
  ``second'' block, since the first block has the same weight in every row. This
  allows $f$ to be analysed as a symmetric (1-block-symmetric) function. 
  
  For each $k$, $t$, and $d$, and for any $f$ such that one of its blocks can be filled by $C_k^t$, we exhibit a set of tableaux for the other
  block that prevents $f$ from being a polymorphism. Suppose we have filled one
  block with $C_k^t$, so that $f$ can now be represented
  as a unary function of the weight on its other block. For any weights
  $w_1,w_2,w_3$, we have at least one of $f(w_1)=f(w_2)$, $f(w_1)=f(w_3)$, and
  $f(w_2)=f(w_3)$. For each pair of weights, we construct a tableau
  where each row of the second block is one of the two weights. Thus we are guaranteed that $f$ will return an
  all-equal tuple and hence not be a polymorphism. 
  
\begin{proposition}\label{prop:d>t} 
  Let $k \geq 3$, $1 \leq t < k$, and $t < d <
  k$ be such that $d+t \leq k$, and $t\equiv k$ or $d\not\equiv t \pmod{2}$. Let $\bx$ be a tuple of weight $d$. Then
  $\Pol(\inn{t}{k}\cup \{\bx\},\NAE)$ does not have 2-block-symmetric functions of
  arity $2k-1$.
\end{proposition}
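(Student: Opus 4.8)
The plan is to follow the strategy outlined in the paragraph preceding the statement. We want to show that no $2$-block-symmetric $f$ of arity $2k-1$ is a polymorphism of $(\inn{t}{k}\cup\{\bx\},\NAE)$. Write the two blocks of $f$ as the odd coordinates (of size $k$) and the even coordinates (of size $k-1$), and by Observation~\ref{obs:tuple} take $\bx=1^d0^{k-d}$. The first step is to fill the size-$k$ block with the columns of the cyclic-shift matrix $C_k^t$: each column of $C_k^t$ is a cyclic shift of $1^t0^{k-t}$, hence lies in $\inn{t}{k}$, and every row of $C_k^t$ has the same weight $t$. Since $f$ is symmetric within each block, once this block is fixed $f$ becomes a function of a single argument, namely the weight $w\in\{0,1,\dots,k-1\}$ present in the row-restriction to the second block; write $g(w)$ for this value. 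So we must build, for each pair of weights we will use, a matrix whose first block is $C_k^t$ and whose second block has every row of one of two prescribed weights, with all $2k-1$ columns lying in $\inn{t}{k}\cup\{\bx\}$, and such that $g$ is forced to output an all-equal tuple, contradicting $f\in\Pol(\cdot,\NAE)$.

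The key combinatorial step is choosing the right three weights $w_1,w_2,w_3$ and the corresponding column structure. By the pigeonhole principle, among $g(w_1),g(w_2),g(w_3)$ two agree, so it suffices to exhibit, for \emph{each} of the three pairs $\{w_i,w_j\}$, a valid matrix whose second-block rows use only weights $w_i$ and $w_j$; whichever pair collides then yields the all-equal (hence non-$\NAE$) output. For a single column of the matrix, if its restriction to the first block is one of the $C_k^t$ columns (weight $t$ in that block) and its restriction to the second block has weight $s$, then the whole column has weight $t+s$; for it to lie in $\inn{t}{k}$ we need $s=0$, and for it to equal $\bx=1^d0^{k-d}$ we need $t+s=d$, i.e. $s=d-t>0$, together with the columns being aligned so the $1$'s form a prefix. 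The natural choice of weights is therefore built around $0$ and $d-t$: I expect to use second-block weights drawn from $\{0,\,d-t,\,\text{and one more value}\}$ — the arithmetic here is where the hypotheses $d+t\le k$ (so the prefix $1^d$ fits, using the $t$ ones already supplied by $C_k^t$ plus $d-t$ more from a controlled set of second-block columns) and the parity condition $t\equiv k$ or $d\not\equiv t\pmod 2$ enter: they guarantee that the desired weight-$0$ and weight-$(d-t)$ rows of the second block can be realized simultaneously across the $k$ rows, i.e. that a $k\times(k-1)$ $0/1$ matrix exists with each column having weight $0$ or $d-t$ (as dictated by whether that column of the full matrix is an $\inn{t}{k}$ tuple or a copy of $\bx$) and each row having weight $w_i$ or $w_j$. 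Equivalently, if $a$ columns of the second block are "$\bx$-columns" (weight $d-t$) and $k-1-a$ are zero columns, then every row weight is either $0$ (impossible unless $a=0$) — so more carefully one mixes cyclic shifts so that rows see either all-zero or exactly the right overlap; this is the place to do the bookkeeping carefully rather than here.

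The main obstacle, and the crux of the proof, is precisely this realizability: showing that for each of the three weight-pairs one can simultaneously (i) keep all $2k-1$ columns inside $\inn{t}{k}\cup\{\bx\}$, which forces each column's second-block weight to be $0$ or $d-t$ and its first-block part to be the matching piece of a $C_k^t$ shift, and (ii) make every row of the second block have one of only two weights. The parity hypothesis is what makes a consistent global assignment possible — for instance when $d\equiv t\pmod 2$ but $k\equiv t\pmod 2$, certain counting identities on the number of $1$'s in the matrix go through that would otherwise fail. I would handle this by writing the second block explicitly as a block of $C_k^{d-t}$-type cyclic shifts for the $\bx$-columns and zero columns otherwise, then computing the resulting row weights and checking they collapse to two values; the constraint $d+t\le k$ ensures $C_k^{d-t}$ makes sense ($d-t<k$) and that the $1$'s of an $\bx$-column coming partly from $C_k^t$ and partly from this block can be arranged into a genuine prefix $1^d$ after a row permutation, which is harmless by Observation~\ref{obs:tuple}. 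Once the three matrices are in hand, the pigeonhole argument closes the proof immediately.
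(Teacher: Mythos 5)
There is a genuine gap, and it starts with a structural misreading of the polymorphism matrix. For an arity-$(2k-1)$ polymorphism, the input matrix is $k\times(2k-1)$: each of the $2k-1$ \emph{columns} is a complete $k$-tuple of the relation $\inn{t}{k}\cup\{\bx\}$, and the two blocks of the 2-block-symmetric function partition the \emph{columns} (the arguments of $f$), not the rows. Your first half sets this up correctly (fill the size-$k$ block with $C_k^t$, reduce $f$ to a unary function $g$ of the row weight in the other block, pigeonhole over three weights), but the key combinatorial step then treats a single column as having a ``restriction to the first block'' of weight $t$ and a ``restriction to the second block'' of weight $s$, concluding that second-block columns must carry weight $0$ (for $\inn{t}{k}$ tuples) or $d-t$ (for $\bx$). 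That constraint is wrong: every column of the second block is itself a full $k$-tuple, so it has weight $t$ or weight $d$, and what must be controlled is the weight of each \emph{row} restricted to the $k-1$ second-block columns. With $k-1$ columns each of weight $t$ or $d>t$, the total weight of that block is at least $t(k-1)$, so row weights near $0$ or $d-t$ are not in general realizable; the achievable and useful targets cluster around $t$ (the proof in the paper uses the three row weights $t-1$, $t$, $t+1$).

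Even granting the corrected frame, the entire content of the proposition is the explicit construction of a valid second-block tableau for \emph{each} of the three weight pairs, and this is precisely what you defer (``this is the place to do the bookkeeping carefully rather than here''). In particular, the parity hypothesis ($t\equiv k$ or $d\not\equiv t \pmod 2$) is never actually used: in the real argument it enters for the pair of weights differing by two, where the rows of intermediate weight must be paired up and 1's exchanged between them, which requires their number to be even --- either $k-t$ is even (when $t\equiv k$), or after replacing one column by $\bx$ the remaining count $k-d$ is even. Likewise, the pair $\{t,t+1\}$ requires inserting $r=\lceil t/(d-t)\rceil$ copies of $\bx$ and a counting argument bounding the average column weight in the top $d$ rows (this is where $d+t\le k$ is used) to show that all rows can simultaneously be given weight $t$ or $t+1$. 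Since none of these constructions is supplied, and the sketch of how to produce them rests on the incorrect column constraint above, the proposal does not establish the proposition.
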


\begin{proof}
  Let $f$ be a 2-block-symmetric function of arity $2k-1$. We will show that $f$
  is not a polymorphism of $(\inn{t}{k}\cup\{\bx\},\NAE)$. Let the odd block
  contain the tableau $C_k^t$ and denote by
  $C_k^{t-}$ the tableau obtained from $C_k^{t}$ by removing its last column. We describe how to construct the even
  block with $k-1$ columns in the cases below. We use $t-1$, $t$, and $t+1$ as our three weights.

  \noindent \textbf{Case 1}: weights $t-1$ and $t$. 

  \noindent The even tableau is $C_k^{t-}$. Thus each row in the even block has weight
  either $t-1$ or $t$. If $f(t-1)=f(t)$ then $f$ returns an all-equal tuple
  $0^k$ or $1^k$, so $f\not\in\Pol(\inn{t}{k}\cup\{\bx\},\NAE)$.

  \noindent \textbf{Case 2}: weights $t-1$ and $t+1$.

  \noindent We take $C_k^{t-}$  and replace some of the $\inn{t}{k}$ tuples with
  $\bx$ as necessary. 

  \noindent \textbf{Case 2a}: $t \equiv k \pmod{2}$.

  \noindent The tableau $C_k^{t-}$ has an even number $k-t$ of rows with weight
  $t$. These can be paired up and 1's exchanged so that each row has weight
  either $t-1$ or $t+1$. In particular, in the columns $t+1,t+3,\ldots,k-1$, we
  swap the values in the pairs of rows $(t,t+1),(t+2,t+3),\ldots,(k-2,k-1)$,
  respectively. An illustration is given in Figure~(\ref{fig:case2a}).

 \begin{figure}[hbt]
  \centering
  \begin{minipage}[b]{.4\linewidth}
  \centering
    \[ 
      \left(\begin{array}{cccccccc}
        1&0&0&0&0&0&0&1 \\ 
        1&1&0&0&0&0&0&0 \\
        1&1&1&\mathbf{1}&0&0&0&0 \\
        0&1&1&\mathbf{0}&0&0&0&0 \\ 
        0&0&1&1&1&\mathbf{1}&0&0 \\
        0&0&0&1&1&\mathbf{0}&0&0 \\
        0&0&0&0&1&1&1&\mathbf{1} \\
        0&0&0&0&0&1&1&\mathbf{0} \\
        0&0&0&0&0&0&1&1 \\
    \end{array}\right)
   \] 
  \subcaption{Case 2a with $t=3$.\label{fig:case2a}}
  \end{minipage}
  \qquad \qquad
  \begin{minipage}[b]{.4\linewidth}
     \centering
    \[ 
  \left(\begin{array}{cccccccc}
        1&0&0&0&0&0&0&0 \\ 
        1&\mathbf{0}&0&0&0&0&0&0 \\
        1&1&1&0&0&0&0&0 \\
        1&0&1&1&0&0&0&0 \\ 
        1&0&0&1&1&0&0&0 \\
        0&\mathbf{1}&0&0&1&1&0&0 \\
        0&0&0&0&0&1&1&\mathbf{1} \\
        0&0&0&0&0&0&1&\mathbf{0} \\
        0&0&0&0&0&0&0&1 \\
    \end{array}\right)
    \] 
    \subcaption{Case 2b with $t=2$ and $d=5$.\label{fig:case2b}}
  \end{minipage}
  \caption{Example with $k=9$ for $f(t-1)=f(t+1)$. Swapped values in bold.}
  \end{figure}

  \noindent \textbf{Case 2b}: $(k,t,d) \equiv (0,1,0) $ or $ (1,0,1) \pmod{2}$.
  
  \noindent The tableau $C_k^{t-}$ has an odd number $k-t$ of rows with weight
  $t$. We replace the first column with $\bx$. This adds $d-t$ 1's to the first
  column, and the $d-t$ rows with the added 1's now have weight $t+1$. There
  remains an even number $k-d$ of rows of weight $t$, which can be paired up
  to exchange 1's and achieve weight $t-1$ or $t+1$ in each row. In particular,
  we swap the values at positions $(t,2)$ and $(d+1,2)$, and then in the columns
  $d+3,d+5,\ldots,k-1$, we swap the values in the pairs of rows
  $(d+2,d+3),(d+4,d+5),\ldots,(k-2,k-1)$, respectively. An illustration is given
  in Figure~(\ref{fig:case2b}).

  Cases 2a and 2b cover all possible parities of $k$, $t$, and $d$ under the
  proposition's assumptions. In both cases, each row in the even block has weight either $t-1$ or $t+1$. If $f(t-1)=f(t+1)$ then $f$ returns an all-equal tuple, so $f\not\in\Pol(\inn{t}{k}\cup\{\bx\},\NAE)$.

 \noindent  \textbf{Case 3}: weights $t$ and $t+1$.

  We first give a general description of the tableau, and then derive values for its parameters. We place the tuple $\bx$ in the first $r$ columns and fill the remaining $k-1-r$ columns with $\inn{t}{k}$ tuples that have specific behaviours in the upper $d$ rows and  lower $k-d$ rows. Our goal is to distribute the weight of the $\inn{t}{k}$ tuples between these two blocks of rows so that every row in the full tableau has weight $t$ or $t+1$. 
  
  Denoting by $a$ the average column weight within the upper group of rows, we place either $b$ or $b+1$ 1's from each $\inn{t}{k}$ tuple in the upper block, where $b$ is an integer close to $a$. More precisely, to fill the upper $d$ rows, we use $0 \leq s \leq k-1-r$ columns of weight $b$ and $k-1-r-s$ columns of weight $b+1$, and in the lower block of rows, we use $s$ columns of weight $t-b$ and $k-1-r-s$ columns of weight $t-(b+1)$, respectively, so that the full columns are $\inn{t}{k}$ tuples.

  We now describe the position of the 1's in the upper group of rows; the construction for the lower group is analogous. We fill columns with 1's from top to bottom, starting at the left-most column, and moving to the right after placing a column's quota of 1's (either $b$ or $b+1$). The order of the weight $b$ and $b+1$ tuples does not matter. When moving right to the next column, we continue placing 1's in the row immediately below the lowest row containing a 1 in the previous column. Once we reach the bottom of the group of rows, we wrap around to the top and continue in this way.

  A $k \times (k-1)$ tableau containing only $\inn{t}{k}$ tuples needs at least $t$ more 1's to achieve weight $\geq t$ in each row. Each time we replace a $\inn{t}{k}$ tuple with $\bx$, the
  tableau gains $d-t$ 1's, and therefore at least $r=\left\lceil\frac{t}{d-t}\right\rceil$ occurrences of $\bx$ are necessary. This turns out to be sufficient. The remainder of the proof is devoted to showing that there exist $a$, $b$, and $s$ which allow our construction to work. 
  
  A crucial observation connecting the column and row weights in our tableau is that within each block of rows, the weight between rows varies by at most one. It therefore suffices for the \emph{average} row weight in each block to be between $t$ and $t+1$. 

  To achieve average row weight at least $t$, $a$ must be such that the total weights in the upper and lower blocks are at least $d(t-r)$ and $t(k-d)$, respectively. This is guaranteed when both $a(k-1-r) \geq d(t-r)$ and $(t-a)(k-1-r) \geq t(k-d)$, or equivalently,
  
  \begin{equation}
    \label{eqn:first}
  \frac{d(t-r)}{k-1-r}\ \leq\ a\ \leq\ \frac{t(d-r-1)}{k-1-r}.
  \end{equation}

  To achieve average row weight at most $t+1$, $a$ must be such that the total weights in the upper and lower blocks are at most $d(t+1-r)$ and $(k-d)(t+1)$, respectively. This is guaranteed when both $a(k-1-r) \leq d(t+1-r)$ and $(t-a)(k-1-r) \leq (k-d)(t+1)$, or equivalently,  

  \begin{equation}
      \label{eqn:second}
    \frac{t(d-r-1)-k+d}{k-1-r}\ \leq\ a\ \leq\ \frac{d(t-r+1)}{k-1-r}.
  \end{equation}

  Finally, to ensure that each block of rows is tall enough to accommodate the 1's specified in the construction, it suffices that $0 \leq a \leq d$ and $0 \leq t-a \leq k-d$, which together are equivalent to $\max(0,d+t-k) \leq a \leq \min(d,t)$. By our assumptions, this reduces to $0 \leq a \leq t$.

  We now show that these inequalities can all be simultaneously satisfied. In~\ref{eqn:first}, the upper bound is at least the lower bound if and only if $r \geq \frac{t}{d-t}$, which holds for our choice of $r$, and in~\ref{eqn:second}, the upper bound is at least the lower bound if and only if $r \leq  \frac{t}{d-t} +\frac{k}{d-t}$, which holds since
  $\frac{k}{d-t}\geq 1$. Exchanging the upper/lower bound pairs in~\ref{eqn:first} and~\ref{eqn:second} results in two pairs of inequalities on $a$ that are always satisfied.   Finally, for $0 \leq a \leq t$, it suffices that at least one of the lower bounds is nonnegative, and at least one of the upper bounds is at most $t$. The lower bound in~\ref{eqn:first} is nonnegative if and only if $t \geq r$, which always holds, and the upper bound is at most $t$ if and only if $d \leq k$, which also always holds. 

  Therefore there exists $0 \leq a \leq t$ satisfying~\ref{eqn:first} and~\ref{eqn:second}, and since these inequalities are not strict, we can take $a$ to be rational with denominator $k-1-r$. Let $b=\lfloor a \rfloor$. Recalling that $s$ is the number of columns of weight $b$ in the upper block, computing the total weight in the upper block gives $sb+(k-1-r-s)(b+1)=a(k-1-r)$, so that $s=(k-1-r)(b+1-a)$. This is an integer since $a$ is a fraction with denominator $k-1-r$. As a sanity check, note that if $a=b$ or $a=b+1$, then $s=k-1-r$ or $s=0$, respectively. 
  
  We have shown that there exist $a$, $b$, and $s$  which permit us to construct the tableau with weight $t$ or $t+1$ in each row. Thus if $f(t)=f(t+1)$, then $f$ returns the all-equal tuple $0^k$ or $1^k$, so $f\not\in\Pol(\inn{t}{k}\cup\{\bx\},\NAE)$. This ends
  the proof of Case~3. 
  
  Since we must have at least one of $f(t-1)=f(t)$, $f(t-1)=f(t+1)$, and $f(t)=f(t+1)$, the three cases complete the proof.

  An example with $t=7$, $k=15$, and $d=10$ is illustrated in
  Figure~\ref{fig:d>tcase3}. In this case, we have $r=\left\lceil\frac{t}{d-t}\right\rceil=3$ and we
  get the inequalities $\frac{40}{11} \leq a \leq \frac{42}{11}$ and
  $\frac{37}{11}\leq a \leq\frac{50}{11}$. We take $a=\frac{41}{11}$; the
  values $\frac{40}{11}$ and $\frac{42}{11}$ would also work. Then $b=3$ and
  $s=3$, so in the upper group we have 3 columns of weight $b=3$ and 8 columns
  of weight $b+1=4$. The columns containing $\bx$ are shown in addition to the
  construction on $k-1-r$ columns.

\begin{figure}[hbt]
  \centering
  \[
    \left(\begin{array}{ccc|ccccccccccc}
      1&1&1&1&~&~&1&~&1&~&~&1&~&1 \\ 
      1&1&1&1&~&~&1&~&~&1&~&1&~&~ \\
      1&1&1&1&~&~&1&~&~&1&~&1&~&~ \\
      1&1&1&~&1&~&~&1&~&1&~&~&1&~ \\
      1&1&1&~&1&~&~&1&~&1&~&~&1&~ \\
      1&1&1&~&1&~&~&1&~&~&1&~&1&~ \\
      1&1&1&~&~&1&~&1&~&~&1&~&1&~ \\
      1&1&1&~&~&1&~&~&1&~&1&~&~&1 \\
      1&1&1&~&~&1&~&~&1&~&1&~&~&1 \\
      1&1&1&~&~&~&1&~&1&~&~&1&~&1 \\ \hline
      ~&~&~&1&1&1&~&1&1&~&1&~&1&1 \\
      ~&~&~&1&1&1&~&1&~&1&1&~&1&~ \\
      ~&~&~&1&1&~&1&1&~&1&~&1&1&~ \\
      ~&~&~&1&~&1&1&~&1&1&~&1&~&1 \\
      ~&~&~&~&1&1&1&~&1&~&1&1&~&1 \\
  \end{array}\right)
  \]
  \caption{Example with $t=7$, $k=15$, and $d=10$ from Case 3, for $f(t)=f(t+1)$. \label{fig:d>tcase3}}
  \end{figure}
\end{proof}

\begin{proposition}\label{prop:d<t} 
  Let $k \geq 3$, $1 < t < k$, and $1 \leq d <
  t$ be such that $d+t \leq k$, and $t\equiv k$ or $d\not\equiv t \pmod{2}$. Let $\bx$ be a tuple of weight $d$. Then
  $\Pol(\inn{t}{k}\cup \{\bx\},\NAE)$ does not have 2-block-symmetric functions of
  arity $2k+1$.
\end{proposition}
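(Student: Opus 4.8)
The plan is to mirror the proof of Proposition~\ref{prop:d>t}, this time ruling out $2$-block-symmetric polymorphisms of arity $2k+1$ rather than $2k-1$. So fix a $2$-block-symmetric $f$ of arity $2k+1$; its two coordinate blocks have sizes $k$ and $k+1$. By Observation~\ref{obs:tuple} we may assume $\bx=1^d0^{k-d}$, and we are given $d+t\le k$ together with the parity condition ``$t\equiv k$ or $d\not\equiv t\pmod 2$''. Fill the size-$k$ block with the tableau $C_k^t$; every row of $C_k^t$ has weight exactly $t$, so whenever the size-$k$ block equals $C_k^t$ the value $f$ outputs on a given row depends only on the weight of that row inside the size-$(k+1)$ block. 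Let $g\colon\{0,\dots,k+1\}\to\{0,1\}$ be the resulting unary function. As in Proposition~\ref{prop:d>t} we use the three weights $t-1,t,t+1$ (all lie in $\{0,\dots,k+1\}$ since $2\le t\le k-1$), and for each of the three pairs among them we exhibit a $k\times(k+1)$ tableau whose columns lie in $\inn{t}{k}\cup\{\bx\}$ and in which every row has weight in that pair. As at least one of $g(t-1)=g(t)$, $g(t-1)=g(t+1)$, $g(t)=g(t+1)$ holds, the corresponding tableau (placed in the size-$(k+1)$ block, with $C_k^t$ in the other) forces $f$ to output an all-equal tuple, so $f\notin\Pol(\inn{t}{k}\cup\{\bx\},\NAE)$.

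For the pair $\{t,t+1\}$ take $C_k^t$ together with one extra copy of the column $1^t0^{k-t}$: rows $1,\dots,t$ then have weight $t+1$ and the rest weight $t$. For the pair $\{t-1,t+1\}$, note that a tableau with columns in $\inn{t}{k}\cup\{\bx\}$ using $r$ copies of $\bx$ has total weight $(k+1)t-r(t-d)$, so it can have exactly $a$ rows of weight $t-1$ and $k-a$ of weight $t+1$ precisely when $2a=(k-t)+r(t-d)$; the right-hand side is automatically nonnegative (as $k-t\ge d\ge 1$), and the parity hypothesis lets us choose $r\in\{0,1\}$ making it even ($r=0$ if $k\equiv t$, and $r=1$, which is forced by $d\not\equiv t$ when $k\not\equiv t$). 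Taking $C_k^t$, a single copy of $\bx$ if $r=1$, and one extra weight-$t$ column, we then swap $1$'s between disjoint pairs of the remaining weight-$t$ rows to push them to weight $t-1$ or $t+1$, exactly as in Case~2 of Proposition~\ref{prop:d>t}; all such swaps take place inside $\inn{t}{k}$ columns, so those columns remain weight-$t$ tuples and hence stay in the relation.

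The pair $\{t-1,t\}$ is the analogue of the difficult Case~3 of Proposition~\ref{prop:d>t}. Since $k+1$ weight-$t$ columns have average row weight $t+t/k>t$, we drag the total down with $r=\lceil t/(t-d)\rceil\ge 1$ copies of $\bx=1^d0^{k-d}$ (these add $r$ to each of rows $1,\dots,d$ and nothing below), and fill the remaining $k+1-r$ columns with weight-$t$ tuples, each devoting roughly $a$ of its $t$ ones to the top $d$ rows and $t-a$ to the bottom $k-d$ rows (both blocks are nonempty and large enough because $d+t\le k$). Imposing that the average weight within each of the two row-blocks lie in $[t-1,t]$ produces, exactly as in Proposition~\ref{prop:d>t}, two pairs of linear inequalities on $a$; one checks they are simultaneously satisfiable by a rational $a\in[0,d]$ with denominator $k+1-r$, the essential conditions reducing to $t/(t-d)\le r\le(t+k)/(t-d)$ (true for our $r$ since $t-d\le k$) and $r\le d+1$ (equivalent to $d<t$), while the bottom-block fit needs only $t-\lfloor a\rfloor\le k-d$, which follows from $d+t\le k$. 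Putting $b=\lfloor a\rfloor$ and $s=(k+1-r)(b+1-a)$ gives an integer number of weight-$b$ versus weight-$(b+1)$ columns in the top block, and the greedy top-to-bottom, left-to-right (wrap-around) placement of $1$'s from Proposition~\ref{prop:d>t} makes the weights of the top rows pairwise differ by at most one, hence all equal to $t-1$ or $t$; the bottom block is filled symmetrically, so every row of the $k\times(k+1)$ tableau has weight $t-1$ or $t$.

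The main obstacle, as in Proposition~\ref{prop:d>t}, is the bookkeeping for the pair $\{t-1,t\}$: verifying that the interval for $a$ is nonempty and meets $[0,d]$ in a rational with denominator $k+1-r$, and that the greedy filling really keeps the row weights within one of each other in each block. This is a mirror image of the Case~3 computation in Proposition~\ref{prop:d>t}, with ``$\bx$ adds weight'' replaced by ``$\bx$ removes weight'' and the block size $k-1$ replaced by $k+1$, so it is routine but must be done carefully; a secondary point is to confirm that the parity adjustment in the $\{t-1,t+1\}$ case uses precisely the hypothesis $t\equiv k$ or $d\not\equiv t$.
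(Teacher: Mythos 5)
Your proposal is correct and follows essentially the same route as the paper's proof: $C_k^t$ in the size-$k$ block, the three weights $t-1$, $t$, $t+1$, the tableau $C_k^{t+}$ for the pair $\{t,t+1\}$, a parity-based pairing-and-swapping argument (with one column replaced by $\bx$ exactly when $k\not\equiv t$, forced by $d\not\equiv t$) for $\{t-1,t+1\}$, and $r=\left\lceil\frac{t}{t-d}\right\rceil$ copies of $\bx$ together with the same average-row-weight inequalities on $a$ for $\{t-1,t\}$. The only slip is phrasing: in the $\{t-1,t+1\}$ case the tableau must have exactly $k+1$ columns, so when $r=1$ the copy of $\bx$ replaces a weight-$t$ column (as in the paper) rather than being added alongside $C_k^t$ and the extra column --- which is clearly what your weight count $2a=(k-t)+r(t-d)$ intends.
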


\begin{proof}
  The proof is similar to the case $d>t$ established in
  Proposition~\ref{prop:d>t}, except that now we can reduce the number of 1's in the
  tableaux by replacing $\inn{t}{k}$ tuples with $\bx$. 
  We place the tableau $C_k^t$ in the even coordinates, so that there
  are $k+1$ columns to be filled in the odd coordinates. As before, we
  give tableaux for the three pairs of weights from $t-1$, $t$, and $t+1$.

  Let $C_k^{t+}$ be the $k \times (k+1)$ matrix $C_k^t$ with an extra column $1^t0^{k-t}$. 

  \noindent \textbf{Case 1}: weights $t$ and $t+1$. 

  \noindent The odd tableau is $C_k^{t+}$, so each row in the odd block has weight either $t$ or $t+1$.

  \noindent \textbf{Case 2}: weights $t-1$ and $t+1$.

  \noindent The tableaux are similar to Case~2 in the proof
  of Proposition~\ref{prop:d>t}. When $t \equiv k$, we
  modify $C_k^{t+}$ in the columns $t+2,t+4,\ldots,k$ by swapping the values in
  the pairs of rows $(t+1,t+2),(t+3,t+4),\ldots,(k-1,k)$, respectively. When
  $(k,t,d) \equiv (0,1,0) $ or $(1,0,1)$, we replace the first column of
  $C_k^{t+}$ with $\bx$, which leaves an even number $k-d$ of rows of weight
  $t$. Therefore in columns $d+2,d+4,\ldots,k$ we swap the values in the pairs
  of rows $(d+1,d+2),(d+3,d+4),\ldots,(k-1,k)$, respectively, to get weight $t-1$ or $t+1$ in each row. 

  \noindent \textbf{Case 3}: weights $t-1$ and $t$.

  \noindent This case is similar to Case 3 in the proof of Proposition~\ref{prop:d>t}. The tableau $C_k^{t+}$ has $t$ rows
  with weight $t+1$ and $k-t$ rows with weight $t$, so we must reduce the total
  weight by at least $t$. Replacing a $\inn{t}{k}$ tuple with $\bx$ reduces
   the weight by $t-d$, which suggests $r=\left\lceil\frac{t}{t-d}\right\rceil$ such replacements.

  Let $a$ be the average column weight in the upper $d$ rows. To achieve average row weight at most $t$, $a$ must be such that the total weights in the upper and lower blocks are at most $d(t-r)$ and $t(k-d)$, respectively. This is guaranteed when both $a(k+1-r) \leq d(t-r)$ and $(t-a)(k+1-r) \leq t(k-d)$, or equivalently,
  
  \begin{equation}
    \label{eqn:first-d<t}
   \frac{t(d-r+1)}{k+1-r} \leq a \leq \frac{d(t-r)}{k+1-r}.
  \end{equation}

  To achieve average row weight at least $t-1$, $a$ must be such that the total weights in the upper and lower blocks are at least $d(t-1-r)$ and $(k-d)(t-1)$, respectively. This is guaranteed when both $a(k+1-r) \geq d(t-1-r)$ and $(t-a)(k+1-r) \geq (k-d)(t-1)$, or equivalently,  

  \begin{equation}
  \label{eqn:second-d<t}
  \frac{d(t-1-r)}{k+1-r}\ \leq\ a\ \leq\ \frac{t(d-r+1)+k-d}{k+1-r}.
  \end{equation}

  Finally, to ensure that each block of rows is tall enough to accommodate the 1's specified by the construction, it suffices that $0 \leq a \leq d$ and $0 \leq t-a \leq k-d$, which together are equivalent to $\max(0,d+t-k) \leq a \leq \min(d,t)$. By our assumptions, this reduces to $0 \leq a \leq d$.

  We now show that these inequalities can all be simultaneously satisfied. In~\ref{eqn:first-d<t}, the upper bound is at least the lower bound if and only if $r \geq \frac{t}{t-d}$, which holds for our choice of $r$. In~\ref{eqn:second-d<t}, the upper bound is at least the lower bound if and only if $r \leq \frac{t}{t-d}+\frac{k}{t-d}$, which holds since $\frac{k}{t-d}\geq 1$. Exchanging the upper/lower bound pairs in~\ref{eqn:first-d<t} and~\ref{eqn:second-d<t} results in two pairs of inequalities on $a$ that are always satisfied. Finally, for $0 \leq a \leq d$, it suffices that at least one of the lower bounds is nonnegative, and at least one of the upper bounds is at most $d$. The lower bound in~\ref{eqn:first-d<t} is nonnegative if and only if $t \geq d+1$, and the upper bound is at most $d$ if and only if $t \leq k+1$, both of which always hold. The rest of the proof follows the same reasoning as in Proposition~\ref{prop:d>t}.
\end{proof}

\begin{proposition}\label{prop:d>t-even}
  Let $k \geq 3$, $1 \leq t < k$, and $t < d < k$ be such that $d+t \leq k$ and $(k,t,d)\equiv (1,0,0) \pmod{2}$. Let $\bx$ be a tuple of weight $d$. Then $\Pol(\inn{t}{k}\cup \{\bx\},\NAE)$ does not have 2-block-symmetric functions of all odd arities.
\end{proposition}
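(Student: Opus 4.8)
The plan is to follow the architecture of Proposition~\ref{prop:d>t} but with a different triple of weights. By Observation~\ref{obs:tuple} we may take $\bx=1^d0^{k-d}$, and by Observation~\ref{obs:invert} we may assume $d+t\le k$, so that $2t<k$ and, since $d>t$ have the same (even) parity, $d\ge t+2$ and $t\le(k-2)/2$; in particular $k\ge 2t+2\ge 6$ and, being odd, $k\ge 7$. Work with arity $2k-1$ and fill one coordinate block with $C_k^t$, so that any $2$-block-symmetric $f$ of arity $2k-1$, applied row-wise to a $k\times(2k-1)$ matrix whose other $k-1$ columns come from $\inn{t}{k}\cup\{\bx\}$, depends only on the weight of each row of that block; call the induced unary function $h:\{0,\dots,k-1\}\to\{0,1\}$. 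If for some pair of attainable row-weights $w,w'$ we can fill the free block with $k-1$ columns from $\inn{t}{k}\cup\{\bx\}$ so that every row has weight $w$ or $w'$, then $h(w)=h(w')$ forces $f$ to output an all-equal tuple, so $f\notin\Pol(\inn{t}{k}\cup\{\bx\},\NAE)$.

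The obstruction specific to the parity case $(k,t,d)\equiv(1,0,0)$ is that the triple $\{t-1,t,t+1\}$ used in Proposition~\ref{prop:d>t} is unavailable: its pair $\{t-1,t+1\}$ consists of two \emph{odd} weights, and filling all $k$ (an odd number of) rows with odd weights forces an odd number of $1$'s in the matrix, whereas any matrix with columns in $\inn{t}{k}\cup\{\bx\}$ has an even number of $1$'s because $t$ and $d$ are even. I would instead use the triple $\{t,t+1,t+2\}$, whose only odd weight is $t+1$, so that no pair is parity-obstructed. For the pair $\{t,t+1\}$ I reuse verbatim the construction of Case~3 of Proposition~\ref{prop:d>t} (it is parity-agnostic), placing $r=\lceil t/(d-t)\rceil\le t/2\le k-1$ copies of $\bx$ and balancing the remaining weight between the top $d$ and bottom $k-d$ rows. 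For the pair $\{t+1,t+2\}$ I run the same machinery with the row-weight floor raised from $t$ to $t+1$: now $r'=\lceil (k+t)/(d-t)\rceil$ copies of $\bx$ are needed, $r'\le k-1$ follows from $d-t\ge 2$ together with $t\le k-2$, and the analysis of the admissible average column weight in the top block is identical up to shifting constants.

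The genuinely new work is the pair $\{t,t+2\}$, whose weights differ by $2$, so the ``average in an interval plus variation at most one'' argument of Proposition~\ref{prop:d>t} no longer pins the row weights down. Here I would partition the $k$ rows into $p$ \emph{high} rows (target weight $t+2$) and $k-p$ \emph{low} rows (target weight $t$), with $p=(r(d-t)-t)/2$ for $r=\lceil t/(d-t)\rceil$; this $p$ is an integer in $[0,k]$ because $t$ and $d-t$ are even and $r(d-t)\in[t,\,t+(d-t))\subseteq[t,2k+t)$. After the $r$ copies of $\bx$ contribute a $1$ to each of the top $d$ rows, the remaining $k-1-r$ columns of weight $t$ must realize prescribed row sums (values in $\{t-r,t+2-r\}$ on the top rows, in $\{t,t+2\}$ on the bottom rows) with column sums $t$, split into a top part $u_c$ and a bottom part $t-u_c$. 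Taking the $u_c$ as equal as possible, I would verify the Gale--Ryser condition for existence of such a $0/1$ matrix; the required slack, e.g.\ each row sum at most $k-1-r$, reduces to inequalities such as $t\le(2k-6)/3$ that follow from $t\le(k-2)/2$ and $k\ge 6$. Finally, $h$ takes a value in $\{0,1\}$ at each of $t,t+1,t+2$, hence agrees on some pair, and the corresponding tableau yields an all-equal output, so no $2$-block-symmetric polymorphism of arity $2k-1$ exists. The main obstacle is precisely this gap-$2$ pair: getting the partition-plus-Gale--Ryser construction to work uniformly in $k,t,d$, and double-checking the (non-strict) integrality and range inequalities for the chosen numbers of copies of $\bx$.
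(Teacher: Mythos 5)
Your overall architecture matches the paper's: fill one block with copies of $C_k^t$, view $f$ as a unary function $h$ of the row weight of the free block, and pigeonhole on a triple of weights, realizing each pair by a tableau; your parity explanation of why $\{t-1,t+1\}$ is unavailable and the shift to consecutive weights also agree with the paper. However, there is a genuine gap in your treatment of the pair $\{t+1,t+2\}$ at the fixed arity $2k-1$. You checked only that $r'=\lceil(k+t)/(d-t)\rceil\le k-1$, i.e.\ that there are enough free columns, but that is not the binding constraint: every column equal to $\bx$ puts a $1$ in each of the $d$ rows of the support of $\bx$, so those rows have weight at least $r'$, and you need $r'\le t+2$. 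Since all $k$ rows must have weight at least $t+1$, the total weight forces at least $\lceil(k+t)/(d-t)\rceil$ copies of $\bx$, and this exceeds $t+2$ for many admissible instances. Concretely, take $(k,t,d)=(7,2,4)$: the free block has $6$ columns, each of weight $2$ or $4$; row weights all in $\{3,4\}$ force a total of at least $21$, and the total is even (all columns have even weight), hence at least $22$, hence at least $5$ columns equal to $\bx$; but then each of the $4$ rows in the support of $\bx$ has weight at least $5>t+2$. The same obstruction occurs for every admissible instance with $d=t+2$ (and many others), so the pair $\{t+1,t+2\}$ is simply not realizable at arity $2k-1$, and if the pigeonhole lands on $h(t+1)=h(t+2)$ your argument gives nothing. (Your $\{t,t+1\}$ case, reused from Case~3 of Proposition~\ref{prop:d>t}, is fine, and the Gale--Ryser treatment of $\{t,t+2\}$ looks plausible, but the proof does not close.)

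This is precisely why the paper does not work at arity $2k-1$: it places $L\geq 1$ copies of $C_k^t$ in one block, giving arity $2Lk+1$ with $Lk+1$ free columns, and uses the weights $Lt$, $Lt+1$, $Lt+2$. With $r=\lceil(k-t)/(d-t)\rceil$ copies of $\bx$ and $L$ chosen large enough (the condition $L\geq\frac{r-1}{t}$ appears in the feasibility analysis), the target row weights $Lt+1,Lt+2$ grow with $L$ while the number of required $\bx$-columns does not, so the support rows are never overloaded and all three pairs can be realized. Since the proposition only asserts the absence of $2$-block-symmetric polymorphisms of \emph{all} odd arities, exhibiting failure at the single (large) odd arity $2Lk+1$ suffices; if you want to salvage your write-up, the cleanest fix is to adopt this rescaling rather than to insist on arity $2k-1$.
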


\begin{proof}
  The parities of $k$, $t$, and $d$ prevent us from using the weights
  $t-1$ and $t+1$ in Case~2, which necessitates a different choice of weights and a slightly more complicated construction for Case 3.
  We take $L\geq 1$ copies of $C_k^t$ in the even block of the tableau, and leave
  $Lk+1$ columns to be filled in the odd block, for a total arity of $2Lk+1$. The
  three weights we use are $Lt$, $Lt+1$, and $Lt+2$, with $L$ determined later. 

  \noindent \textbf{Case 1}: weights $Lt$ and $Lt+1$.

  \noindent We take $L-1$ copies of $C_k^t$ and one copy of $C_k^{t+}$, so that each row has weight $Lt$ or $Lt+1$.

  \noindent \textbf{Case 2}: weights $Lt$ and $Lt+2$.

  \noindent We take $L-1$ copies of $C_k^t$ and one copy of $C_k^{t+}$, leaving $t$ rows of weight $Lt+1$, and since $t$ is even, these rows can be paired and values swapped so that each row has weight $Lt$ or $Lt+2$. In particular, in columns $2,4,\ldots,t$, we swap the values in the pairs of rows $(1,2),(3,4),\ldots,(t-1,t)$, respectively.

  \noindent \textbf{Case 3}: weights $Lt+1$ and $Lt+2$.

  Let $r=\left\lceil\frac{k-t}{d-t}\right\rceil$ and let $a$ be the average column weight in the upper $d$ rows. To achieve average row weight at least $Lt+1$, $a$ must be such that the total weights in the upper and lower blocks are at least $d(Lt+1-r)$ and $(Lt+1)(k-d)$, respectively. This is guaranteed when both $a(Lk+1-r) \geq d(Lt+1-r)$ and $(t-a)(Lk+1-r) \geq (Lt+1)(k-d)$, or equivalently,
  
  \begin{equation}
    \label{eqn:first-even-d>t}
  \frac{d(Lt+1-r)}{Lk+1-r}\ \leq\ a\ \leq\ \frac{t(Ld+1-r)-k+d}{Lk+1-r}.
  \end{equation}

  To achieve average row weight at most $Lt+2$, $a$ must be such that the total weights in the upper and lower blocks are at most $d(Lt+2-r)$ and $(k-d)(Lt+2)$, respectively. This is guaranteed when both $a(Lk+1-r) \leq d(Lt+2-r)$ and $(t-a)(Lk+1-r) \leq (k-d)(Lt+2)$, or equivalently,  

  \begin{equation}
      \label{eqn:second-even-d>t}
      \frac{t(Ld+1-r)-2(k+d)}{Lk+1-r}\ \leq\ a\ \leq\ \frac{d(Lt+2-r)}{Lk+1-r}.
  \end{equation}

  Finally, to ensure that each block of rows is tall enough to accommodate the 1's specified in the construction, it suffices that $0 \leq a \leq d$ and $0 \leq t-a \leq k-d$, which together are equivalent to $\max(0,d+t-k) \leq a \leq \min(d,t)$. By our assumptions, this reduces to $0 \leq a \leq t$.

  We now show that these inequalities can all be simultaneously satisfied. In~\ref{eqn:first-even-d>t}, the upper bound is at least the lower bound if and only if $r \geq \frac{k-t}{d-t}$, which holds for our choice of $r$, and in~\ref{eqn:second-even-d>t}, the upper bound is at least the lower bound if and only if $r \leq  \frac{k-t}{d-t} +\frac{k}{d-t}$, which holds since
  $\frac{k}{d-t}\geq 1$. Exchanging the upper/lower bound pairs in~\ref{eqn:first-even-d>t} and~\ref{eqn:second-even-d>t} results in two pairs of inequalities on $a$ that are always satisfied. Finally, for $0 \leq a \leq t$, it suffices that at least one of the lower bounds is nonnegative, and at least one of the upper bounds is at most $t$. The lower bound in~\ref{eqn:first-even-d>t} is nonnegative if and only if $L \geq \frac{r-1}{t}$, and the upper bound is at most $t$ if and only if $L \geq -\frac{1}{t}$, so it suffices to take $L \geq \frac{r-1}{t}$. The rest of the proof follows the same reasoning as in Proposition~\ref{prop:d>t}. 
\end{proof}

\begin{proposition}\label{prop:d<t-even}
  Let $k \geq 3$, $1 < t < k$, and $1 \leq d < t$ be such that $t+d\leq k$ and $(k,t,d)\equiv (1,0,0) \pmod{2}$. Let $\bx$ be a tuple of weight $d$. Then $\Pol(\inn{t}{k}\cup
  \{\bx\},\NAE)$ does not have 2-block-symmetric functions of all odd arities.
\end{proposition}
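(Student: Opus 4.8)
The plan is to mirror the proofs of Propositions~\ref{prop:d>t} and~\ref{prop:d>t-even}, with the role of $\bx$ reversed: since now $d<t$, replacing an $\inn{t}{k}$ column by $\bx$ \emph{removes} $t-d$ ones rather than adding ones. By Observation~\ref{obs:tuple} we may take $\bx=1^d0^{k-d}$, and we keep the hypothesis $d+t\le k$. First I would explain why the weight triple $\{t-1,t,t+1\}$ used in Proposition~\ref{prop:d<t} is unavailable here: a tableau with every row of weight $t-1$ or $t+1$ would need to split the $k-t$ rows of weight $t$ in $C_k^{t+}$ into these two weights, but $k-t$ is odd (as $k$ is odd and $t$ even), and replacing an $\inn{t}{k}$ column by $\bx$ alters the number of ones by the even amount $t-d$, so no such modification corrects the parity.

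To circumvent this, I would take $L\ge 1$ copies of $C_k^t$ in the even coordinate block, so the arity is $2Lk+1$; as in Proposition~\ref{prop:d>t-even} this block has constant row-weight $Lt$, hence any $2$-block-symmetric $f$ of this arity acts on each row as a single unary function $g$ of the weight of that row restricted to the $Lk+1$ odd coordinates. The three weights will be $Lt-2$, $Lt-1$, $Lt$ — all at most $Lt$, since columns from $\inn{t}{k}\cup\{\bx\}$ have weight at most $t$ and so cannot push every row above the value $Lt$ attained by the copies of $C_k^t$. For each of the three pairs I would build a $k\times(Lk+1)$ tableau, with columns from $\inn{t}{k}\cup\{\bx\}$, every row of which has weight in that pair; since two of $g(Lt-2),g(Lt-1),g(Lt)$ agree, the corresponding tableau makes $f$ output an all-equal tuple, so $f\notin\Pol(\inn{t}{k}\cup\{\bx\},\NAE)$ and the proposition holds already for any single admissible arity $2Lk+1$.

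For the consecutive pairs $\{Lt-1,Lt\}$ and $\{Lt-2,Lt-1\}$ I would start from $(L-1)$ copies of $C_k^t$ together with one copy of $C_k^{t+}$ and replace $r$ of the $\inn{t}{k}$ columns by $\bx$, choosing $r$ minimal with $r(t-d)$ landing in the relevant window (of width at least $k\ge t-d$, hence containing an integer), then distribute the surviving ones between an upper block of $d$ rows and a lower block of $k-d$ rows so that within-block row-weights vary by at most one; the controlling inequalities on the average upper-block column-weight are the direct analogues of~\ref{eqn:first-d<t} and~\ref{eqn:second-d<t} (with the correction terms playing the role of those in~\ref{eqn:first-even-d>t} and~\ref{eqn:second-even-d>t}), and they are simultaneously solvable once $L$ exceeds a threshold depending only on $r$, just as in Proposition~\ref{prop:d>t-even}. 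For the gap-$2$ pair $\{Lt-2,Lt\}$ I would first construct, by this same method, a tableau with every row of weight $Lt-1$ or $Lt$; the number $j$ of rows of weight $Lt-1$ is automatically even, because the total number of ones $kLt-j$ has the same (even) parity as the baseline, since $t$ and $t-d$ are even, so those $j$ rows can be paired and a single one swapped between the two rows of each pair inside a suitable $\inn{t}{k}$ column — which remains an $\inn{t}{k}$ column, being symmetric — leaving every row in $\{Lt-2,Lt\}$, exactly as in Case~2 of Proposition~\ref{prop:d>t-even}. The main obstacle is, as in the earlier propositions, the feasibility bookkeeping for these inequality systems and the accompanying choice of $L$; once that is settled the proof finishes as above, since only one odd arity with no $2$-block-symmetric polymorphism is required.
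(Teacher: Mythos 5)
Your plan is, in outline, the paper's own strategy for this case: fill one block with $L$ copies of $C_k^t$, work with the three weights $Lt-2$, $Lt-1$, $Lt$, handle the gap-two pair by pairing rows of weight $Lt-1$ and swapping entries within columns (using that $t$ is even), and handle the remaining pairs by the average-column-weight construction with $\bx$-replacements. The paper does this with arity $2Lk-1$ (free block of $Lk-1$ columns), so that the pair $\{Lt,Lt-1\}$ is realized explicitly by $L-1$ copies of $C_k^t$ plus $C_k^{t-}$ with no use of $\bx$, the pair $\{Lt,Lt-2\}$ by concretely specified swaps on that same explicit tableau, and only the pair $\{Lt-1,Lt-2\}$ needs the inequality machinery, with $r=\lceil\frac{k-t}{t-d}\rceil$ and $L\geq\frac{r+2}{t}$.

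Two steps in your version are genuinely incomplete rather than routine. First, the feasibility analysis is the substance of the proof and you only assert it: because you put the free block on $Lk+1$ columns, even the pair $\{Lt-1,Lt\}$ requires $\bx$-replacements and its own system of constraints, and none of your three systems coincides literally with~\ref{eqn:first-d<t}--\ref{eqn:second-d<t} or with~\ref{eqn:first-even-d<t}--\ref{eqn:second-even-d<t}; the analogues (with your $r$'s and your weight windows) must actually be written down and shown compatible with $0\leq a\leq d$ for a suitable threshold on $L$, which is exactly what the paper's Case 3 consists of. Second, for the pair $\{Lt-2,Lt\}$ you pair up the rows of weight $Lt-1$ (your parity argument that their number is even is correct) and swap a one between the two rows of a pair ``inside a suitable $\inn{t}{k}$ column''; such a column exists only if the two rows differ in some $\inn{t}{k}$ column, and since your base tableau comes from the non-explicit averaging construction you have not ruled out a pair of identical rows, for which no single-column swap works. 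The paper sidesteps this by performing the swaps on the fully explicit tableau built from $C_k^{t-}$, listing the exact positions. Both issues are fixable---most easily by adopting the paper's block size, which makes two of the three pairs explicit and leaves a single averaging case---but as written the proposal defers precisely the parts that carry the proof. (A minor point: the aside that weight-$\leq t$ columns ``cannot push every row above $Lt$'' is not accurate with $Lk+1$ columns, though nothing in your argument depends on it.)
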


\begin{proof}
  We place $L\geq 1$ copies of $C_k^t$ in the odd block of our tableau, leaving $Lk-1$ columns to be filled in the even block for a total arity of $2Lk-1$. The three weights used are $Lt$, $Lt-1$, and $Lt-2$, with $L$ determined later. 

  \noindent \textbf{Case 1}: weights $Lt$ and $Lt-1$.

  \noindent We take $L-1$ copies of $C_k^t$ and one copy of $C_k^{t-}$, so that each row has weight $Lt$ or $Lt-1$. 

  \noindent \textbf{Case 2}: weights $Lt$ and $Lt-2$.

  \noindent We take $L-1$ copies of $C_k^t$ and one copy of $C_k^{t-}$, leaving $t$ rows of weight $Lt-1$, and since $t$ is even, these rows can be paired and values swapped so that each row has weight $Lt$ or $Lt-2$. In particular, in columns $2,4,\ldots,t-2$, we swap the values in the pairs of rows $(1,2),(3,4),\ldots,(t-3,t-2)$, respectively. Finally, in column $k-1$, we swap the values in rows $k$ and $t-1$. 

  \noindent \textbf{Case 3}: weights $Lt-1$ and $Lt-2$.

  Let $r=\left\lceil\frac{k-t}{t-d}\right\rceil$ and let $a$ be the average column weight in the upper $d$ rows. To achieve average row weight at most $Lt-1$, $a$ must be such that the total weights in the upper and lower blocks are at most $d(Lt-1-r)$ and $(Lt-1)(k-d)$, respectively. This is guaranteed when both $a(Lk-1-r) \leq d(Lt-1-r)$ and $(t-a)(Lk-1-r) \leq (Lt-1)(k-d)$, or equivalently,
  
  \begin{equation}
    \label{eqn:first-even-d<t}
    \frac{t(Ld-1-r)+k-d}{Lk-1-r}\ \leq\ a\ \leq\ \frac{d(Lt-1-r)}{Lk-1-r}.
  \end{equation}

  To achieve average row weight at least $Lt-2$, $a$ must be such that the total weights in the upper and lower blocks are at least $d(Lt-2-r)$ and $(k-d)(Lt-2)$, respectively. This is guaranteed when both $a(Lk-1-r) \geq d(Lt-2-r)$ and $(t-a)(Lk-1-r) \geq (k-d)(Lt-2)$, or equivalently,  

  \begin{equation}
      \label{eqn:second-even-d<t}
      \frac{d(Lt-2-r)}{Lk-1-r}\ \leq\ a\ \leq\ \frac{t(Ld-1-r)+2(k-d)}{Lk-1-r}.
  \end{equation}

  Finally, to ensure that each block of rows is tall enough to accommodate the 1's specified in the construction, it suffices that $0 \leq a \leq d$ and $0 \leq t-a \leq k-d$, which together are equivalent to $\max(0,d+t-k) \leq a \leq \min(d,t)$. By our assumptions, this reduces to $0 \leq a \leq d$.

  We now show that these inequalities can all be simultaneously satisfied. In~\ref{eqn:first-even-d<t}, the upper bound is at least the lower bound if and only if $r \geq \frac{k-t}{t-d}$, which holds for our choice of $r$, and in~\ref{eqn:second-even-d<t}, the upper bound is at least the lower bound if and only if $r \leq  \frac{k-t}{t-d} +\frac{k}{t-d}$, which holds since
  $\frac{k}{t-d}\geq 1$. Exchanging the upper/lower bound pairs in~\ref{eqn:first-even-d<t} and~\ref{eqn:second-even-d<t} results in two pairs of inequalities on $a$ that are always satisfied. Finally, for $0 \leq a \leq d$, it suffices that at least one of the lower bounds is nonnegative, and at least one of the upper bounds is at most $d$. The lower bound in~\ref{eqn:second-even-d<t} is nonnegative if and only if $L \geq \frac{r+2}{t}$, and the upper bound in~\ref{eqn:first-even-d<t} is at most $d$ if and only if $t \leq k$, so it suffices to take $L \geq \frac{r+2}{t}$. The rest of the proof follows the same reasoning as in Proposition~\ref{prop:d>t}. 
\end{proof}

\section{Removing tuples}
\label{sec:rem}

In this section we prove Theorem~\ref{thm:t-in-k} and show how it implies
Theorem~\ref{thm:rem}.

Schaefer's dichotomy theorem (Theorem~\ref{thm:schaefer}) allows us to obtain a
simple description of all $\T$ with $\CSP(\T)$ tractable and $\inn{t}{k}\to\T$.

\begin{proposition}\label{prop:tink}
  Let $k \geq 3$, $1 \leq t < k,$ and suppose that $\inn{t}{k}\to \T$. Then $\CSP(\T)$ is tractable if and only if
\begin{enumerate}
  \item $0^k \in \T$ or $1^k \in \T$, or\label{const}
  \item $t$ is odd, $k$ is even, and $\T=\odd{k}$. \label{allodd}
\end{enumerate}
\end{proposition}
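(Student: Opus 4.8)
The plan is to use Theorem~\ref{thm:schaefer}: $\CSP(\T)$ is tractable iff $\Pol(\T)$ contains one of a constant, $\AND_2$, $\OR_2$, $\MAJ_3$, or $\XOR_3$. So I would go through these five possibilities one at a time, using the hypothesis $\inn{t}{k}\to\T$ (which, since a homomorphism on a one-element-relation structure is just a choice of image tuple, means $\T$ contains some tuple of weight $t'$ for a suitable $t'$ — more carefully, $\inn{t}{k}\to\T$ means every weight-$t$ tuple, and hence by symmetry of $\inn{t}{k}$ all weight-$t$ tuples after permutation, map into $\T$, so $\T$ contains all $k$-tuples of some fixed weight, and actually since the image of the relation must be symmetric by Proposition~\ref{prop:symimage} applied to the identity... let me just say $\T$ contains at least one tuple, and since homomorphisms act coordinatewise and $\inn{t}{k}$ has all weight-$t$ tuples, $\T\supseteq\{$all tuples of weight $t$ for some relabelling$\}$; in the Boolean setting after possibly swapping $0\leftrightarrow1$ we may assume $\T$ contains all weight-$t$ tuples, i.e.\ $\inn{t}{k}\subseteq\T$).

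For the ``if'' direction: case~\ref{const}, if $0^k\in\T$ or $1^k\in\T$ then the corresponding constant function is a polymorphism of $\T$, so $\CSP(\T)$ is tractable; case~\ref{allodd}, if $\T=\odd{k}$ then $\XOR_3\in\Pol(\odd{k})$ by Proposition~\ref{proposition:oddinktract}, so again tractable.

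For the ``only if'' direction, suppose $\CSP(\T)$ is tractable and neither~\ref{const} nor~\ref{allodd} holds; I want a contradiction. First, $\AND_2,\OR_2\notin\Pol(\T)$: applying $\OR_2$ coordinatewise to two weight-$t$ tuples whose supports cover everything (possible when $2t\geq k$; when $2t<k$ a different pair works, or one iterates) forces $1^k\in\T$, contradicting the failure of~\ref{const}; symmetrically for $\AND_2$ and $0^k$ — here I need to be a little careful to produce weight-$t$ tuples of $\T$ whose OR is $1^k$, which requires $t\geq 1$ and is routine. Next, $\MAJ_3\notin\Pol(\T)$: take three weight-$t$ tuples and show $\MAJ_3$ applied to a suitable triple yields $0^k$ or $1^k$ (e.g.\ three copies of the same tuple give that tuple, which is fine, but one can shift supports so the majority pattern forces an all-equal output); again this is a small combinatorial construction, and one should double-check it works for all $1\le t<k$, possibly splitting on whether $t$ is small. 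Finally, the constant functions: $0^k,1^k\notin\T$ is exactly the failure of~\ref{const}, so no constant is a polymorphism. That leaves only $\XOR_3$, so $\XOR_3\in\Pol(\T)$. Then $\T$ is closed under $\XOR_3$; I claim this forces $\T$ to be an affine subspace (coset of a linear code) — indeed closure under $\XOR_3$, i.e.\ $x\oplus y\oplus z$, means $\T$ is closed under the affine combination $x+y+z$ over $\GF(2)$, which for a nonempty set is equivalent to being a coset of a linear subspace. Since $\T$ contains all weight-$t$ tuples, the smallest affine subspace containing them must be contained in $\T$; I then compute this affine hull. Writing $e_i$ for standard basis vectors, $\T$ contains $1^t0^{k-t}$ and all its translates by $e_i\oplus e_j$, so the affine hull is $1^t0^{k-t}+V$ where $V$ is the span of all $e_i\oplus e_j$, namely the even-weight subspace; hence the affine hull is the set of all tuples of weight $\equiv t\pmod 2$. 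For $\T$ to be a proper affine subspace (not all of $\{0,1\}^k$, which would contain $0^k$), we need $t$ odd and $k$ even — if $k$ is odd the even-weight-parity coset is $\{0,1\}^k$-minus-nothing... wait, if $t$ is even the hull contains $0^k$, contradicting failure of~\ref{const}; if $t$ is odd and $k$ is odd then $1^k$ has weight $k\equiv 1\equiv t$, so $1^k$ is in the hull $\subseteq\T$, again contradicting failure of~\ref{const}. So $t$ odd and $k$ even, and then the affine hull is exactly $\odd{k}$, giving $\odd{k}\subseteq\T$; but $\T\subseteq\{0,1\}^k$ and $\T$ is an affine subspace containing $\odd{k}$ which has codimension $1$, so either $\T=\odd{k}$ or $\T=\{0,1\}^k\ni 0^k$ — the latter contradicts failure of~\ref{const}, so $\T=\odd{k}$, i.e.\ case~\ref{allodd} holds, a contradiction.

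The main obstacle I anticipate is the bookkeeping around the normalisation ``$\inn{t}{k}\to\T$ implies (up to swapping $0$ and $1$) $\inn{t}{k}\subseteq\T$'' and the small explicit constructions ruling out $\AND_2$, $\OR_2$, and $\MAJ_3$ uniformly over all $1\le t<k$ and $k\ge 3$ — these are elementary but need a couple of genuinely different sub-cases (small $t$ versus $2t\ge k$, etc.), and one must make sure the constructed tuples really have weight exactly $t$ and really lie in $\T$. The affine-hull computation and the parity argument pinning down $\T=\odd{k}$ is the conceptually cleanest part and I expect it to go through smoothly once closure under $\XOR_3$ is in hand.
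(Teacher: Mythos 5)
Your overall strategy coincides with the paper's: both directions go through Schaefer's theorem (Theorem~\ref{thm:schaefer}), the homomorphism $\inn{t}{k}\to\T$ is normalised (constant maps are impossible once case~\ref{const} fails, and the $0/1$-swap preserves both conditions and tractability, so one may assume $\inn{t}{k}\subseteq\T$), and $\AND_2$, $\OR_2$, $\MAJ_3$ are excluded by applying the would-be polymorphism to weight-$t$ tuples and iterating so that the weight drifts to $0$ or $k$, forcing $0^k$ or $1^k$ into $\T$. Where you genuinely diverge is $\XOR_3$: the paper rules it out by the same explicit tuple-closure computations (its cases (v)--(vii)), with the delicate subcase $t$ odd, $k$ even handled somewhat tersely (the closure there only yields $\odd{k}\subseteq\T$, and one still needs an extra step to contradict $\T\neq\odd{k}$), whereas you use the observation that closure under $\XOR_3$ makes $\T$ an affine subspace over $\mathbb{F}_2$, compute the affine hull of $\inn{t}{k}$ (the tuples of weight $\equiv t \pmod 2$), and then read off the parity constraints and the exact identification $\T=\odd{k}$ from codimension. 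This is precisely the alternative the paper itself mentions in a footnote, and it arguably gives a cleaner and more complete finish for the $t$ odd, $k$ even case. One caveat on your $\MAJ_3$ step: as literally stated (``a suitable triple yields $0^k$ or $1^k$''), a single application cannot work when $k/3 < t < 2k/3$, since the output weight of $\MAJ_3$ on three weight-$t$ tuples lies strictly between $0$ and $k$ there; you must iterate, changing the weight by one per application exactly as the paper does (its constructions give weight $t+1$ when $t\geq 2$ and weight $t-1$ when $t\leq k-2$, one of which always applies since $k\geq 3$), which is consistent with your plan but needs to be made explicit.
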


Observe that Proposition~\ref{prop:tink} in particular implies Proposition~\ref{prop:tinkhard}, NP-hardness of $\CSP(\inn{t}{k})$.

\begin{proof}
  In Case~\ref{const}, $\Pol(\T)$ contains a
  constant function so $\CSP(\T)$ is tractable by Theorem~\ref{thm:schaefer}, and in Case~\ref{allodd}, tractability is given by Proposition~\ref{proposition:oddinktract}. 

  We now turn to hardness. For the rest of the proof, assume that neither
  (\ref{const}) nor (\ref{allodd}) of the proposition statement applies.

  Suppose that $\inn{t}{k} \to \T$ by the function $\phi:\{0,1\}\to \{0,1\}.$ If
  $\phi$ is constant, then either $\T=\{0^k\}$ or $\T=\{1^k\}$
  and we are in case~(\ref{const}), a contradiction. If $\phi(x)=1-x$, note that
  $\phi(\inn{t}{k})$ satisfies conditions~(\ref{const}) and~(\ref{allodd})
  precisely when $\inn{t}{k}$ does, so it suffices to consider only the case
  where $\phi$ is the identity and $\inn{t}{k} \subseteq \T$.

  We show that $\Pol(\T)$ contains none of the functions $\AND_2$, $\OR_2$,
  $\MAJ_3$, and $\XOR_3$ from Theorem~\ref{thm:schaefer}. To accomplish this, we
  assume that one of these functions $f$ is present in $\Pol(\T)$, and then show
  that repeated application of $f$ to a certain set of tuples leads to
  (\ref{const}) or (\ref{allodd}) of the proposition, a contradiction. Recall
  that we denote by $f(R)$ the image of the relation $R$ under $f$.
  We make seven claims below, which together exclude the four functions as
  polymorphisms: case~(\ref{case:and}) covers $\AND_2$, case~(\ref{case:or})
  covers $\OR_2$, (overlapping) cases~(\ref{case:maj1}) and~(\ref{case:maj2})
  cover $\MAJ_3$, and cases~(\ref{case:xor1}), (\ref{case:xor2}),
  and~(\ref{case:xor3}) cover $\XOR_3$. Case~(\ref{case:xor3}) contradicts
  (\ref{allodd}) of the proposition; all others contradict
  (\ref{const}).\footnote{Another way of establishing this result for $\XOR_3$
  is via linear algebra (being closed under $\XOR_3$ is the same as being an
  affine subspace) and for $\MAJ_3$ from the fact that relations closed under
  $\MAJ_3$ are determined by their binary projections.}

  We give more details for case~(\ref{case:and}) for illustration. Taking $\AND_2$ of
  the two tuples, we get
  $\AND_2(1^t0^{k-t},01^t0^{k-t-1})=01^{t-1}0^{k-t}$, a tuple of weight $t-1$.
  By symmetry, we obtain all tuples of weight $t-1$, which is the statement of
  case~(\ref{case:and}). Continuing this way, we
  obtain all tuples of weight $t-2$, $t-3$, etc. until we eventually obtain $0^k$, which gives a contradiction as we assume that
  (\ref{const}) of the proposition does not apply, so $0^k\not\in\T$.

  \begin{enumerate}[i]
    \item \label{case:and} $\inn{(t-1)}{k} \subseteq \AND_2(\inn{t}{k})$

    tuples: $1^t0^{k-t}$ and $01^t0^{k-t-1}$ 

    eventual output: $0^k$

  \item \label{case:or} $\inn{(t+1)}{k} \subseteq \OR_2(\inn{t}{k})$ 

    tuples: $1^t0^{k-t}$ and $01^t0^{k-t-1}$ 

    eventual output: $1^k$

  \item \label{case:maj1} if $t \geq 2$ then $\inn{(t+1)}{k} \subseteq \MAJ_3(\inn{t}{k})$

    tuples: $1^{t-2}0^{k-t-1}110, 1^{t-2}0^{k-t-1}101,$ and $1^{t-2}0^{k-t-1}011$ 

    eventual output: $1^k$

  \item \label{case:maj2} if $t \leq k-2$ then $\inn{(t-1)}{k} \subseteq \MAJ_3(\inn{t}{k})$

    tuples: $1^{t-1}0^{k-t-2}100, 1^{t-1}0^{k-t-2}010,$ and $1^{t-1}0^{k-t-2}001$ 

    eventual output: $0^k$

  \item \label{case:xor1} if $t$ is even, then $\inn{(t-2)}{k} \subseteq \XOR_3(\inn{t}{k})$

    tuples: $1^{t-2}0^{k-t-1}110, 1^{t-2}0^{k-t-1}101,$ and $1^{t-2}0^{k-t-1}011$ 

    eventual output: $0^k$

  \item \label{case:xor2} if $t$ and $k$ are odd, then $\inn{(t+2)}{k} \subseteq \XOR_3(\inn{t}{k})$

    tuples: $1^{t-1}0^{k-t-2}100, 1^{t-1}0^{k-t-2}010,$ and $1^{t-1}0^{k-t-2}001$ 

    eventual output: $1^k$

  \item \label{case:xor3} if $t$ is odd and $k$ is even,\\
      then $\inn{(t+2)}{k} \subseteq \XOR_3(\inn{t}{k})$ if $t<k-2$, \\
      and $\inn{(t-2)}{k} \subseteq \XOR_3(\inn{t}{k})$ if $t>2$.

    tuples: 

    $1^{t-1}0^{k-t-2}100, 1^{t-1}0^{k-t-2}010,$ and $1^{t-1}0^{k-t-2}001$ 

    $1^{t-2}0^{k-t-1}110, 1^{t-2}0^{k-t-1}101,$ and $1^{t-2}0^{k-t-1}011$

    eventual output: all odd  weight tuples
\end{enumerate}
\end{proof}

With Proposition~\ref{prop:tink} in hand, we can prove Theorems~\ref{thm:t-in-k} and~\ref{thm:rem}.

\begin{theorem*}[Theorem~\ref{thm:t-in-k} restated]
  Let $k \geq 3$ and let $\T \subseteq \{0,1\}^k$ be a relation such that $\inn{t}{k} \to \T$ and 
  $\CSP(\T)$ is NP-hard. Then $\PCSP(\inn{t}{k},\T)$ is tractable if and only if
  $\T=\NAE$.
\end{theorem*}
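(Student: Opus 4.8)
We must establish both directions. The ``if'' direction is immediate: if $\T=\NAE$ then $\PCSP(\inn{t}{k},\NAE)$ is tractable by Proposition~\ref{prop:naeeasy}. For the ``only if'' direction I argue the contrapositive: assuming $\inn{t}{k}\to\T$, $\CSP(\T)$ NP-hard, and $\T\neq\NAE$, I show $\PCSP(\inn{t}{k},\T)$ is NP-hard.

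The first move is to normalise $\T$. Since $\CSP(\T)$ is NP-hard, Proposition~\ref{prop:tink} forces $0^k\notin\T$ and $1^k\notin\T$ (hence $\T\subseteq\NAE$) and rules out $\T=\odd{k}$. The homomorphism $\inn{t}{k}\to\T$ cannot be constant (that would place $0^k$ or $1^k$ in $\T$), and if it is $x\mapsto 1-x$ we may swap $0$ and $1$ in the domain of $\T$; this replaces $\T$ by its complement-image, which still satisfies all the same hypotheses and equals $\NAE$ iff the original did. So without loss of generality $\inn{t}{k}\subseteq\T$.

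The heart of the argument is the symmetrisation trick. Let $\T'$ be the largest symmetric relation contained in $\T$. Since $\inn{t}{k}$ is symmetric, Proposition~\ref{prop:firstsym} gives $\PCSP(\inn{t}{k},\T)\equiv_p\PCSP(\inn{t}{k},\T')$; moreover $\inn{t}{k}\subseteq\T'$ (it is itself symmetric and inside $\T$), $\T'\subseteq\NAE$, and $\T'\neq\NAE$ (else $\NAE\subseteq\T\subseteq\NAE$). Thus $(\inn{t}{k},\T')$ is a symmetric Boolean PCSP template with $\inn{t}{k}\subseteq\T'\subsetneq\NAE$, so by Theorem~\ref{thm:symmtrac} it is NP-hard unless $\Pol(\inn{t}{k},\T')$ contains a constant or one of $\OR$, $\AND$, $\XOR$, $\AT$, $\THRo{q}$, or one of their negations. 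It therefore suffices to show that each of these families fails to be contained in $\Pol(\inn{t}{k},\T')$: equivalently, for each family I must exhibit a matrix whose columns are weight-$t$ tuples on which some member of the family outputs a tuple of weight $0$, weight $k$, or the fixed forbidden weight $w_0\in\{1,\dots,k-1\}$ witnessing $\T'\neq\NAE$.

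The cases split into routine and delicate ones. A constant polymorphism would output $0^k$ or $1^k$, which is impossible. For $\OR$, $\AND$, and their negations, at sufficiently large arity $m$ one chooses weight-$t$ tuples $e_{S_1},\dots,e_{S_m}$ whose supports have empty intersection and full union (possible since $1\le t<k$), forcing the value $0^k$ or $1^k$; hence none of these families lies entirely in $\Pol(\inn{t}{k},\T')$. For $\XOR$ and $\overline{\XOR}$ I would use that a symmetric relation closed under $\XOR_3$ is a symmetric affine subspace of $\{0,1\}^k$; enumerating these and imposing ``contains a weight-$t$ tuple'' and ``avoids $0^k,1^k$'' leaves only $\odd{k}$, and this case is ruled out using that $\CSP(\T)$ is NP-hard (via Proposition~\ref{prop:tink}). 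For the threshold family $\THRo{q}$, its negation, and the alternating-threshold family $\AT$ and its negation --- note $\AT$ is exactly what realises $(\inn{t}{k},\NAE)$ --- the plan is, for each forbidden weight $w_0$, to build an explicit tableau of weight-$t$ columns arranged into the odd/even coordinate blocks (in the spirit of the constructions in Section~\ref{sec:t}) on which the relevant member of the family outputs a tuple of weight exactly $w_0$. I expect producing these explicit tableaux for the $\AT$ and $\THRo{q}$ families, and cleanly isolating the affine exception in the $\XOR$ case, to be the main technical obstacle; everything else is bookkeeping on top of Propositions~\ref{prop:tink} and~\ref{prop:firstsym} and Theorem~\ref{thm:symmtrac}. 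Once $\PCSP(\inn{t}{k},\T')$ is shown NP-hard, so is $\PCSP(\inn{t}{k},\T)$ by the equivalence above, completing the contrapositive.
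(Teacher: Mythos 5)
Your skeleton is the same as the paper's: symmetrise the target via Proposition~\ref{prop:firstsym}, invoke the symmetric classification (Theorem~\ref{thm:symmtrac}), and kill each tractable family by exhibiting weight-$t$ columns whose image is $0^k$, $1^k$, or a tuple of the missing weight class $w_0$. Constants and $\OR/\AND$ (and their negations) are fine, and $\THRo{q}$ with $q\neq\frac{t}{k}$ is not actually delicate: the cyclic matrix $C_k^t$ already forces a constant output. But the two places you flag as ``the main technical obstacle'' are precisely where the content of the theorem lies, and you do not supply them. For $\AT$ and $\THR{t}{k}$ (and their negations) the paper does not build new tableaux at all: it cites \cite[Claim~4.6]{BG21} and \cite[Fact~B.3]{fkos19}, which say that these families applied to $\inn{t}{k}$ columns can output \emph{any} tuple of intermediate weight, in particular one of weight $w_0$; the negated families are then dispatched by the observation that $\overline{f}(M)=\bx$ iff $f(M)=\overline{\bx}$, so it suffices to hit weight $k-w_0$ (using an arity-$1$ function when $k-w_0=t$). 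Without these ingredients (or your promised constructions) the proposal omits the crux.

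The $\XOR$ step is moreover logically flawed. What must be refuted is $\XOR\subseteq\Pol(\inn{t}{k},\T')$, a condition on $\XOR_m$ applied to weight-$t$ \emph{columns}, not on $\T'$ being closed under $\XOR_3$; so classifying symmetric affine subspaces is off target (e.g.\ $\NAE$ is not affine, yet $\XOR\subseteq\Pol(\inn{t}{k},\NAE)$ for $t$ odd, $k$ even). Concretely, every odd-arity $\XOR$ of weight-$t$ columns outputs a tuple of weight of parity $t$, so for $t$ odd and $k$ even it lands in $\odd{k}$; hence $\XOR$ is a polymorphism whenever $\odd{k}\subseteq\T'$, and your attempt to dismiss this ``via Proposition~\ref{prop:tink}'' fails because that proposition only excludes $\T=\odd{k}$ exactly, not $\T'\supseteq\odd{k}$ with additional even-weight classes (for which $\CSP(\T)$ can still be NP-hard). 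In that residual case no tableau can exist, since $(\inn{t}{k},\T)$ is then sandwiched by the tractable $\CSP(\odd{k})$ and the PCSP is genuinely tractable; accordingly, the paper excludes $\XOR$ only when $t$ is even (by the explicit parity-swapping modification of $C_k^t$ in Figure~\ref{fig:xora}) or when $t$ is odd, $k$ is even and $\T$ is missing a tuple of \emph{odd} weight (by applying $\XOR_d$ to $C_d^t$ padded with zero rows, Figure~\ref{fig:xorb}), which is the situation that actually arises in the application to Theorem~\ref{thm:rem}. You need to supply these constructions and handle (or explicitly set aside) the case $\odd{k}\subseteq\T'$, rather than claim it is eliminated by NP-hardness of $\CSP(\T)$.
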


\begin{proof}
  By Proposition~\ref{prop:firstsym}, we can assume that $\T$ is symmetric. If
  $\T=\NAE$ then $\PCSP(\inn{t}{k},\T)$ is tractable by
  Proposition~\ref{prop:naeeasy}. Otherwise, we show that $\Pol(\inn{t}{k},\T)$
  does not contain any of the tractable polymorphism families identified in the
  symmetric Boolean PCSP dichotomy (Theorem~\ref{thm:symmtrac}), and therefore $\PCSP(\inn{t}{k},\T)$
  is NP-hard.

  The families we need to rule out are constants, $\OR$, $\AND$, $\XOR$, $\AT,$ and $\THRo{q}$ for $q
  \in \mathbb{Q}$, as well as their negations. We deal first with the
  non-negated families. 
  Since $\CSP(\T)$ is NP-hard, by Proposition~\ref{prop:tink}, we have $0^k
  \not\in \T$ and $1^k \not\in \T$. Hence, $\Pol(\inn{t}{k},\T)$ does not contain constants. 

  Let $C_k^t$ be the $k \times
  k$ matrix containing the $k$ cyclic shifts of the column $1^t0^{k-t}$.
  Then $C_k^t$ prevents the polymorphism families $\OR$, $\AND$, $\XOR$ (if $k$
  is odd), and $\THRo{q}$ for all $q \neq \frac{t}{k}$. The case $q=\frac{t}{k}$
  is ruled out by~\cite[Fact B.3]{fkos19}, and $\AT$ is ruled out by~\cite[Claim~4.6]{BG21}.
  Since $\CSP(\T)$ is NP-hard, by Proposition~\ref{prop:tink},
  it remains to show that for even $k$, $\Pol(\inn{t}{k},\T)$ excludes $\XOR$ when
  $t$ is even, and likewise when $t$ is odd and $\T$ is missing a tuple of odd
  weight. 
    
  Let $k$ and $t$ be even. Applying $\XOR_k$ to the matrix
  $C^t_k$ returns the tuple $0^k$, so applying $\XOR_{k-1}$ to the first $k-1$
  columns of $C^t_k$ returns the last column $1^{t-1}0^{k-t}1$. We can ``fill
  in'' the 0's in the output by swapping 0/1 pairs of values in the input
  matrix. In particular, in the columns $k-1,k-3,\ldots,t+1$, we swap the
  entries in the pairs of rows $(k-1,k-2), (k-3,k-4),\ldots,(t+1,t)$,
  respectively. The resulting $k \times (k-1)$ matrix $M$ then satisfies
  $\XOR_{k-1}(M)=1^k$ and the arity $k-1$ is odd as required. An example with
  swapped values in bold is illustrated in Figure~(\ref{fig:xora}).

\begin{figure}[ht]
   \centering
  \begin{minipage}[b]{.4\linewidth}
  \centering
    \[\XOR_{7}\left(\begin{array}{ccccccc}
    1&0&0&0&0&1&1 \\ 
    1&1&0&0&0&0&1 \\
    1&1&1&0&0&0&0 \\
    1&1&1&1&\mathbf{1}&0&0 \\ 
    0&1&1&1&\mathbf{0}&0&0 \\
    0&0&1&1&1&1&\mathbf{1} \\
    0&0&0&1&1&1&\mathbf{0} \\ 
    0&0&0&0&1&1&1
  \end{array}\right)=
  \begin{array}{c}
    1\\1\\1\\1\\1\\1\\1\\1
  \end{array}
  \]
  \subcaption{$t=4$ and $k=8$.\label{fig:xora}}
  \end{minipage}
  \hfill
  \begin{minipage}[b]{.4\linewidth}
     \centering
    \[ 
\XOR_5\left(\begin{array}{ccccc}
    1&0&0&1&1\\ 
    1&1&0&0&1\\
    1&1&1&0&0\\
    0&1&1&1&0\\ 
    0&0&1&1&1\\ \hline
    0&0&0&0&0
  \end{array}\right)=
  \begin{array}{c}
    1\\1\\1\\1\\1\\0
  \end{array}
    \]
  \subcaption{$t=3$, $k=6$, and $d=5$.\label{fig:xorb}}
  \end{minipage}
  \caption{XOR.}
\end{figure}

Now let $k$ be even, $t$ be odd, and suppose that $\T$ does not contain the tuple
$\bx=1^d0^{k-d}$ of odd  weight $d$. By Observation~\ref{obs:invert}, we can assume without loss of generality that $t<d$. Then $\XOR_d$ applied to the
matrix $C_d^t$ padded with $k-d$ rows of 0's returns $\bx$. 
An illustration is given in Figure~(\ref{fig:xorb}). Therefore
$\XOR\not\subseteq\Pol(\inn{t}{k},\T)$.

\textbf{Negations}: Let $F$ be a family of functions. We reduce the task of
showing $\overline{F} \not\subseteq \Pol(\inn{t}{k},\T)$ to the already
completed task of showing $F\not\subseteq \Pol(\inn{t}{k},\T)$. Let $\bx \in
\{0,1\}^k \setminus \T$, let $f \in F$ be a function of arity $m$, and let $M$
be a $k \times m$ matrix of inputs to $f$ whose columns are $\inn{t}{k}$ tuples.
We established $F \not\subseteq \Pol(\inn{t}{k},\T)$ by finding $f$ and $M$ with
$f(M)=\bx$, and in the remaining cases we must find $\overline{f}\in
\overline{F}$ and $M$ such that $\overline{f}(M)=\bx$. But since
$\overline{f}(M)=\bx \Leftrightarrow f(M)=\overline{\bx}$, it suffices to find
$f \in F$ such that $f(M)=\overline{\bx}$, where
$\overline{\bx}=(1-x_1,\ldots,1-x_k)$ if $\bx=(x_1,\ldots,x_k)$.

The families $\overline{\AND}$, $\overline{\OR}$, $\overline{\XOR}$ (except
when $k$ is even and $t$ is odd), and
$\overline{\THRo{q}}$ for all $q \neq \frac{t}{k}$
are excluded from $\Pol(\inn{t}{k},\T)$ in the
same way as $\AND$, $\OR$, $\XOR$, and $\THRo{q}$ with the same matrices serving as
counterexamples. In detail, the matrix $C_k^t$, which contains $k$ cyclic shifts
of the column $1^t0^{k-t}$, prevents the polymorphism families
$\overline{\AND}$, $\overline{\OR}$, $\overline{\XOR}$ (if $k$ is odd), and
$\overline{\THRo{q}}$ (if $q\neq\frac{t}{k}$). The
case $\overline{\XOR}$ with $k$ even, $t$ even is ruled out the same way as before,
illustrated in Figure~(\ref{fig:xora}).

To see that $\overline{\AT}$ and $\overline{\THR{t}{k}}$ (with $q=\frac{t}{k}$) are
also excluded, let $\bx \not\in \T$ be a tuple of weight $d\neq t$. Then the tuple $\overline{\bx}$ of  weight $k-d$ can be
returned by an $\AT$ function~\cite[Claim~4.6]{BG21} and a $\THR{t}{k}$ function~\cite[Fact B.3]{fkos19}. If $k-d=t$, then the $\AT$ and $\THR{t}{k}$ functions of arity 1 output
$\overline{\bx}$ on input $\overline{\bx}$.

Finally, when $k$ is even, $t$ is odd, and $\T$ does not contain the tuple
$\bx$ of odd  weight $d$, the $\XOR$ argument above (illustrated in
Figure~\ref{fig:xorb}) applies since
$\overline{\bx}$ also has odd  weight $k-d$. Again, if $k-d=t$, then the $\XOR$
function of arity 1 outputs $\overline{\bx}$ on input $\overline{\bx}$.
\end{proof}

\begin{theorem*}[Theorem~\ref{thm:rem} restated]
  Let $k \geq 3$ and $\emptyset\neq S \subseteq (\inn{t}{k})^c \cap \NAE$.
  If $t$ is odd, $k$ is even, and $S$ contains tuples of only even 
  weight, then $\PCSP(\inn{t}{k},\NAE\setminus\Ss)$ is tractable. Otherwise,
  $\PCSP(\inn{t}{k},\NAE\setminus\Ss)$ is NP-hard.
\end{theorem*}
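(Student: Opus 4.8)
The plan is to deduce Theorem~\ref{thm:rem} from Theorem~\ref{thm:t-in-k} by a short reduction, with a symmetrisation step to cope with the fact that $S$ need not be closed under permutations. Write $\T:=\NAE\setminus\Ss$. Since $\Ss\subseteq(\inn{t}{k})^c$, no tuple of weight $t$ is removed, so $\inn{t}{k}\subseteq\T$ and $(\inn{t}{k},\T)$ is a legitimate template via the identity map.

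The tractability half is already available above: if $t$ is odd, $k$ is even, and every tuple of $\Ss$ has even weight, then deleting $\Ss$ from $\NAE$ leaves all odd-weight tuples, so $\inn{t}{k}\subseteq\odd{k}\subseteq\T$; this is exactly the situation treated in the proof of the tractability parts of Theorems~\ref{thm:add} and~\ref{thm:rem}, which gives $\PCSP(\inn{t}{k},\T)\red\CSP(\odd{k})$ and hence tractability.

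For the hardness half, assume we are in the ``otherwise'' case. First I would symmetrise: by Proposition~\ref{prop:firstsym}, $\PCSP(\inn{t}{k},\T)\equiv_p\PCSP(\inn{t}{k},\T')$, where $\T'$ is the largest symmetric relation contained in $\T$. A tuple of weight $d$ (with $0<d<k$) lies in $\T'$ exactly when $\Ss$ contains no tuple of weight $d$, so $\T'=\NAE\setminus\Ss'$, where $\Ss'$ is the union of the Hamming-weight classes met by $\Ss$. Then $\Ss'$ is symmetric, nonempty (as $\Ss\neq\emptyset$), still contained in $(\inn{t}{k})^c\cap\NAE$ (so $\inn{t}{k}\to\T'$), and $\Ss'$ consists of only even-weight tuples iff $\Ss$ does; thus ``outside the tractable case'' holds for $\Ss$ iff it holds for $\Ss'$. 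Now apply Theorem~\ref{thm:t-in-k} to $\T'$: since $\T'\subseteq\NAE$ we have $0^k,1^k\notin\T'$, so Proposition~\ref{prop:tink} shows $\CSP(\T')$ is tractable only if $t$ is odd, $k$ is even, and $\T'=\odd{k}$ — but $\T'=\odd{k}$ forces $\Ss'$ (hence $\Ss$) to be exactly the set of all even-weight $\NAE$-tuples, a sub-case of the tractable case of Theorem~\ref{thm:rem}, contradicting our assumption. Hence $\CSP(\T')$ is NP-hard; and $\T'\neq\NAE$ since $\Ss'\neq\emptyset$. Theorem~\ref{thm:t-in-k} then gives that $\PCSP(\inn{t}{k},\T')$ is NP-hard, and the equivalence above transfers this to $\PCSP(\inn{t}{k},\T)$.

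The only point requiring care — and the step I would double-check most — is the bookkeeping in the symmetrisation: that replacing $\Ss$ by the union $\Ss'$ of its weight classes does not introduce a tuple of weight $t$ (so $(\inn{t}{k},\T')$ stays a template and Theorem~\ref{thm:t-in-k} applies) and does not change whether all tuples of $\Ss$ have even weight (so the dichotomy boundary is preserved under $\Ss\mapsto\Ss'$). Everything else is a direct invocation of Theorem~\ref{thm:t-in-k}, Proposition~\ref{prop:tink}, and Proposition~\ref{prop:firstsym}.
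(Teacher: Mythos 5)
Your proposal is correct and takes essentially the same route as the paper: tractability via the $\odd{k}$ sandwich from Section~\ref{sec:results}, and hardness by checking through Proposition~\ref{prop:tink} that the right-hand structure has an NP-hard CSP and then invoking Theorem~\ref{thm:t-in-k}. Your preliminary symmetrisation via Proposition~\ref{prop:firstsym} is harmless but redundant, since Theorem~\ref{thm:t-in-k} does not require $\T$ to be symmetric (that reduction is performed inside its proof); the paper applies it directly to $\T=\NAE\setminus\Ss$, noting that $0^k,1^k\notin\T$ and that in the hard case $\T\neq\odd{k}$ because $\Ss$ contains an odd-weight tuple whenever $t$ is odd and $k$ is even.
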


\begin{proof}
  The tractability in the first statement of the theorem is proved in
  Section~\ref{sec:results}. Otherwise, $t$ is even, or $k$ is odd, or $S$
  contains a tuple of odd  weight. Take $\T=\NAE\setminus\Ss$. Observe that case~(\ref{const})
  of Proposition~\ref{prop:tink} does not apply as neither $0^k$
  nor $1^k$ is part of the template. Moreover, case~(\ref{allodd}) of
  Proposition~\ref{prop:tink} does not apply either: If $t$ is odd and $k$ is even
  then $S$ contains a tuple of odd  weight and hence $\NAE\setminus\Ss$
  cannot have all odd  weight tuples. Thus, by
  Proposition~\ref{prop:tink}, $\CSP(\T)$ is NP-hard. Then, by
  Theorem~\ref{thm:t-in-k},
  $\PCSP(\inn{t}{k},\T)=\PCSP(\inn{t}{k},\NAE\setminus\Ss)$ is NP-hard.
\end{proof} 

\section*{Acknowledgements}

We would like to thank the anonymous referees of both
the conference~\cite{bz21:icalp} and this full version of the paper. We also
thank Kristina Asimi and Libor Barto for useful discussions regarding the
content of this and their paper~\cite{Asimi21:mfcs}.

{\small
\bibliographystyle{plainurl}
\bibliography{bz22}
}

\end{document}